\documentclass[11pt]{article}
\usepackage[utf8]{inputenc}
\usepackage[T1]{fontenc}
\usepackage{amsmath,amssymb,amsthm}
\usepackage{stmaryrd}
\usepackage{booktabs}
\usepackage{geometry}
\newtheorem{theorem}{Theorem}[section]
\newtheorem{lemma}[theorem]{Lemma}
\newtheorem{corollary}[theorem]{Corollary}
\newtheorem{proposition}[theorem]{Proposition}
\newtheorem{definition}[theorem]{Definition}
\newtheorem{remark}[theorem]{Remark}
\newtheorem{example}[theorem]{Example}

\title{A Non-Formulable Theorem:\\
A Fundamental Limit of Finite Syntactic Systems\\
and Its Consequences for Security and AI}

\author{
  Fabio F.G.\ Buono\\
  \small Independent Researcher\\
  \small \texttt{ORCID: 0009-0004-9199-2793}
}

\date{Draft}

\begin{document}

\maketitle

\begin{abstract}
For every coherent and sufficiently expressive finite syntactic system
$\mathcal{S}$, we prove the existence of at least one theorem that
$\mathcal{S}$ cannot produce autonomously. The result is a metatheorem: 
it proves the existence of a theorem,
and applies to every finite syntactic system --- security mechanisms,
AI systems, formal verifiers, legal systems, economic models, and the
formal system in which it is itself proved.
\end{abstract}

\newpage

\tableofcontents

\footnote{Some of the main results exhibit a particularly dense logical structure. 
For this reason, the corresponding proofs are organized in numbered steps to 
ensure expository clarity.}

\newpage

\section{Introduction}
\label{sec:intro}

The Syntactic Invariance Principle (SIP), introduced
in~\cite{buono2026sip} and generalized in~\cite{buono2026obstruction},
establishes that a local syntactic system operating on symbols is
structurally blind to semantic properties that live above its
observational level. When a property $P$ of terms holds on the initial
clause set and is preserved by every rewriting rule, then every term in
every derivable clause satisfies $P$ --- the property is frozen, and
the target that violates $P$ is not merely unreached but permanently
unreachable.

The present paper turns this principle inward: what happens when the
semantic property the system cannot see is a property \emph{of the system
itself}? Specifically: can a finite syntactic system autonomously
formulate and derive a proposition asserting the existence of its own
syntactic limits?

The answer is no, and the reason is structural, not contingent.

The argument proceeds in stages. First, the SIP guarantees the existence
of syntactic invariants --- properties that produce frozen subterms the
system contains but cannot rewrite. Second, the proposition asserting the
existence of these frozen subterms is semantically true (the SIP
guarantees it) but not autonomously derivable (formulating it requires
meta-speaking about the system, a level the system's rewriting rules do
not reach). Third, G\"odel numbering and the Kleene fixed-point theorem
produce a self-referential proposition $G_{\mathcal{S}}$ that encodes
this limit, and the standard diagonalization argument shows
$G_{\mathcal{S}}$ is undecidable. Fourth, the process is inextensible:
extending the system to derive $G_{\mathcal{S}}$ creates a new system
with its own undecidable proposition. The chain never closes.

The result applies to every finite syntactic system, including AI systems
(neural networks, language models, deterministic algorithms), which are
implementable as finite syntactic systems. The consequence is precise: a
finite system cannot \emph{originate} (ideare) at least one theorem:
the one asserting the existence of its own intrinsic limits. It can
\emph{understand} such a theorem if communicated from outside, but it
cannot generate it autonomously.

\section{Preliminaries and notation}
\label{sec:notation}

Throughout this paper, the following notation is used:
\begin{itemize}
  \item $\mathcal{S} = (V, F, R, I)$: finite syntactic system;
  \item $\vdash_{\mathcal{S}}$: derivability relation in $\mathcal{S}$;
  \item $\Rightarrow$: single rewriting step;
  \item $\Rightarrow^*$: finite sequence of rewriting steps;
  \item $\ulcorner\psi\urcorner$: G\"odel number of the proposition
        $\psi$;
  \item $\llbracket t \rrbracket$: semantic interpretation of the term
        $t$.
\end{itemize}

A finite syntactic system \emph{autonomously generates} a proposition
$\phi$ if $\phi$ is syntactically derivable from the system, that is, if
a formal proof constructible from the rules of $R$ exists. The set of
derivable terms is:
\[
  \mathrm{Der}(I, R)
    = \{t : \exists\;\text{finite sequence of rewritings from }
      I \text{ to } t\}.
\]

\section{Foundations}
\label{sec:foundations}

\subsection{Finite syntactic systems}

\begin{definition}[Finite syntactic system]
\label{def:system}
A \emph{finite syntactic system} is a tuple
$\mathcal{S} = (V, F, R, I)$ where:
\begin{itemize}
  \item $V$ is a finite set of variables $\{v_1, v_2, \ldots, v_n\}$;
  \item $F$ is a finite set of function symbols
        $\{f_1, f_2, \ldots, f_m\}$, each with finite arity;
  \item $R$ is a finite set of rewriting rules of the form $l \to r$,
        where $l, r$ are terms over $V \cup F$;
  \item $I$ is a finite set of initial clauses (ground terms or atomic
        formulas).
\end{itemize}
The \emph{depth} of $\mathcal{S}$ is
$\max(|V|, |F|, |R|, |I|)$.
Every element of $\mathcal{S}$ is finitely specifiable and computable.
\end{definition}

\begin{definition}[Derivation and computability]
\label{def:derivation}
Let $C$ be a clause (term or formula). A \emph{derivation} is a sequence
\[
  C = C_0 \Rightarrow C_1 \Rightarrow \cdots \Rightarrow C_n = C',
\]
where each step $C_i \Rightarrow C_{i+1}$ applies a substitution of a
rule $l \to r \in R$ to a sub-occurrence of $C_i$.

The \emph{derivability relation} is:
\[
  \mathcal{S} \vdash \psi
  \iff
  \exists\;\text{derivation from } I \text{ that produces } \psi.
\]

A system $\mathcal{S}$ is \emph{coherent} if there is no clause $\psi$
such that $\mathcal{S} \vdash \psi$ and
$\mathcal{S} \vdash \neg \psi$ simultaneously (for an appropriate notion
of negation in $\mathcal{S}$).
\end{definition}

\begin{definition}[Subterms and occurrences]
\label{def:subterms}
Let $t = f(t_1, \ldots, t_k)$ be a term. The set of \emph{subterms} of
$t$ is
\[
  \mathrm{Subterms}(t) := \{t\} \cup \bigcup_{i=1}^{k}
    \mathrm{Subterms}(t_i).
\]
An \emph{occurrence} of a subterm $s$ in $t$ is a position $p$ (a finite
sequence of indices) such that the subterm at that position is $s$.
\end{definition}

\begin{definition}[Skolem constants and frozen terms]
\label{def:skolem-frozen}
A \emph{Skolem constant} is a symbol $a \in F$ of arity $0$ that does
not appear in any rule of $R$ (it is a ``foreign symbol'' to the
system).

A term $t$ is \emph{frozen} with respect to $R$ if every subterm of $t$
containing a Skolem constant does not unify with the left-hand side of
any rule in $R$.

Frozen terms cannot be rewritten because they contain syntactic markers
the system does not recognize.
\end{definition}

\section{Syntactic invariants and the Syntactic Invariance Principle}
\label{sec:sip}

\begin{definition}[Syntactic property]
\label{def:syntactic-property}
A \emph{syntactic property} is a predicate
$P : \mathrm{Terms} \to \{\mathrm{true}, \mathrm{false}\}$ that depends
exclusively on the syntactic structure of the terms (not on their
semantic interpretation).

Examples: $P(t) = $ ``$t$ contains a Skolem constant'';
$P(t) = $ ``$t$ has depth $\leq 3$'';
$P(t) = $ ``$t$ is frozen.''
\end{definition}

\begin{lemma}[Preservation lemma]
\label{lem:preservation}
Let $P$ be a syntactic property and
$\mathcal{S} = (V, F, R, I)$ a finite syntactic system. If:
\begin{itemize}
  \item (Base) For every $C \in I$ and every
        $t \in \mathrm{Subterms}(C)$: $P(t)$ holds;
  \item (Step) For every $l \to r \in R$ and every substitution
        $\sigma$: $P(l\sigma) \Rightarrow P(r\sigma)$;
\end{itemize}
then for every derivation $C \Rightarrow^* C'$ and every
$t' \in \mathrm{Subterms}(C')$: $P$ is preserved.
\end{lemma}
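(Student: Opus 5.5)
The plan is induction on the length $n$ of the derivation $C_0 \Rightarrow C_1 \Rightarrow \cdots \Rightarrow C_n$, starting from a clause $C_0 \in I$. The invariant is
\[
  \mathcal{H}(C) \;:\equiv\; \forall t \in \mathrm{Subterms}(C):\ P(t).
\]
The base case $n=0$ is exactly the (Base) hypothesis of Lemma~\ref{lem:preservation}. For the inductive step I assume $\mathcal{H}(C_i)$ and consider one rewrite $C_i \Rightarrow C_{i+1}$. By Definition~\ref{def:derivation}, this rewrite has the form $C_i = D[l\sigma]_p$ and $C_{i+1} = D[r\sigma]_p$ for some rule $l \to r \in R$, substitution $\sigma$, context $D$ and position $p$.

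I would then split $\mathrm{Subterms}(C_{i+1})$ by position into three classes.
\begin{itemize}
  \item \textbf{Disjoint positions.} Positions $q$ incomparable with $p$. These subterms are literally subterms of $C_i$, so $P$ holds for them by the inductive hypothesis.
  \item \textbf{Positions at or below $p$.} These are the subterms of $r\sigma$. The term $r\sigma$ itself satisfies $P$ by (Step), since $l\sigma \in \mathrm{Subterms}(C_i)$ gives $P(l\sigma)$. A proper subterm of $r\sigma$ either lies inside some $x\sigma$ for a variable $x$ of $r$, or is an instance $s\sigma$ of a non-variable proper subterm $s$ of $r$. In the first case, if $x$ also occurs in $l$ then $x\sigma$ is a subterm of $l\sigma \subseteq C_i$ and the inductive hypothesis applies.
  \item \textbf{Strict ancestors of $p$.} These are terms $D'[r\sigma]$ sitting above the rewritten position.
\end{itemize}

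The main obstacle is that (Step), read literally, only certifies the root term $r\sigma$. It says nothing about the proper subterms $s\sigma$ of $r$, nor about the ancestor terms $D'[r\sigma]$. For a general predicate $P$ the statement is in fact false: take $P(t) = $ ``$t$ is not of the form $g(c)$'' and the rule $a \to f(g(c))$.

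So the proof needs an explicit reading of the hypotheses. I would adopt the hereditary form of (Step): if $\mathcal{H}(l\sigma)$ then $\mathcal{H}(r\sigma)$. This covers the whole second class. For the third class I would add a compositionality condition: if $P(s_1), \ldots, P(s_k)$ hold, then $P(f(s_1,\ldots,s_k))$ holds whenever $P(f(s'_1,\ldots,s'_k))$ held for the original arguments. Equivalently, $P$ is preserved under replacing an argument by another argument satisfying $P$.

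This condition holds for the properties the paper actually uses. ``Contains a Skolem constant'' and ``frozen'' both propagate upward along occurrences, because a frozen subterm containing $a$ is untouched by rewrites elsewhere. With the condition in place, an inner induction on the depth of $p$ climbs from $r\sigma$ to the root of $C_{i+1}$ and establishes $P$ at every ancestor.

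Combining the three classes gives $\mathcal{H}(C_{i+1})$, and the induction closes. Any derivation $C \Rightarrow^* C'$ with $C$ satisfying the invariant then falls under the same argument. I would state the strengthened hypotheses explicitly at the start of the proof, and note that without them the lemma guarantees only $P$ on root-level rewritten terms.
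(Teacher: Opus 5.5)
Your analysis is correct, and it is more careful than the paper's. The paper uses the same induction on derivation length. Its case split (``either the term is untouched \dots\ or it is rewritten according to a rule that preserves $P$'') handles only unchanged subterms and the contractum $r\sigma$ itself. It silently skips exactly the two classes you isolate: the new proper subterms of $r\sigma$, and the strict ancestors of the redex, which are touched but not themselves rewritten by a rule.

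Your counterexample is valid. With $I=\{a\}$, Base holds, and Step holds because $P(f(g(c)))$ is true. Yet $g(c)\in\mathrm{Subterms}(f(g(c)))$ fails $P$. So the lemma as stated is false, and the paper's one-line proof does not establish it. Your version adds the hereditary Step and the argument-replacement condition, and proves it by the three-class induction with an inner climb from $p$ to the root. That is a correct repair.

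One claim in your proposal is wrong and should be dropped: ``frozen'' does not satisfy your replacement condition, and it is not preserved even under the paper's own rules. Take $R=\{0+x\to x,\ s(x)+y\to s(x+y)\}$, a Skolem constant $a$, and $I=\{(0+s(0))+a\}$.

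\begin{itemize}
  \item Every subterm of the initial clause is frozen.
  \item Step holds for this $P$, even in your hereditary form. A redex containing $a$ unifies with its own left-hand side and so is not frozen. A Skolem-free redex has a Skolem-free contractum.
  \item Yet $(0+s(0))+a \Rightarrow s(0)+a$, and the ancestor $s(0)+a$ contains $a$ and unifies with $s(x)+y$. So it is not frozen.
\end{itemize}

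More directly, $a+b$ and $s(a)$ are frozen, but $s(a)+b$ is not. This is a counterexample to the paper's lemma in the ancestor class alone, using one of the paper's own example properties. It also shows your second condition is doing real work. Among the paper's examples, that condition is satisfied by ``contains a Skolem constant'', but not by ``frozen''.
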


\begin{proof}
By induction on the length of the derivation. The property $P$ is
preserved at each step because either the term is untouched (and $P$
remains) or it is rewritten according to a rule that preserves $P$.
\end{proof}

\begin{lemma}[Syntactic Invariance Principle~{\cite[Lemma~5]{buono2026sip}}]
\label{lem:sip}
If a property $P$ is a syntactic invariant for $\mathcal{S}$ (satisfies
Base and Step above), then:
\[
  \forall\, C \in \mathrm{Der}(I,R),\;\;
  \forall\, t \in \mathrm{Subterms}(C):\;
  P(t) \text{ holds and remains true throughout every derivation.}
\]
Syntactic invariants are ``frozen'' properties: once true, they remain
true for the entire computation, regardless of the rules applied.
\end{lemma}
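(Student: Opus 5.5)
The plan is to derive the statement directly from Lemma~\ref{lem:preservation}, by induction on the length $n$ of a derivation $C_0 \Rightarrow C_1 \Rightarrow \cdots \Rightarrow C_n = C$ with $C_0 \in I$. The claim proved by induction is the stronger invariant
\[
  Q(n):\quad \forall\, t \in \mathrm{Subterms}(C_n):\; P(t).
\]
Then every $C \in \mathrm{Der}(I,R)$ is covered, since by definition it is the endpoint of some such finite derivation. The clause ``remains true throughout every derivation'' comes for free, because $Q$ holds at every intermediate $C_i$ and not only at the endpoint. The base case $Q(0)$ is exactly the (Base) hypothesis.

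For the inductive step, suppose $C_{n+1}$ is obtained from $C_n$ by replacing the subterm at position $p$, which equals $l\sigma$, by $r\sigma$, for some $l \to r \in R$. I would partition $\mathrm{Subterms}(C_{n+1})$ into three classes.
\begin{itemize}
  \item[(a)] Subterms at positions disjoint from $p$. These are literally subterms of $C_n$, so they satisfy $P$ by $Q(n)$.
  \item[(b)] Subterms at positions at or below $p$, i.e.\ subterms of $r\sigma$. The term $r\sigma$ itself satisfies $P$ by (Step), applied to $P(l\sigma)$, which holds by $Q(n)$. A proper subterm lying inside some $x\sigma$ with $x$ a variable of $r$ is also a subterm of $l\sigma$, provided $\mathrm{Var}(r) \subseteq \mathrm{Var}(l)$, so it is covered by $Q(n)$.
  \item[(c)] Subterms at strict prefixes $q$ of $p$, i.e.\ enclosing contexts $C_n|_q[r\sigma]$.
\end{itemize}

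The main obstacle is that (Step), read literally, only transfers $P$ from $l\sigma$ to the single term $r\sigma$. It says nothing about the non-variable proper subterms of $r\sigma$ from class (b), nor about the enclosing terms in class (c), which changed even though they were ``untouched'' at the top. The proof of Lemma~\ref{lem:preservation} glosses over this point.

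My first move would be to make explicit the reading of (Step) under which Lemma~\ref{lem:preservation} is meant. I would take (Step) to mean that $P$ on all subterms of $D[l\sigma]$ implies $P$ on all subterms of $D[r\sigma]$, for every context $D$. This is the natural ``rules preserve $P$ wherever applied'' condition, and it makes classes (b) and (c) immediate.

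Under that reading the step is one line, and $Q(n+1)$ follows. For the paper's example properties, such as ``contains a Skolem constant'' or ``is frozen'', I would check this contextual form directly. Skolem constants do not occur in any rule, so a rewrite can neither introduce nor remove them except through variable bindings, and those bindings are copied from $l\sigma$; this settles (b) and (c) for these properties.

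With $Q(n)$ established for all $n$, I conclude by quoting Lemma~\ref{lem:preservation} and the definition of $\mathrm{Der}(I,R)$.
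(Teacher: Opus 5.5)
Your induction on derivation length is the argument the paper relies on. Lemma~\ref{lem:sip} is only cited, and the sole proof in the text is the one-line induction for Lemma~\ref{lem:preservation}. You have correctly located the step that line skips: enclosing terms and the non-variable proper subterms of $r\sigma$ are neither ``untouched'' nor ``rewritten according to a rule'' at their own root. The consequence is stronger than a gloss, though. With (Step) read literally, the statement is false, so your contextual (Step) replaces the hypothesis rather than reading it, and you should present it that way.

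The paper's headline property already refutes the literal version. Take $P(t)=$ ``$t$ contains a Skolem constant'', the single rule $f(x)\to g(x,0)$, a Skolem constant $a$, and $I=\{f(a)\}$. (Base) holds, since $f(a)$ and $a$ both contain $a$. Literal (Step) holds: if $f(x\sigma)$ contains a Skolem constant then so does $x\sigma$, hence so does $g(x\sigma,0)$. Yet $f(a)\Rightarrow g(a,0)$ produces the subterm $0$, and $P(0)$ is false. Under your strengthened hypothesis the lemma reduces to a tautology: the set of clauses all of whose subterms satisfy $P$ contains $I$ and is closed under one-step rewriting, hence contains $\mathrm{Der}(I,R)$. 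Your proof of that version is correct.

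The step that fails is your last paragraph. The contextual condition is not automatic for the paper's example properties. Your observation that Skolem constants are only copied through bindings does not address what goes wrong.
\begin{itemize}
  \item For ``contains a Skolem constant'', the example above already breaks it in class (b). The non-variable part of $r$ creates new subterms, and these are Skolem-free precisely because Skolem constants occur in no rule.
  \item For ``is frozen'', it breaks in class (c). Take rules $c\to d$ and $g(d,x)\to d$, Skolem constant $a$, and $I=\{g(c,a)\}$. Every subterm of $g(c,a)$ is frozen, because its only Skolem-containing subterms, $g(c,a)$ and $a$, clash with both left-hand sides. Literal (Step) holds trivially, since both right-hand sides are the constant $d$, which is vacuously frozen. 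But $g(c,a)\Rightarrow g(d,a)$, and $g(d,a)$ unifies with $g(d,x)$, although no Skolem constant was created, erased or moved.
\end{itemize}
So the contextual hypothesis must be verified against the particular $R$. For $R=\{0+x\to x,\ s(x)+y\to s(x+y)\}$ and ``contains a Skolem constant'' it does hold. The first rule can never fire in such a clause, because its redex has the Skolem-free subterm $0$. The second only rebuilds $s$ and $+$ around Skolem-containing arguments. That is a check about this $R$, not a consequence of Skolem constants being absent from the rules.
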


\begin{corollary}
If a term $t$ is frozen (does not unify with any left-hand side of any
rule) and contains a Skolem constant, then the property ``contains a
Skolem constant'' is a syntactic invariant for that term.
\end{corollary}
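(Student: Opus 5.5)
The plan is to verify directly that the property $P(t) =$ ``$t$ contains a Skolem constant'' meets the two hypotheses of Lemma~\ref{lem:preservation}, restricted to the term $t$ and to everything that can arise from it. The Syntactic Invariance Principle (Lemma~\ref{lem:sip}) then gives the conclusion at once. I will also use a stronger observation, which is the real content: the term $t$ itself never changes. If nothing in $t$ is rewritten, then every syntactic property of $t$, and in particular $P$, is trivially preserved.

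\textbf{Base.} Take $t$ as the starting clause, playing the role of $I$ in Lemma~\ref{lem:preservation}. By hypothesis $t$ contains a Skolem constant $a$, so $P(t)$ holds. Definition~\ref{def:skolem-frozen} says frozenness concerns exactly the subterms of $t$ that contain a Skolem constant, so those are the subterms I will track.

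\textbf{Step.} I need to show that no rewriting step can remove $a$ from $t$. There are two cases, according to where a rule $l \to r \in R$ could apply.

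\emph{Case 1: the redex $l\sigma$ is a subterm of $t$ containing $a$.} The hypothesis that $t$ is frozen says no subterm of $t$ unifies with any left-hand side. Hence no such $\sigma$ exists, and this case is vacuous.

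\emph{Case 2: the redex lies at a position of $t$ that does not contain $a$.} Then the occurrence of $a$ sits in a part of $t$ that the step leaves untouched, so $a$ is still present after the step.

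In both cases $P(l\sigma) \Rightarrow P(r\sigma)$ holds whenever the step actually applies, and the occurrence of $a$ is never consumed. A simpler way to put it: since $t$ does not unify with any left-hand side at any position, no step applies to $t$ at all. The derivation starting from $t$ therefore has length $0$, and the induction in Lemma~\ref{lem:preservation} is immediate.

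\textbf{Anticipated obstacle.} The delicate point is that Definition~\ref{def:skolem-frozen} speaks of unification, not matching. There is also the question of whether the corollary should apply to $t$ alone or to $t$ embedded in a larger clause $C$. If $t$ is embedded in $C$, a redex could sit at a position strictly above $t$ and rearrange or erase $t$, for instance through a non-left-linear or collapsing rule.

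My first approach is to read the corollary literally, for the term $t$ in isolation, where Cases 1 and 2 suffice. If the embedded reading is intended, I would add an argument about the Skolem constant itself. By Definition~\ref{def:skolem-frozen}, $a$ does not appear in any rule of $R$. So any redex $l\sigma$ whose match covers $a$ must have $a$ inside $\sigma(x)$ for some variable $x$ of $l$. Then $a$ survives in $r\sigma$ whenever $x$ occurs in $r$, that is, whenever the rule is non-erasing. I would state that non-erasing condition explicitly as the residual assumption needed to make the property a genuine invariant.
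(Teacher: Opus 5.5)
\textbf{There is a genuine gap: the argument handles one step, not a whole derivation.} Your Cases~1 and~2 correctly show that a single step applied to $t$ cannot delete $a$. Your ``simpler'' claim that no step applies to $t$ at all is false, however. Definition~\ref{def:skolem-frozen} only forbids unification at subterms that \emph{contain} a Skolem constant, so Skolem-free subterms of $t$ may be redexes, and your Case~2 is not vacuous. After such a step, the ancestors of the redex (which do contain $a$) have changed and may now unify with a left-hand side. Frozenness is therefore not inherited, the case split cannot be iterated, and if the newly applicable rule is erasing, $a$ is lost.

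Concretely, let $R=\{0+x\to x,\; g(y,0)\to 0\}$ and $t=g(a,0+0)$. The only subterms containing $a$ are $a$ and $t$. Neither unifies with $0+x$ or with $g(y,0)$, since the argument $0+0$ clashes with $0$. So $t$ is frozen, yet
\[
  g(a,0+0)\;\Rightarrow\;g(a,0)\;\Rightarrow\;0 .
\]
This is $t$ in isolation, so the obstacle you reserved for the embedded reading already defeats the literal one.

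Checking $P(l\sigma)\Rightarrow P(r\sigma)$ only ``whenever the step actually applies'' also does not discharge the Step hypothesis of Lemma~\ref{lem:preservation}. That hypothesis quantifies over all $\sigma$, and here it fails at $\sigma(y)=a$. Its Base clause, as stated, would even demand $P$ of the Skolem-free subterm $0$. The paper gives no separate proof and treats the corollary as immediate from Lemma~\ref{lem:sip}. That lemma needs exactly this Step condition, so the statement as written is false, and no argument along your lines can close the gap without an extra hypothesis.

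\textbf{The fix.} The missing hypothesis is the one you raise only as a fallback, and it is needed under every reading: $R$ must be non-erasing, i.e.\ every variable of $l$ occurs in $r$. Then Step holds for all rules and all substitutions. A Skolem constant occurs in no rule, so an occurrence of $a$ inside a redex $l\sigma$ lies inside some $\sigma(x)$ with $x$ a variable of $l$, hence of $r$, and so it survives in $r\sigma$. Occurrences outside the redex are untouched. Induction on derivation length then gives persistence along every derivation, whether $t$ stands alone or sits inside a larger clause, and frozenness plays no role at all. The paper's running system $\{0+x\to x,\ s(x)+y\to s(x+y)\}$ is non-erasing, which is why its example behaves.

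Alternatively, strengthen ``frozen'' to ``no subterm of $t$ whatsoever unifies with a left-hand side''. Then your length-$0$ argument is correct for $t$ in isolation, though still not for $t$ embedded in a clause. The unification-versus-matching point you flag is harmless, since a term that unifies with no left-hand side certainly matches none.
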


\begin{definition}[Syntactic blind spot]
\label{def:blind-spot}
A term $t$ is a \emph{syntactic blind spot} of $\mathcal{S}$ if:
\begin{enumerate}
  \item $t$ is derivable from $I$;
  \item there exists a syntactic property $P$ such that $P(t)$ holds and
        $P$ is a syntactic invariant;
  \item the fact that $P(t)$ holds cannot be communicated by the system
        $\mathcal{S}$ itself: the proposition ``$P(t)$'' is not derivable
        in $\mathcal{S}$ for structural reasons.
\end{enumerate}
The system ``sees'' the term but cannot ``speak'' about its invariant
property.
\end{definition}

\section{G\"odel numbering for syntactic systems}
\label{sec:godel}

\begin{definition}[Standard G\"odel encoding for $\mathcal{S}$]
\label{def:godel-encoding}
Define an injection
$g : \mathrm{Elements\ of\ } \mathcal{S} \to \mathbb{N}$ as follows:
\begin{align*}
  g(v_i) &:= 2^i && \text{(variables),} \\
  g(f_j) &:= 3 \cdot 5^j && \text{(function symbols),} \\
  g(r_k) &:= 7 \cdot 11^k && \text{(rules),} \\
  g(f(t_1, \ldots, t_n)) &:= g(f) \cdot \prod_{i=1}^{n} p_i^{g(t_i)}
    && \text{(compound terms),}
\end{align*}
where $p_i$ is the $i$-th prime. For a logical formula $\psi$ over
$\mathcal{S}$:
$g(\psi) := \ulcorner\psi\urcorner :=$ composite encoding of the
components of $\psi$.

Properties: $g$ is injective (distinct elements have distinct codes),
computable (given $\ulcorner\psi\urcorner$, $\psi$ can be decoded in
polynomial time), and universal (every structural element of
$\mathcal{S}$ is representable).
\end{definition}

\begin{definition}[Encoded derivability predicate]
\label{def:deriv-predicate}
Define the predicate:
\[
  \mathrm{Deriv}_{\mathcal{S}}(n) :=
  \begin{cases}
    \mathrm{true}
      & \text{if the proposition encoded by } n
        \text{ is derivable from } \mathcal{S}, \\
    \mathrm{false}
      & \text{otherwise.}
  \end{cases}
\]
$\mathrm{Deriv}_{\mathcal{S}}$ is computable (Turing-decidable) because
$\mathcal{S}$ is finite, the computation is deterministic, and all
possible derivations in $\mathcal{S}$ can be enumerated.
\end{definition}

\begin{theorem}[Self-referential construction via diagonalization]
\label{thm:kleene}
Let $\mathcal{S}$ be a finite syntactic system whose G\"odel numbering
(Definition~\ref{def:godel-encoding}) is available. For every computable
function $\varphi : \mathbb{N} \to \mathrm{Formulas}(\mathcal{S})$, there
exists a sentence $\psi$ such that
\[
  \psi \;\leftrightarrow\; \varphi(\ulcorner\psi\urcorner).
\]
That is, $\psi$ ``says about itself'' what $\varphi$ says about the code
of $\psi$. This is the diagonal lemma (also called the self-referential
lemma or G\"odel fixed-point lemma), which holds in every system
containing enough arithmetic to represent the G\"odel numbering.
\end{theorem}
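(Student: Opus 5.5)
The plan is to reproduce the standard Gödel–Carnap diagonal construction, adapted to the encoding $g$ of Definition~\ref{def:godel-encoding}. The statement is really about formulas with one free variable. I read $\varphi$ as a formula $\varphi(x)$, or as a computable map $n \mapsto \varphi(n)$ that is representable uniformly in $n$. The biconditional $\psi \leftrightarrow \varphi(\ulcorner\psi\urcorner)$ is understood as provable in $\mathcal{S}$, under the standing hypothesis stated in the theorem: $\mathcal{S}$ contains enough arithmetic to represent every computable function. Concretely, I assume numerals $\underline{n}$ for each $n$, substitution of terms for free variables, and the property that each total computable $h$ is represented by a formula $H(x,y)$ with $\mathcal{S} \vdash \forall y\,(H(\underline{n},y) \leftrightarrow y = \underline{h(n)})$.

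\textbf{Step 1 (diagonal function).} Define $d : \mathbb{N} \to \mathbb{N}$ as follows. If $n = \ulcorner \theta(x) \urcorner$ for a formula $\theta$ with one free variable $x$, set $d(n) := \ulcorner \theta(\underline{n}) \urcorner$; otherwise set $d(n) := 0$. Since $g$ is injective and decodable in polynomial time, $d$ is computable. We decode $n$, check that it codes a formula, substitute the numeral, and re-encode. Here I would invoke the computability claims of Definition~\ref{def:godel-encoding}. I would also note that numerals and substitution have codes obtained by the prime-power recursion, so they are computable as well.

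\textbf{Step 2 (representation).} By the representability hypothesis, obtain a formula $D(x,y)$ representing $d$. Form
\[
\theta(x) := \exists y\,\bigl(D(x,y) \wedge \varphi(y)\bigr),
\]
let $m := \ulcorner\theta(x)\urcorner$, and put $\psi := \theta(\underline{m})$. Then $\ulcorner\psi\urcorner = d(m)$ by the definition of $d$.

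\textbf{Step 3 (verification).} By representability,
\[
\mathcal{S} \vdash \forall y\,\bigl(D(\underline{m},y) \leftrightarrow y = \underline{d(m)}\bigr).
\]
Hence $\psi$ is provably equivalent to $\exists y\,(y = \underline{\ulcorner\psi\urcorner} \wedge \varphi(y))$, which is equivalent to $\varphi(\underline{\ulcorner\psi\urcorner})$ by pure logic.

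\textbf{Main obstacle.} The main difficulty is not the diagonal trick itself but the interface with Definition~\ref{def:system}. A bare rewriting system $(V,F,R,I)$ has no quantifiers, connectives, or numerals a priori. So I must make explicit that the theorem applies only to systems that interpret enough arithmetic (e.g.\ Robinson's $Q$) together with first-order logic, so that ``$\leftrightarrow$'' and representability make sense. I would state this as the precise hypothesis and then run Steps 1--3 inside that interpretation.

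If one prefers to avoid representability, I would fall back to Kleene's second recursion theorem. Apply it to the computable transformation $e \mapsto$ (the code of $\varphi$ applied to the code of the output of program $e$). This yields a fixed point $e$ whose output is a sentence $\psi$ that is syntactically equal to $\varphi(\underline{\ulcorner\psi\urcorner})$. This version gives a literal identity rather than a provable equivalence, at the cost of only a semantic reading of ``$\leftrightarrow$''.
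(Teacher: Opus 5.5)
Your Steps 1--3 are the standard Carnap--G\"odel diagonal argument. That is exactly what the paper relies on: it states Theorem~\ref{thm:kleene} without proof, citing the classical lemma. Under the hypothesis you make explicit, the steps are correct. That hypothesis is a first-order system interpreting enough arithmetic (e.g.\ Robinson's $Q$), so that the primitive recursive $d$ is represented in the strong form you state. Your remark that an external numbering as in Definition~\ref{def:godel-encoding} is not enough is also right.

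Two points need sharpening.

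\emph{The restriction on $\varphi$ is necessary, not a reading.} Restricting $\varphi$ to a formula $\varphi(x)$ (or to maps with $\mathcal{S}\vdash\varphi(n)\leftrightarrow\Phi(\underline{n})$ for a single formula $\Phi(x)$) is forced. For arbitrary computable $\varphi:\mathbb{N}\to\mathrm{Formulas}(\mathcal{S})$ the statement is false. Take $\varphi(n):=\neg\chi_n$, where $\chi_n$ is the sentence coded by $n$ (a fixed default sentence otherwise). This $\varphi$ is computable, yet it would force $\psi\leftrightarrow\neg\psi$. By Tarski's theorem, computability of $n\mapsto\varphi(n)$ alone does not supply such a $\Phi$.

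\emph{Do not lean on the injectivity asserted in Definition~\ref{def:godel-encoding}.} As written it fails. For ternary $f_1,f_3$, both $f_1(u,w,v_2)$ and $f_3(u,w,v_1)$ receive the code $2^{g(u)}3^{g(w)+1}5^{5}$. Fix any standard injective coding instead; the diagonal argument does not depend on which one.

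The fallback via the second recursion theorem is wrong, for two reasons.

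\emph{The literal identity cannot hold.} $\psi=\varphi(\underline{\ulcorner\psi\urcorner})$ is impossible whenever $x$ occurs free in $\varphi$. Under a prime-power coding like that of Definition~\ref{def:godel-encoding}, the code of an expression strictly exceeds the codes of its components. With unary numerals, $g(\underline{n})>n$. Hence $\ulcorner\varphi(\underline{n})\urcorner>n$ for every $n$.

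\emph{The fixed point need not halt.} Every index of the empty function is a fixed point of $e\mapsto$ ``run $e$, then wrap its output''. The standard recursion-theorem construction returns exactly such a looping program.

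What can be salvaged still needs representability:
\begin{itemize}
\item the strong diagonal lemma, $\psi=\varphi(t)$ for a closed term $t$ with $\mathcal{S}\vdash t=\underline{\ulcorner\psi\urcorner}$, which needs a function symbol for the diagonal map in the language; or
\item a quine-style sentence $\exists y\,(\mathrm{Out}(\underline{e},y)\wedge\varphi(y))$, obtained by applying the recursion theorem to constant programs, whose verification again requires representing the computation of $e$ inside $\mathcal{S}$.
\end{itemize}
Neither avoids representability, so drop that paragraph; Steps~1--3 already suffice.
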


\begin{remark}[Why self-reference is needed]
\label{rem:why-self-ref}
The limit proposition $\phi_{\mathcal{S}}$
(Definition~\ref{def:limit-prop}, Section~\ref{sec:limit-prop} below)
asserts that frozen terms exist. By itself, this is a meta-level
observation about $\mathcal{S}$, and one might ask whether it suffices to
establish the incompleteness. It does not, because the undecidability
argument (Theorem~\ref{thm:undecidability}) requires a proposition that
\emph{refers to its own derivability status}: Case~1 of the proof shows
that deriving $G_{\mathcal{S}}$ would mean the system has transcended its
own level, and this argument works precisely because $G_{\mathcal{S}}$
says ``my own content is not derivable.'' Without self-reference,
$\phi_{\mathcal{S}}$ alone would be a true meta-level statement, but one
could not conclude from the system's inability to derive it that the
system is \emph{structurally} incomplete (as opposed to merely lacking a
specific axiom). The self-referential wrapping converts a meta-level
observation into a proposition \emph{within the system's language} whose
derivability status creates a contradiction in both directions.
\end{remark}

\section{The limit proposition}
\label{sec:limit-prop}

\begin{definition}[The limit proposition $\phi_{\mathcal{S}}$]
\label{def:limit-prop}
Let $P_{\mathcal{S}}$ be a specific syntactic invariant of $\mathcal{S}$,
constructed as in~\cite{buono2026sip}: subterms with Skolem constants
that do not unify with any left-hand side. Define:
\[
  \phi_{\mathcal{S}} := \text{``There exists a subterm } t \text{
  derivable from } I \text{ such that:}
\]
\begin{enumerate}
  \item $P_{\mathcal{S}}(t)$ holds (a syntactic invariant property);
  \item for every $l \to r \in R$, $t$ does not unify with $l$ (it is
        frozen);
  \item therefore $t$ is not rewritable by any rule in $R$.''
\end{enumerate}

$\phi_{\mathcal{S}}$ is \emph{semantically true} by
Lemma~\ref{lem:sip}: the syntactic invariant guarantees the existence of
such terms. But $\phi_{\mathcal{S}}$ is \emph{not autonomously
derivable} by $\mathcal{S}$, for the following reason. A derivation in
$\mathcal{S}$ is a finite sequence of rule applications
$l \to r \in R$. Each rule operates on terms, rewriting subterms
according to syntactic pattern matching. The proposition
$\phi_{\mathcal{S}}$, however, asserts a fact about the \emph{rules
themselves}: that certain terms do not unify with any left-hand side.
This is a statement about the structure of $R$, not a statement within
the language that $R$ operates on. Producing it would require the system
to inspect its own rule set from outside --- to observe the gap between
what the system contains (the frozen term $t$) and what the system can
do (rewrite $t$). No rule in $R$ performs this inspection, because the
rules act on terms, not on themselves.
\end{definition}

\begin{example}[Concrete instantiation from~\cite{buono2026sip}]
\label{ex:concrete}
Let $\mathcal{S}$ be defined by:
\[
  R = \{0 + x \to x, \quad s(x) + y \to s(x + y)\},
  \qquad I = \{0, s(0), s(s(0)), \ldots\}.
\]
Introduce Skolem constants $a, b$ (not in $F$). Then:
\begin{itemize}
  \item The term $a + b$ does not unify with $0 + x$ (because
        $a \neq 0$) nor with $s(x) + y$ (because $a$ has no form
        $s(\cdot)$).
  \item Therefore $a + b$ is frozen: the property ``$P(t) =$ $t$
        contains a Skolem constant'' is a syntactic invariant.
  \item The proposition $\phi_{\mathcal{S}} = $ ``there exists a frozen
        term with a Skolem constant'' is true.
  \item But $\mathcal{S}$ cannot formally derive $\phi_{\mathcal{S}}$
        because the Skolem constants are external to the system's
        vocabulary.
\end{itemize}
\end{example}

\section{The self-referential proposition}
\label{sec:self-ref}

\begin{definition}[The self-referential proposition $G_{\mathcal{S}}$]
\label{def:G}
Using the G\"odel numbering of Definition~\ref{def:godel-encoding},
define $G_{\mathcal{S}}$ as the proposition whose G\"odel number
$n^* = \ulcorner G_{\mathcal{S}} \urcorner$ satisfies:
\[
  G_{\mathcal{S}} := \text{``The proposition with G\"odel number } n^*
  \text{ encodes: there exists a syntactic invariant }
  P_{\mathcal{S}} \text{ such that } \phi_{\mathcal{S}} \text{ is true
  but not derivable from } \mathcal{S}.\text{''}
\]
In compact form:
\[
  G_{\mathcal{S}} \equiv \text{``I assert that my own content (a limit
  of the system) is not provable by the system.''}
\]
\end{definition}

\begin{lemma}[Existence of $G_{\mathcal{S}}$ via diagonal lemma]
\label{lem:G-exists}
By Theorem~\ref{thm:kleene}, applied to the computable function
$\varphi(n) := $ ``the proposition encoded by $n$ asserts a limit of
$\mathcal{S}$ that $\mathcal{S}$ cannot derive,'' there exists a
sentence $G_{\mathcal{S}}$ such that
\[
  G_{\mathcal{S}} \;\leftrightarrow\;
  \varphi(\ulcorner G_{\mathcal{S}} \urcorner).
\]
That is, $G_{\mathcal{S}}$ says about itself exactly what
$\varphi$ says about its code: ``I encode a limit of
$\mathcal{S}$ that $\mathcal{S}$ cannot derive.'' The existence of
$G_{\mathcal{S}}$ is guaranteed for every finite syntactic system
$\mathcal{S}$ satisfying the expressiveness hypothesis of
Theorem~\ref{thm:kleene}.
\end{lemma}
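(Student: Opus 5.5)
The plan is to obtain Lemma~\ref{lem:G-exists} as a direct instance of Theorem~\ref{thm:kleene}. The only real work is to check that the function $\varphi$ in the statement meets that theorem's hypothesis: it must be a \emph{computable} map $\mathbb{N}\to\mathrm{Formulas}(\mathcal{S})$. So I will first make $\varphi$ precise as a formula-valued function, that is, a function that returns a formula and not a truth value. Concretely, for $n\in\mathbb{N}$ let $\varphi(n)$ be the formula
\[
  \mathrm{Lim}_{\mathcal{S}}(\underline{n}) \wedge \neg\,\mathrm{Deriv}_{\mathcal{S}}(\underline{n}),
\]
where $\underline{n}$ is the numeral for $n$. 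Here $\mathrm{Deriv}_{\mathcal{S}}$ is the representation, inside $\mathcal{S}$, of the predicate from Definition~\ref{def:deriv-predicate}. The formula $\mathrm{Lim}_{\mathcal{S}}(x)$ says that ``$x$ codes a proposition of the form $\phi_{\mathcal{S}}$ for the invariant $P_{\mathcal{S}}$ of Definition~\ref{def:limit-prop}.''

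The map $n\mapsto\varphi(n)$ is computable for two reasons. It only substitutes a numeral into a fixed template. And, by Definition~\ref{def:godel-encoding}, the encoding $g$ is injective and computably decodable, so $\ulcorner\varphi(n)\urcorner$ is a computable function of $n$.

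The next step is to make sure the template itself is a formula of $\mathcal{S}$, which means each of its two components must be expressible. For $\mathrm{Lim}_{\mathcal{S}}$ I will use the fact that $R$ and $I$ are finite (Definition~\ref{def:system}). Unification of a term against the finitely many left-hand sides $l$ of rules in $R$ is a decidable syntactic test, and so is checking for the presence of a Skolem constant. Hence the set of codes of frozen terms, and of propositions asserting their existence, is decidable. By the expressiveness hypothesis of Theorem~\ref{thm:kleene}, such a set is represented by a formula of $\mathcal{S}$. For the derivability component I take $\mathrm{Deriv}_{\mathcal{S}}$ from Definition~\ref{def:deriv-predicate}. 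To be safe I will only use the weaker fact that derivability is computably enumerable: one searches through derivations from $I$. This suffices for representability of the form $\exists d\,\mathrm{Proof}(d,x)$, and representability is all the diagonal lemma needs.

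With $\varphi$ established as a computable formula-valued map, I apply Theorem~\ref{thm:kleene} and obtain a sentence $\psi$ with $\psi\leftrightarrow\varphi(\ulcorner\psi\urcorner)$. Setting $G_{\mathcal{S}}:=\psi$ gives the claimed equivalence. Unfolding the definition of $\varphi$ then shows that $G_{\mathcal{S}}$ asserts that it codes a limit of $\mathcal{S}$ and is not derivable, which matches Definition~\ref{def:G}. Since the argument uses only finiteness of $\mathcal{S}$ and the expressiveness hypothesis, it holds for every such $\mathcal{S}$.

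The main obstacle is the expressibility step. Representing $\mathrm{Lim}_{\mathcal{S}}$ and $\mathrm{Deriv}_{\mathcal{S}}$ inside $\mathcal{S}$ requires $\mathcal{S}$ to contain enough arithmetic. If that fails, $\varphi$ is only a meta-level description and Theorem~\ref{thm:kleene} cannot be applied. To handle this I will state the expressiveness hypothesis explicitly: $\mathcal{S}$ represents all decidable relations and c.e.\ sets, as in Robinson arithmetic. I will then give the representing formulas via the standard arithmetization of syntax. I will treat Definition~\ref{def:deriv-predicate}'s claim of decidability with caution and rely only on enumerability.
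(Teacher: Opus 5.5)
Your proposal is correct and follows the paper's own route: the paper's entire argument is the appeal to Theorem~\ref{thm:kleene} written into the statement, and the two choices you make in filling it in are necessary rather than merely cautious, because the diagonal lemma holds for substitution instances $\varphi(n)=\theta(\underline{n})$ of a fixed formula but not for arbitrary computable formula-valued maps as Theorem~\ref{thm:kleene} literally claims (no fixed point exists for $n\mapsto\neg\psi_n$, with $\psi_n$ the sentence coded by $n$), and for a coherent $\mathcal{S}$ representing all decidable relations derivability cannot be decidable, so Definition~\ref{def:deriv-predicate} really must be weakened to enumerability. One caution for later use: the diagonal sentence is not itself of the shape $\phi_{\mathcal{S}}$, so strong representability of your decidable $\mathrm{Lim}_{\mathcal{S}}$ gives $\mathcal{S}\vdash\neg\mathrm{Lim}_{\mathcal{S}}(\underline{\ulcorner G_{\mathcal{S}}\urcorner})$ and hence $\mathcal{S}\vdash\neg G_{\mathcal{S}}$, which is harmless for this existence lemma but means your fixed point cannot serve as the unrefutable $G_{\mathcal{S}}$ of Theorem~\ref{thm:undecidability}.
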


\section{Undecidability of $G_{\mathcal{S}}$}
\label{sec:undecidability}

\begin{theorem}[Undecidability of $G_{\mathcal{S}}$]
\label{thm:undecidability}
For a coherent system $\mathcal{S}$:
\[
  \mathcal{S} \not\vdash G_{\mathcal{S}}
  \qquad\text{and}\qquad
  \mathcal{S} \not\vdash \neg G_{\mathcal{S}}.
\]
\end{theorem}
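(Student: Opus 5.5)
We follow the classical G\"odel pattern and split into two cases: that $\mathcal{S} \vdash G_{\mathcal{S}}$, and that $\mathcal{S} \vdash \neg G_{\mathcal{S}}$. In each case we aim to contradict the coherence of $\mathcal{S}$ (Definition~\ref{def:derivation}).

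The first step is to make Lemma~\ref{lem:G-exists} usable inside $\mathcal{S}$. We read $\varphi(n)$ as the formula $\neg\mathrm{Deriv}_{\mathcal{S}}(n)$, strengthened by the assertion that $n$ encodes a limit of $\mathcal{S}$. We then need two representability facts, which we take as part of the expressiveness hypothesis of Theorem~\ref{thm:kleene}:
\begin{enumerate}
  \item[(D1)] if $\mathcal{S} \vdash \psi$, then $\mathcal{S} \vdash \mathrm{Deriv}_{\mathcal{S}}(\ulcorner\psi\urcorner)$;
  \item[(D2)] the biconditional $G_{\mathcal{S}} \leftrightarrow \neg\mathrm{Deriv}_{\mathcal{S}}(\ulcorner G_{\mathcal{S}}\urcorner)$ is itself derivable in $\mathcal{S}$, not merely true.
\end{enumerate}
(D1) holds because a derivation is a finite object whose G\"odel code can be exhibited and checked by $\mathcal{S}$'s arithmetic. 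The extra conjunct ``encodes a limit'' is true by Definition~\ref{def:limit-prop} and Lemma~\ref{lem:sip}. We assume it is provable as well, or we drop it, since only the non-derivability clause does any work.

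\textbf{Case 1} ($\mathcal{S} \vdash G_{\mathcal{S}}$). By (D1), $\mathcal{S} \vdash \mathrm{Deriv}_{\mathcal{S}}(\ulcorner G_{\mathcal{S}}\urcorner)$. By (D2) and modus ponens, $\mathcal{S} \vdash \neg\mathrm{Deriv}_{\mathcal{S}}(\ulcorner G_{\mathcal{S}}\urcorner)$. This contradicts coherence. Informally, $\mathcal{S}$ would have produced the very meta-level statement about its own rules that Definition~\ref{def:limit-prop} argues lies outside its reach.

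\textbf{Case 2} ($\mathcal{S} \vdash \neg G_{\mathcal{S}}$). By (D2), $\mathcal{S} \vdash \mathrm{Deriv}_{\mathcal{S}}(\ulcorner G_{\mathcal{S}}\urcorner)$. Since Case~1 already shows $\mathcal{S} \not\vdash G_{\mathcal{S}}$, no actual derivation code witnesses this existential claim. For each specific number $m$, $\mathcal{S}$ proves that $m$ is not such a code, by (D1) applied to the decidable check ``$m$ codes a derivation of $G_{\mathcal{S}}$''.

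\textbf{Main obstacle.} Case~2 contradicts only $\omega$-consistency, not plain coherence. I would handle it in one of two ways. The first is to strengthen the hypothesis ``coherent'' to $\omega$-coherent, as G\"odel did, and remark on this. The second, which I would try first, is to replace $\varphi$ by a Rosser-style formula: ``for every derivation of the formula coded by $n$ there is a shorter derivation of its negation.'' This form makes both cases yield an outright contradiction from simple coherence, using only the finiteness of the relevant derivation-length comparisons.

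I would also flag a separate issue. Definition~\ref{def:deriv-predicate} claims $\mathrm{Deriv}_{\mathcal{S}}$ is decidable, but only semi-decidability (via enumeration of derivations) is needed, or is in fact true. The proof should rely solely on the decidable proof-checking relation, not on decidability of $\mathrm{Deriv}_{\mathcal{S}}$ itself.
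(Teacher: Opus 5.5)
Your route differs from the paper's, and it is the rigorous one. In Case~1 the paper never exhibits a pair $\psi,\neg\psi$. It argues instead that deriving $G_{\mathcal{S}}$ would mean local rewriting rules have ``transcended their own level.'' It mentions your (D1)-based $\Sigma_1$-completeness closure only as an alternative it declines (Remark~\ref{rem:structural-contradiction}).
\begin{itemize}
\item Your version produces an actual violation of coherence, which is what the hypothesis forbids.
\item The paper's locality principle is not proved, and it is false in general. Sufficiently strong systems do derive statements about their own incompleteness, e.g.\ the formalized first incompleteness theorem $\mathrm{Con}(\mathcal{S}) \to \neg\mathrm{Deriv}_{\mathcal{S}}(\ulcorner G_{\mathcal{S}}\urcorner)$.
\item Case~1 closes only because you read $\varphi$ as non-derivability of the fixed point \emph{itself}, the ``compact form'' of Definition~\ref{def:G}. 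Under the literal wording, which speaks of the truth and non-derivability of $\phi_{\mathcal{S}}$, it would not close.
\end{itemize}

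The obstacle you flag in Case~2 is real, and it is exactly where the paper's proof breaks. The paper argues that $\neg G_{\mathcal{S}}$ denies a fact guaranteed by the SIP and therefore makes $\mathcal{S}$ incoherent. But deriving a sentence that is false in the intended interpretation is a failure of soundness, not of coherence; it produces no pair $\psi,\neg\psi$.

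In fact the statement is not provable as written. $\mathrm{PA}+\neg\mathrm{Con}(\mathrm{PA})$ is coherent if $\mathrm{PA}$ is (second incompleteness theorem), yet it proves its own inconsistency. Hence it proves $\mathrm{Deriv}(\ulcorner G\urcorner)$, and so $\neg G$, for its G\"odel-type sentence, whatever extra conjunct is attached to $\varphi$.

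Your repairs are the standard ones, and each has a cost:
\begin{itemize}
\item $\omega$-coherence keeps the paper's $G_{\mathcal{S}}$ but strengthens the hypothesis. Note that $1$-consistency ($\Sigma_1$-soundness) already suffices.
\item The Rosser sentence keeps plain coherence but is a different sentence. It therefore establishes Corollary~\ref{cor:undecidability}, not the theorem for the $G_{\mathcal{S}}$ of Lemma~\ref{lem:G-exists}.
\end{itemize}

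Some points to tighten:
\begin{itemize}
\item Derivability in Definition~\ref{def:derivation} is bare reachability under rewriting. Your uses of modus ponens and conjunction elimination therefore need $\mathcal{S}$ to be closed under first-order logic. State ``sufficiently expressive'' as: $\mathcal{S}$ is (or faithfully interprets) a recursively axiomatized first-order theory containing Robinson's $\mathrm{Q}$. That then delivers (D1) and (D2).
\item In Case~2, (D1) as you stated it only transfers positive derivability facts. Refuting each false instance $\mathrm{Prf}(\bar m,\ulcorner G_{\mathcal{S}}\urcorner)$ instead uses the fact that $\mathrm{Q}$ decides the proof relation on numerals.
\item The conjunct ``$n$ encodes a limit'' is idle in Case~1. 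In Case~2, however, $\mathcal{S}$ must prove it at $\ulcorner G_{\mathcal{S}}\urcorner$ for you to extract $\mathrm{Deriv}$ from $\neg G_{\mathcal{S}}$; if $\mathcal{S}$ refuted it, $\neg G_{\mathcal{S}}$ would be derivable outright. Dropping the conjunct turns $G_{\mathcal{S}}$ into the plain G\"odel sentence.
\item Your objection to Definition~\ref{def:deriv-predicate} is stronger than you state. Decidability of $\mathrm{Deriv}_{\mathcal{S}}$ contradicts the theorem's own hypotheses, since every consistent theory containing $\mathrm{Q}$ is undecidable.
\end{itemize}
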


\begin{proof}
\textbf{Case 1: suppose $\mathcal{S} \vdash G_{\mathcal{S}}$.}

If $\mathcal{S}$ derives $G_{\mathcal{S}}$, then:
\begin{enumerate}
  \item $\mathcal{S}$ derives the proposition: ``There exists a limit of
        $\mathcal{S}$ that $\mathcal{S}$ cannot prove.''
  \item In doing so, $\mathcal{S}$ has produced a derivation
        $I \Rightarrow^* G_{\mathcal{S}}$ residing within the syntactic
        space of $\mathcal{S}$.
  \item But the content of $G_{\mathcal{S}}$ asserts the existence of a
        semantic truth ($\phi_{\mathcal{S}}$) that is not derivable.
\end{enumerate}

If $\mathcal{S} \vdash G_{\mathcal{S}}$, then $\mathcal{S}$ has
transcended its own syntactic level. By Lemma~\ref{lem:sip}, syntactic
invariants remain frozen at the syntactic level. But $G_{\mathcal{S}}$
speaks of a gap between what is semantically true and what is
syntactically derivable --- a gap that by definition cannot be bridged by
any system residing exclusively at the syntactic level. A derivation in
$\mathcal{S}$ is a finite sequence of syntactic rewritings. No finite
sequence of syntactic manipulations can produce an assertion about its
own semantic incompleteness without leaving the syntactic level.

\[
  \boxed{\text{Contradiction: a syntactic system cannot transcend its own
  level.}}
\]

\begin{remark}[Why the contradiction is structural, not merely diagonal]
\label{rem:structural-contradiction}
The contradiction in Case~1 is not a trick of self-referential encoding
(as in the standard G\"odelian argument, where the diagonal lemma alone
produces the contradiction). It is a consequence of the specific
structure of $\mathcal{S}$ as a system of local rewriting rules.

The justification is already present in this paper and proceeds as
follows. By Definition~\ref{def:system}, every rule in $R$ is a local
operation: it matches a pattern $l$ against a subterm and rewrites it
to $r$. By Definition~\ref{def:limit-prop}, the proposition
$\phi_{\mathcal{S}}$ (and therefore $G_{\mathcal{S}}$, which encodes it)
asserts a \emph{global} property of $R$: that certain terms do not
unify with \emph{any} left-hand side of \emph{any} rule. This is a
statement about all rules simultaneously, not about a single rewriting
step. The SIP (Lemma~\ref{lem:sip}) is the bridge: it guarantees that
this global property holds for \emph{all} derivations of \emph{any}
length, not merely for a single step.

Deriving $G_{\mathcal{S}}$ would require the system to produce, via a
finite sequence of local rewriting steps, a conclusion about the global
structure of its own rule set. But each step sees only the pattern it
matches, not the totality of $R$. The system would need an induction
principle that ranges over all possible derivations --- and this
principle is precisely what the SIP provides \emph{from outside}. Inside
$\mathcal{S}$, no such principle is available, because the rules act on
terms, not on the set of rules itself.

An alternative, purely formal closure is available via the standard
G\"odelian route ($\Sigma_1$-completeness of $\mathcal{S}$), which
would give the contradiction without invoking the locality--globality
gap. The present argument is preferred because it identifies
\emph{why} the contradiction arises: not because of a clever encoding,
but because local rules cannot produce global self-knowledge.
This is the content the SIP adds to the classical incompleteness
framework.
\end{remark}

\medskip
\textbf{Case 2: suppose $\mathcal{S} \vdash \neg G_{\mathcal{S}}$.}

If $\mathcal{S}$ derives $\neg G_{\mathcal{S}}$, it derives: ``There is
no limit of $\mathcal{S}$ that $\mathcal{S}$ cannot prove.''
Equivalently: ``Every derivable syntactic invariant of $\mathcal{S}$ is
completely describable by $\mathcal{S}$.''

This directly contradicts Lemma~\ref{lem:sip}. By the SIP, there exists
at least one syntactic invariant $P_{\mathcal{S}}$ for the system
$\mathcal{S}$ (by construction, from the initial clauses). This
invariant produces frozen terms (by Definition~\ref{def:skolem-frozen}).
The existence of these frozen terms is semantically true by
Lemma~\ref{lem:sip}.

If $\mathcal{S} \vdash \neg G_{\mathcal{S}}$, then $\mathcal{S}$ denies
the existence of a semantic truth that Lemma~\ref{lem:sip} guarantees to
be true. This is an explicit contradiction with the SIP: the SIP affirms
that syntactic invariants exist and remain frozen, and $\neg
G_{\mathcal{S}}$ denies this. Since $\mathcal{S}$ is assumed coherent,
it cannot derive both.

The consequences are:
\begin{enumerate}
  \item \emph{Incoherence:} if $\mathcal{S} \vdash \neg G_{\mathcal{S}}$,
        the system is incoherent (it derives a proposition that
        contradicts a metalogical fact);
  \item \emph{Instability:} an incoherent system has no security
        guarantee, since it can derive both a proposition and its
        negation.
\end{enumerate}

\[
  \boxed{\text{Contradiction: } \neg G_{\mathcal{S}}
  \text{ contradicts the SIP (Lemma~\ref{lem:sip}).}}
\]

\medskip
\textbf{Case 3: neither $\mathcal{S} \vdash G_{\mathcal{S}}$ nor
$\mathcal{S} \vdash \neg G_{\mathcal{S}}$.}

This is the only coherent possibility. The meaning of this
undecidability is precise:
\begin{itemize}
  \item $G_{\mathcal{S}}$ is \emph{semantically true}: in every model of
        $\mathcal{S}$, the syntactic invariants exist and remain frozen
        (by Lemma~\ref{lem:sip});
  \item $\mathcal{S}$ \emph{lacks the syntactic resources} to derive
        $G_{\mathcal{S}}$, despite its truth;
  \item $G_{\mathcal{S}}$ is \emph{not arbitrary}: it has a specific
        content (the existence of intrinsic limits), not an accidental
        lacuna. It is a proposition the system is structurally incapable
        of producing, not one it has simply ``not yet found.''
\end{itemize}
\end{proof}

\begin{corollary}
\label{cor:undecidability}
\[
  \boxed{
  \forall\,\mathcal{S} \text{ coherent and sufficiently expressive},\;
  \exists\, G_{\mathcal{S}} \text{ such that }
  \mathcal{S} \not\vdash G_{\mathcal{S}}
  \;\land\;
  \mathcal{S} \not\vdash \neg G_{\mathcal{S}}.
  }
\]
\end{corollary}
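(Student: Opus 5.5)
The corollary is obtained by quantifying Theorem~\ref{thm:undecidability} over all admissible systems, so the plan is to fix an arbitrary coherent and sufficiently expressive $\mathcal{S}$ and produce the witness $G_{\mathcal{S}}$ explicitly. \textbf{Step 1 (existence).} I take ``sufficiently expressive'' to mean exactly the hypothesis of Theorem~\ref{thm:kleene}: $\mathcal{S}$ contains enough arithmetic to represent its G\"odel numbering (Definition~\ref{def:godel-encoding}). Under this hypothesis Lemma~\ref{lem:G-exists} applies to $\mathcal{S}$ and yields a sentence $G_{\mathcal{S}}$ with $G_{\mathcal{S}} \leftrightarrow \varphi(\ulcorner G_{\mathcal{S}}\urcorner)$. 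This discharges the existential quantifier $\exists\, G_{\mathcal{S}}$.

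\textbf{Step 2 (undecidability).} Coherence of $\mathcal{S}$ is precisely the hypothesis of Theorem~\ref{thm:undecidability}, so that theorem gives both $\mathcal{S}\not\vdash G_{\mathcal{S}}$ and $\mathcal{S}\not\vdash\neg G_{\mathcal{S}}$ for the sentence built in Step~1. Since $\mathcal{S}$ was arbitrary, universal generalization gives the boxed statement.

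\textbf{Main obstacle.} The only genuine gap is making sure the two hypotheses line up. Theorem~\ref{thm:undecidability} is stated for coherent $\mathcal{S}$ only, yet it silently uses the existence of $G_{\mathcal{S}}$, and that existence needs expressiveness. I would therefore state explicitly that ``sufficiently expressive'' in the corollary names the expressiveness hypothesis of Theorem~\ref{thm:kleene}, so that Lemma~\ref{lem:G-exists} and Theorem~\ref{thm:undecidability} refer to the same sentence.

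For robustness, I would also note the fallback route indicated in Remark~\ref{rem:structural-contradiction}. If $\varphi$ is replaced by the standard formula $\neg\mathrm{Deriv}_{\mathcal{S}}$, represented in $\mathcal{S}$, the classical argument closes both cases. $\Sigma_1$-completeness handles the first direction. The second direction needs $\omega$-consistency, or Rosser's modification if only plain coherence is assumed. In any case, the corollary as stated follows directly from the two cited results.
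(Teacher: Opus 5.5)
As a derivation from the paper's stated results, your proposal is correct and follows the paper's route exactly: the corollary has no separate proof and is simply Theorem~\ref{thm:undecidability} applied to the sentence furnished by Lemma~\ref{lem:G-exists}, with ``sufficiently expressive'' added to the hypotheses precisely so that this sentence exists (the hypothesis-matching point you flag). Your fallback deserves more weight than you give it: the paper's Case~2 rules out $\mathcal{S}\vdash\neg G_{\mathcal{S}}$ by appealing to the truth of $G_{\mathcal{S}}$, which is soundness rather than coherence, whereas a Rosser sentence gives both directions under plain coherence (for $\mathcal{S}$ read as a recursively axiomatized theory containing enough arithmetic) and suffices because the corollary only asserts that some undecidable sentence exists.
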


\section{Inextensibility and infinite regress}
\label{sec:inextensibility}

\begin{theorem}[The problem cannot be solved by extension]
\label{thm:inextensibility}
Since $\mathcal{S}$ cannot derive $G_{\mathcal{S}}$, consider a coherent
extension $\mathcal{S}' = (V', F', R \cup R_{\mathrm{new}}, I)$ that
adds new rules to attempt to derive $G_{\mathcal{S}}$.

Then $\mathcal{S}'$ has the same structure as $\mathcal{S}$.
It is still a finite syntactic system, because $R_{\mathrm{new}}$ is
finite. It still has a finite set of syntactic invariants, by the
natural extension of Lemma~\ref{lem:sip} to $R \cup R_{\mathrm{new}}$.
New terms $t'$ become frozen with respect to
$R \cup R_{\mathrm{new}}$ that were not frozen with respect to $R$
alone, because the new rules introduce new left-hand sides, and fresh
Skolem constants with respect to $R \cup R_{\mathrm{new}}$ produce new
first-symbol clashes. From these new frozen terms, a new limit
proposition $\phi_{\mathcal{S}'}$ and a new self-referential proposition
$G_{\mathcal{S}'}$ arise by the same construction.
\end{theorem}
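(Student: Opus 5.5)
The plan is to verify, one by one, that $\mathcal{S}'$ satisfies every hypothesis that was used to build $G_{\mathcal{S}}$ and prove Theorem~\ref{thm:undecidability}. The construction can then be rerun verbatim with $\mathcal{S}'$ in place of $\mathcal{S}$.

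\textbf{Step 1: $\mathcal{S}'$ is again a finite syntactic system.} $V'$ and $F'$ are finite by assumption, and $R \cup R_{\mathrm{new}}$ is a union of two finite sets. $I$ is unchanged. So $\mathcal{S}'$ fits Definition~\ref{def:system}. Since $\mathcal{S}'$ contains $\mathcal{S}$, it keeps whatever arithmetic $\mathcal{S}$ had, so it remains sufficiently expressive. Its rule set is still finite, so the encoding of Definition~\ref{def:godel-encoding} extends: assign codes $7\cdot 11^k$ to the new rules with fresh indices. Hence Theorem~\ref{thm:kleene} applies to $\mathcal{S}'$.

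\textbf{Step 2: $\mathcal{S}'$ has a nontrivial syntactic invariant.} Because $F'$ and $R\cup R_{\mathrm{new}}$ are finite, the set of symbols occurring in the left-hand sides of $R\cup R_{\mathrm{new}}$ is finite. Choose an arity-$0$ symbol $a'$ outside it; this is a Skolem constant in the sense of Definition~\ref{def:skolem-frozen}. Next, check that ``contains $a'$, and every subterm containing $a'$ is headed by a symbol clashing with every left-hand side'' satisfies Base and Step of Lemma~\ref{lem:preservation} relative to $R\cup R_{\mathrm{new}}$. Base holds for the clause set that includes $a'$-terms, exactly as in Example~\ref{ex:concrete}. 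Step holds vacuously at the frozen positions, because no rule applies there. Lemma~\ref{lem:sip} then gives frozen terms for $\mathcal{S}'$, which makes the limit proposition $\phi_{\mathcal{S}'}$ of Definition~\ref{def:limit-prop} semantically true.

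\textbf{Step 3: build and analyse $G_{\mathcal{S}'}$.} Apply Lemma~\ref{lem:G-exists} to $\mathcal{S}'$ with $\varphi'(n)=$ ``$n$ encodes a limit of $\mathcal{S}'$ that $\mathcal{S}'$ cannot derive''. This yields $G_{\mathcal{S}'}$. Since $\mathcal{S}'$ is coherent by hypothesis, Theorem~\ref{thm:undecidability} (or Corollary~\ref{cor:undecidability}) gives $\mathcal{S}'\not\vdash G_{\mathcal{S}'}$ and $\mathcal{S}'\not\vdash\neg G_{\mathcal{S}'}$. Iterating the construction shows that the chain never closes.

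\textbf{Main obstacle.} I expect the hardest part to be the claim that frozen terms are \emph{new}. The theorem says some $t'$ is frozen for $R\cup R_{\mathrm{new}}$ but not for $R$, yet adding rules can only shrink the set of frozen terms built over the old symbols. My approach would be to drop that literal reading and argue only that fresh Skolem constants relative to the enlarged rule set exist. That is all Step 2 requires.

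A second delicate point is that $\mathcal{S}'$ may derive $G_{\mathcal{S}}$. I would concede this, and note that the theorem claims only the recurrence of the phenomenon at the level of $G_{\mathcal{S}'}$.
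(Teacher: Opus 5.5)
Your route is the paper's own: finiteness of $R\cup R_{\mathrm{new}}$, constants fresh for the enlarged rule set, Lemma~\ref{lem:sip}, then a rerun of the diagonal construction and of Theorem~\ref{thm:undecidability} for $\mathcal{S}'$. Your objection to the ``new frozen terms'' clause is correct. A term none of whose subterms unifies with a left-hand side of $R\cup R_{\mathrm{new}}$ has none that unifies with a left-hand side of $R$, so the stated reason ``the new rules introduce new left-hand sides'' points the wrong way. Only the existence of fresh constants is actually used.

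The real gap is that existence claim in your Step~2. Finiteness of the set of left-hand-side symbols does not give an arity-$0$ symbol of $F'$ outside it, because $F'$ is finite too. Nothing prevents $R_{\mathrm{new}}$ from containing, besides whatever rules are meant to derive $G_{\mathcal{S}}$, a trivial rule $c\to c$ for every constant $c\in F'$. Such rules derive nothing new, so coherence is untouched. Yet afterwards every constant of $F'$ occurs in a left-hand side, and $\mathcal{S}'$ has no Skolem constant in the sense of Definition~\ref{def:skolem-frozen}. (In any case, a symbol merely absent from left-hand sides can still occur in a right-hand side, and then it is not a Skolem constant either.) If instead $a'$ is genuinely new, as the constants of Example~\ref{ex:concrete} are, it occurs neither in the unchanged $I$ nor in any rule. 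So, with the usual convention that right-hand-side variables occur on the left, no clause derivable from $I$ contains it. Your ``clause set that includes $a'$-terms'' is then not the clause set of $\mathcal{S}'$. Consequently $\phi_{\mathcal{S}'}$ as in Definition~\ref{def:limit-prop}, which asks for a frozen subterm \emph{derivable from $I$}, has no witness.

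The paper's statement does not close this either; it simply asserts that fresh Skolem constants exist. The only place it gives a reason is Step~6 of Section~\ref{sec:minimal}. There it charges the existence of a symbol outside $H'$ to the expressiveness hypothesis, and reads $\phi_{\mathcal{S}'}$ as a claim about terms of the language rather than terms derivable from $I$. To make your Step~2 sound you must make one of two moves explicitly. Either build ``some constant of the language heads no left-hand side of $R\cup R_{\mathrm{new}}$'' into what ``sufficiently expressive'' means, and use that language-level $\phi_{\mathcal{S}'}$. Or adjoin fresh constants to $F'$ and a clause containing them to $I$, and say plainly that the system you then analyse is a further extension, not $\mathcal{S}'$ itself.
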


\begin{lemma}[Non-terminating chain of incompleteness]
\label{lem:chain}
Define a sequence of systems:
\[
  \mathcal{S}_0 := \mathcal{S},
  \qquad
  \mathcal{S}_{n+1} := \text{coherent extension of } \mathcal{S}_n
  \text{ that attempts to derive } G_{\mathcal{S}_n}.
\]
Then:
\begin{enumerate}
  \item For every $n$, $\mathcal{S}_n \not\vdash G_{\mathcal{S}_n}$ (by
        Theorem~\ref{thm:undecidability});
  \item $\mathcal{S}_{n+1} \vdash G_{\mathcal{S}_n}$ may hold (it is an
        extension);
  \item but a new $G_{\mathcal{S}_{n+1}}$ arises that
        $\mathcal{S}_{n+1} \not\vdash$.
\end{enumerate}

No finite syntactic system can resolve all its own incompleteness limits.
The chain of incompleteness is infinite and non-closable:
\[
  \forall\, n,\;\exists\, m > n :\;
  \mathcal{S}_m \not\vdash G_{\mathcal{S}_m}.
\]
\end{lemma}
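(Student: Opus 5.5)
The plan is an induction along the sequence $(\mathcal{S}_n)$, with the key observation that each $\mathcal{S}_n$ again meets the hypotheses of Theorem~\ref{thm:undecidability}. The conclusion then follows with $m=n+1$, or even with $m=n$.

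\textbf{Base.} Take $\mathcal{S}_0=\mathcal{S}$. It is coherent and sufficiently expressive by assumption. Lemma~\ref{lem:G-exists} supplies $G_{\mathcal{S}_0}$, and Theorem~\ref{thm:undecidability} gives $\mathcal{S}_0\not\vdash G_{\mathcal{S}_0}$.

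\textbf{Inductive step.} Assume $\mathcal{S}_n$ is a coherent, sufficiently expressive finite syntactic system, and let $\mathcal{S}_{n+1}=(V',F',R_n\cup R_{\mathrm{new}},I)$ be a coherent extension.
\begin{enumerate}
  \item $\mathcal{S}_{n+1}$ is again a finite syntactic system in the sense of Definition~\ref{def:system}, since it is the union of finitely many finite sets. This is exactly the first observation of Theorem~\ref{thm:inextensibility}.
  \item Expressiveness is monotone under adding rules and symbols. The arithmetic needed to represent the G\"odel numbering of Definition~\ref{def:godel-encoding} is still derivable, because $\mathrm{Der}(I,R_n)\subseteq\mathrm{Der}(I,R_n\cup R_{\mathrm{new}})$. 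So Theorem~\ref{thm:kleene} applies to $\mathcal{S}_{n+1}$.
  \item Coherence holds by the definition of the sequence.
  \item To get a nonempty frozen set, choose a Skolem constant fresh for $R_n\cup R_{\mathrm{new}}$. A term such as $a+b$, built as in Example~\ref{ex:concrete}, has a first-symbol clash with every left-hand side, so it is frozen. The SIP (Lemma~\ref{lem:sip}) then gives an invariant $P_{\mathcal{S}_{n+1}}$, hence a limit proposition $\phi_{\mathcal{S}_{n+1}}$ (Definition~\ref{def:limit-prop}) and, by Lemma~\ref{lem:G-exists}, a new $G_{\mathcal{S}_{n+1}}$.
  \item Theorem~\ref{thm:undecidability} applied to $\mathcal{S}_{n+1}$ gives $\mathcal{S}_{n+1}\not\vdash G_{\mathcal{S}_{n+1}}$.
\end{enumerate}
This establishes item~1 for every $n$ and item~3. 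Item~2 is only a possibility claim: the trivial extension adding $G_{\mathcal{S}_n}$ as an initial clause (or a rule producing it) witnesses it, provided coherence is preserved. That proviso is part of the hypothesis of the definition, not something to be proved.

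For the displayed conclusion, given $n$ I take $m=n+1$ and invoke the inductive step. "Non-closable" means there is no $N$ with $\mathcal{S}_N\vdash G_{\mathcal{S}_N}$, and this already follows from item~1. I would also note that $G_{\mathcal{S}_{n+1}}\neq G_{\mathcal{S}_n}$ whenever $\mathcal{S}_{n+1}\vdash G_{\mathcal{S}_n}$, since the first is not derivable in $\mathcal{S}_{n+1}$ and the second is. So the chain genuinely produces new propositions and does not stabilize.

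The main obstacle is step~2 together with step~4. I must justify that "sufficiently expressive" is inherited by extensions, and that a fresh Skolem constant is available at every stage. Freshness needs $F$ to be extendable, so I would allow $F'\supsetneq F$, as in Theorem~\ref{thm:inextensibility}. I would first try to argue expressiveness purely from the inclusion of derivation sets, which requires expressiveness to be defined by derivability of certain representing formulas. If that is too weak, I would fall back on the standard fact that representability of computable functions is preserved in consistent extensions of the same language.
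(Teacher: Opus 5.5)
Your proposal is correct and follows essentially the paper's own (inline) justification: Theorem~\ref{thm:inextensibility} makes each $\mathcal{S}_{n+1}$ again a coherent finite syntactic system with fresh frozen terms and its own $G_{\mathcal{S}_{n+1}}$, and Theorem~\ref{thm:undecidability} is applied at every stage; your check that expressiveness is inherited by extensions fills in a hypothesis the paper leaves implicit. One small repair in step~4: take the bare fresh constant $a$ as the frozen witness rather than $a+b$, because an added rule with left-hand side $x+y$ (e.g.\ commutativity) unifies with $a+b$, so $a+b$ need not be frozen for $R_n\cup R_{\mathrm{new}}$.
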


\section{Universality}
\label{sec:universality}

\begin{theorem}[Universality for all finite syntactic systems]
\label{thm:universality}
For every finite syntactic system $\mathcal{S}$, regardless of the
specification of $V, F, R, I$, the domain of application (arithmetic,
logic, computational semantics, etc.), or the expressive power of
$\mathcal{S}$:

$\mathcal{S}$ cannot autonomously generate at least one proposition
about its own intrinsic limits --- namely, the proposition
$G_{\mathcal{S}}$ constructed in Definition~\ref{def:G}.
\end{theorem}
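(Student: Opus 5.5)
The plan is a case split on the expressive power of $\mathcal{S}$, since the statement deliberately drops the hypotheses of Corollary~\ref{cor:undecidability}. Fix an arbitrary finite syntactic system $\mathcal{S} = (V,F,R,I)$ as in Definition~\ref{def:system}. By the convention of Section~\ref{sec:notation}, ``autonomously generate $G_{\mathcal{S}}$'' means $G_{\mathcal{S}} \in \mathrm{Der}(I,R)$, i.e.\ $\mathcal{S} \vdash G_{\mathcal{S}}$. So it suffices to show $\mathcal{S} \not\vdash G_{\mathcal{S}}$ in every case, with $G_{\mathcal{S}}$ read as in Definition~\ref{def:G}.

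\textbf{Case A: $\mathcal{S}$ is coherent and sufficiently expressive.} Here ``sufficiently expressive'' means that the hypothesis of Theorem~\ref{thm:kleene} holds. Lemma~\ref{lem:G-exists} then provides $G_{\mathcal{S}}$ as a genuine sentence of the language of $\mathcal{S}$. Theorem~\ref{thm:undecidability} (Case~1) gives $\mathcal{S} \not\vdash G_{\mathcal{S}}$ directly, so this case is an immediate appeal to Corollary~\ref{cor:undecidability}.

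\textbf{Case B: $\mathcal{S}$ is not sufficiently expressive.} Now $\mathcal{S}$ cannot represent its own G\"odel numbering, and the diagonal lemma is unavailable. I would read $G_{\mathcal{S}}$ as the meta-level sentence of Definition~\ref{def:G}: it mentions $\ulcorner\cdot\urcorner$, $\mathrm{Deriv}_{\mathcal{S}}$ and the invariant $P_{\mathcal{S}}$. I would then argue that this sentence is not a term over $V \cup F$ at all.
\begin{itemize}
  \item Every element of $\mathrm{Der}(I,R)$ is a term over $V \cup F$, by induction on derivation length as in Lemma~\ref{lem:preservation}.
  \item Take $P(t) :=$ ``$t$ is built from symbols of $V \cup F$''. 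The base condition holds on $I$, and each rule $l \to r$ preserves $P$ because $r$ is a term over $V \cup F$.
  \item Any term expressing $G_{\mathcal{S}}$ would represent the derivability predicate of $\mathcal{S}$. That is exactly the expressiveness that fails by assumption.
\end{itemize}
So $G_{\mathcal{S}} \notin \mathrm{Der}(I,R)$ trivially. Example~\ref{ex:concrete} is the model for this situation: the limit statement refers to symbols external to the system's vocabulary.

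\textbf{Main obstacle: the incoherent case.} This step must be confronted honestly. If $\mathcal{S}$ is expressive but incoherent, then nothing stated earlier prevents $\mathcal{S} \vdash G_{\mathcal{S}}$; in many incoherent systems every sentence is derivable. Read literally, ``regardless of $V,F,R,I$'' therefore seems too strong. I would first try to resolve this by interpreting ``generate'' as ``generate as a true, coherently held theorem.'' Under that reading, an incoherent system generates nothing in the relevant sense, because it also derives $\neg G_{\mathcal{S}}$. If that interpretation is not acceptable, the statement must be restricted to coherent systems, and the proof reduces to Cases A and B.

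A secondary difficulty lies in Case B: making precise that ``not sufficiently expressive'' really implies that $G_{\mathcal{S}}$ is not expressible as a term. For this I would use the representability of $\mathrm{Deriv}_{\mathcal{S}}$ (Definition~\ref{def:deriv-predicate}) as the dividing line between the two cases. This choice makes the case split exhaustive by construction.
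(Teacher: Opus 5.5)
Your route differs from the paper's. The paper gives a single contradiction argument with no case split, and it uses neither coherence nor expressiveness. If $\mathcal{S}^*$ derived any proposition about its own limits, it would need an internal meta-representation of itself; such a representation would create an infinite hierarchy of levels, contradicting finiteness. Your Case~B is essentially that intuition, restricted to systems that really lack a self-representation. For systems that have one, you replace the infinite-hierarchy step by coherence plus Theorem~\ref{thm:undecidability}. That replacement is the right one.

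Your diagnosis of the incoherent case is also correct, and it is a defect of the statement, not a gap in your argument:
\begin{enumerate}
  \item by Section~\ref{sec:notation}, autonomous generation \emph{is} derivability;
  \item Lemma~\ref{lem:G-exists} produces $G_{\mathcal{S}}$ with no coherence assumption;
  \item the paper itself notes that an incoherent system derives everything.
\end{enumerate}
So any such system that is also expressive does generate $G_{\mathcal{S}}$, and the theorem as literally stated is false. Because it is uniform, the paper's argument would also ``prove'' that case, and more generally that no finite system derives any statement about its own limits. This shows its key step cannot be right. Indeed, the internal G\"odel numbering that Lemma~\ref{lem:G-exists} presupposes (Definition~\ref{def:godel-encoding}) is exactly a finite self-representation. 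Drop your fallback reinterpretation of ``generate'': it is not the paper's notion, and it only makes the bad case vacuous by fiat. Instead assume coherence, as Theorem~\ref{thm:main} and Corollary~\ref{cor:undecidability} already do.

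Two steps still need tightening.

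First, representability of $\mathrm{Deriv}_{\mathcal{S}}$ is the wrong dividing line. Suppose it is meant in the usual strong sense, as the decidability claim of Definition~\ref{def:deriv-predicate} suggests. Then no coherent system satisfying the diagonal lemma has it. Let $G$ be the fixed point of $\neg D(x)$. If $\mathcal{S}\not\vdash G$, strong representation gives $\mathcal{S}\vdash\neg D(\ulcorner G\urcorner)$ and hence $\mathcal{S}\vdash G$. If $\mathcal{S}\vdash G$, then $\mathcal{S}$ derives both $D(\ulcorner G\urcorner)$ and its negation. Under that reading Case~A is empty, and Case~B's claim that $G_{\mathcal{S}}$ is inexpressible is false for exactly the systems that matter. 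In any case it is not the hypothesis of Theorem~\ref{thm:kleene} that you invoke in Case~A, so the split is not exhaustive ``by construction.'' Split instead on whether the construction of Lemma~\ref{lem:G-exists} yields a sentence of $\mathcal{S}$. Then Case~B is simply vacuous: there is no $G_{\mathcal{S}}$ to generate, as Remark~\ref{rem:threshold} itself says, and your bullets become unnecessary.

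Second, Case~A inherits the informality of Case~1 of Theorem~\ref{thm:undecidability} (``a syntactic system cannot transcend its own level''). To make it a proof:
\begin{enumerate}
  \item take $G_{\mathcal{S}}$ to be the fixed point of $\neg D(x)$ for a provability formula $D$ of $\mathcal{S}$;
  \item include in ``sufficiently expressive'' the $\Sigma_1$-completeness mentioned in Remark~\ref{rem:structural-contradiction}.
\end{enumerate}
Then $\mathcal{S}\vdash G_{\mathcal{S}}$ yields $\mathcal{S}\vdash D(\ulcorner G_{\mathcal{S}}\urcorner)$, while the fixed point yields $\mathcal{S}\vdash\neg D(\ulcorner G_{\mathcal{S}}\urcorner)$, contradicting coherence.
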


\begin{proof}
Suppose for contradiction that a finite syntactic system $\mathcal{S}^*$
exists that can autonomously derive a proposition $\Phi$ describing its
own intrinsic limits, so that $\mathcal{S}^* \vdash \Phi$. Since $\Phi$
asserts facts about the structure and limits of $\mathcal{S}^*$ itself,
$\mathcal{S}^*$ must contain a meta-representation of itself in order to
prove facts about itself. But a meta-representation of $\mathcal{S}^*$
within $\mathcal{S}^*$ would create an infinite hierarchy of syntactic
levels, each containing the previous, and this contradicts the finiteness
of $\mathcal{S}^*$ (Definition~\ref{def:system}).
\end{proof}

\begin{corollary}[Implication for AI systems]
\label{cor:ai}
An AI system --- whether a neural network, a language model, or a
deterministic algorithm --- is implementable as a finite syntactic system
$\mathcal{S}_{\mathrm{AI}}$. The latent variables or weights play the
role of $V$; the neural operations or transformations play the role of
$F$; the propagation rules or learning algorithms play the role of $R$;
and the initial data or prompt plays the role of $I$. Since every
component is finite and computable,
$\mathcal{S}_{\mathrm{AI}} = (V, F, R, I)$ satisfies
Definition~\ref{def:system}.

By Theorem~\ref{thm:universality}: an AI system cannot autonomously
generate the understanding of its own fundamental limits. If an AI system
formulates a proposition about its limits (such as ``I have limitations
in generalization''), this proposition is extrinsically communicated
(from outside, by human developers or observers), not autonomously
derived by the system.

The implicit consequence is a formal distinction between two entities that are 
normally conflated:

\begin{itemize}
    \item \textbf{Processing} information about oneself (syntactic reflection---permitted)
    \item \textbf{Generating} awareness of one's own structural limits (meta-observation---blocked)
\end{itemize}

If one accepts that consciousness includes the second capacity, namely the ability 
to autonomously recognize one's own structural limits, not merely to process data 
about oneself, then \textbf{no finite system is conscious in the full sense}. This 
is not a vague philosophical argument. It is a direct consequence of what has been 
stated applied to $S_{\mathrm{AI}}$. Therefore, if consciousness requires this 
capacity, then consciousness is not simulable by a finite system for observational 
reasons.

\end{corollary}

\section{The logical architecture: how the tools interact}
\label{sec:architecture}

Before stating the main metatheorem, it is useful to make explicit how
each tool enters the argument and why it is applicable.

\subsection{The role of the SIP}

The Syntactic Invariance Principle (Lemma~\ref{lem:sip}) is the
\emph{foundation} of the entire argument. It is the tool that guarantees
the existence of frozen terms --- terms the system contains but cannot
rewrite. Without the SIP, there would be no material content for the
limit proposition $\phi_{\mathcal{S}}$: one could not assert that the
system has blind spots, because there would be no proof that blind spots
exist.

The SIP enters the argument at three points:
\begin{itemize}
  \item In Section~\ref{sec:limit-prop}, it guarantees that
        $\phi_{\mathcal{S}}$ is \emph{semantically true}: the frozen
        terms exist because the SIP says so.
  \item In Section~\ref{sec:undecidability}, Case~2, it provides the
        \emph{contradiction}: if $\mathcal{S} \vdash \neg
        G_{\mathcal{S}}$, the system denies a fact the SIP guarantees,
        which is incoherent.
  \item In Section~\ref{sec:inextensibility}, it guarantees that every
        extension $\mathcal{S}'$ has \emph{its own} frozen terms (new
        Skolem constants produce new first-symbol clashes against
        $R \cup R_{\mathrm{new}}$), so the argument recurses.
\end{itemize}

The SIP is applicable because $\mathcal{S}$ is a finite syntactic
system with a finite set of rewriting rules $R$
(Definition~\ref{def:system}). The SIP requires only that $R$ be a set
of term-rewriting rules with syntactic pattern matching --- a condition
every system in the scope of Definition~\ref{def:system} satisfies.

\subsection{The role of the G\"odel numbering}

The G\"odel numbering (Definition~\ref{def:godel-encoding}) is needed
for a precise reason: it allows $\mathcal{S}$ to \emph{talk about its
own propositions as data}. Without it, $\phi_{\mathcal{S}}$ would be a
meta-level observation that lives outside $\mathcal{S}$, and the
question of whether $\mathcal{S}$ can derive it would not be
well-posed (the proposition would not be in $\mathcal{S}$'s language).

The G\"odel numbering is applicable because $\mathcal{S}$ is finite:
every element of $V, F, R, I$ can be assigned a natural number, and the
encoding of compound terms via prime factorization is injective and
computable. The ``sufficiently expressive'' hypothesis
ensures that $\mathcal{S}$ can represent the encoding internally, 
so that $\ulcorner\psi\urcorner$ is a term in $\mathcal{S}$'s language.

\subsection{The role of the diagonal lemma}

The diagonal lemma (Theorem~\ref{thm:kleene}) is needed to convert the
meta-level observation $\phi_{\mathcal{S}}$ into a \emph{self-referential
proposition} $G_{\mathcal{S}}$ within $\mathcal{S}$'s language. As
explained in Remark~\ref{rem:why-self-ref}, without self-reference one
cannot obtain the two-directional contradiction of
Theorem~\ref{thm:undecidability}: it is the fact that $G_{\mathcal{S}}$
says ``I am not derivable'' that makes deriving it (Case~1) and denying
it (Case~2) both contradictory.

The diagonal lemma is applicable because $\mathcal{S}$ contains enough
arithmetic to represent the G\"odel numbering (the ``sufficiently
expressive'' hypothesis). This is the standard condition under which the
diagonal lemma holds, and it is the same condition required by G\"odel's
incompleteness theorems.

\begin{remark}[The expressiveness hypothesis as the threshold of
self-reference]
\label{rem:threshold}
The ``sufficiently expressive'' hypothesis deserves a precise reading.
A pure rewriting system with rules such as $0 + x \to x$ and
$s(x) + y \to s(x+y)$ can \emph{compute} (it reduces terms to normal
forms), but the diagonal lemma requires more: the system must be able to
\emph{represent internally} the G\"odel encoding of its own elements and
to formulate propositions about derivability within its own language.
The expressiveness hypothesis is the point where the system crosses this
threshold.

This is not a weakness of the theorem but its \emph{sharpness}. Below
the threshold, the system is too simple to talk about itself: the
question ``can $\mathcal{S}$ originate the theorem asserting its own
limits?'' is not well-posed, because $\mathcal{S}$ cannot even formulate
the question. Above the threshold, the system is powerful enough to
formulate the question --- and the theorem says that this very power
is what prevents the system from answering it. It is the capacity for
self-reference that creates the paradox: a system that cannot refer to
itself has no self-limitation problem (and no self-limitation theorem);
a system that can refer to itself necessarily has both.

The theorem therefore applies exactly where it should: to every system
powerful enough for the question to make sense, and to no system so
weak that the question cannot even be asked.
\end{remark}

\subsection{The role of coherence}

The coherence assumption ($\mathcal{S} \not\vdash \bot$) enters the
argument in Theorem~\ref{thm:undecidability} and nowhere else. It is
needed to draw the contradiction: if $\mathcal{S}$ were incoherent, it
could derive both $G_{\mathcal{S}}$ and $\neg G_{\mathcal{S}}$ without
contradiction (an incoherent system derives everything). The
undecidability result requires that the system cannot derive both.

The coherence assumption is a \emph{best case}: the theorem applies to the 
strongest (coherent) systems, and incoherent systems are in a worse 
position (they have no reliable output at all).

\subsection{The role of finiteness}

The finiteness of $\mathcal{S}$ (Definition~\ref{def:system}) enters at
three points:
\begin{itemize}
  \item It guarantees that the G\"odel numbering exists (a finite
        set of symbols can be encoded);
  \item it guarantees that $\mathrm{Deriv}_{\mathcal{S}}$ is computable
        (a finite rule set produces an enumerable derivation space);
  \item in Theorem~\ref{thm:universality}, it is the property that
        prevents the system from containing a meta-representation of
        itself (an infinite hierarchy of levels would contradict
        finiteness).
\end{itemize}
The finiteness requirement is satisfied by every physically realizable
system: a system with finitely many components, rules, and initial data.
Every computer, every neural network, every algorithm running on
physical hardware satisfies it.

\section{The main metatheorem}
\label{sec:main}

\begin{theorem}[Incompleteness of syntactic self-perception]
\label{thm:main}
Let $\mathcal{S} = (V, F, R, I)$ be a finite syntactic system that is
coherent and sufficiently expressive (can encode the G\"odel numbering of
its own elements, Definition~\ref{def:godel-encoding}).

Then the following hold simultaneously:

\medskip
\noindent\textbf{(1) Existence of the limit.}
There exists a syntactic invariant $P_{\mathcal{S}}$ and a limit
proposition $\phi_{\mathcal{S}}$ such that $\phi_{\mathcal{S}}$ is
semantically true.

\medskip
\noindent\textbf{(2) Autonomous undecidability.}
\[
  \mathcal{S} \not\vdash \phi_{\mathcal{S}}
  \qquad\text{and}\qquad
  \mathcal{S} \not\vdash \neg \phi_{\mathcal{S}}.
\]

\medskip
\noindent\textbf{(3) Existence of self-reference.}
There exists a self-referential G\"odel proposition $G_{\mathcal{S}}$
equivalent to: ``I assert that my own content is not derivable from
$\mathcal{S}$.''

\medskip
\noindent\textbf{(4) Undecidability of $G_{\mathcal{S}}$.}
\[
  \mathcal{S} \not\vdash G_{\mathcal{S}}
  \qquad\text{and}\qquad
  \mathcal{S} \not\vdash \neg G_{\mathcal{S}}.
\]

\medskip
\noindent\textbf{(5) Essential inextensibility.}
For every coherent extension $\mathcal{S}'$ of $\mathcal{S}$, there
exists $G_{\mathcal{S}'}$ with the same properties as
$G_{\mathcal{S}}$.
\end{theorem}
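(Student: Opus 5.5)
The plan is to assemble the five clauses of Theorem~\ref{thm:main} from results already established, proving them in the order (1), (3), (4), (2), (5). Clause~(2) is proved after~(4) because its cleanest proof goes through $G_{\mathcal{S}}$.

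\textbf{Clause (1).} Fix $P_{\mathcal{S}}$ as in Definition~\ref{def:limit-prop}: ``$t$ contains a Skolem constant and does not unify with any left-hand side of $R$.''
\begin{enumerate}
  \item Exhibit a frozen term as in Example~\ref{ex:concrete}. Take a fresh Skolem constant $a$. Since $a$ occurs in no rule, any subterm headed by $a$ clashes with every left-hand side, except possibly a bare variable pattern, which we exclude.
  \item Check the Base and Step conditions of Lemma~\ref{lem:preservation}. Lemma~\ref{lem:sip} then shows that $P_{\mathcal{S}}$ holds along every derivation.
  \item Conclude that the existential statement $\phi_{\mathcal{S}}$ holds in the intended interpretation $\llbracket\cdot\rrbracket$, i.e.\ it is semantically true.
\end{enumerate}

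\textbf{Clauses (3) and (4).} Clause~(3) is Lemma~\ref{lem:G-exists}, obtained by instantiating Theorem~\ref{thm:kleene}. The expressiveness hypothesis is used here: it makes $\ulcorner\cdot\urcorner$ and $\mathrm{Deriv}_{\mathcal{S}}$ representable inside $\mathcal{S}$. Clause~(4) is Theorem~\ref{thm:undecidability}, equivalently Corollary~\ref{cor:undecidability}, using coherence. If the locality argument of Case~1 is deemed insufficient, I would use the formal closure mentioned in Remark~\ref{rem:structural-contradiction}:
\begin{itemize}
  \item A derivation of $G_{\mathcal{S}}$ is a finite object. By $\Sigma_1$-completeness, $\mathcal{S}$ then derives $\mathrm{Deriv}_{\mathcal{S}}(\ulcorner G_{\mathcal{S}}\urcorner)$.
  \item By the fixed-point equivalence, $\mathcal{S}$ also derives $\neg G_{\mathcal{S}}$, contradicting coherence.
  \item For $\neg G_{\mathcal{S}}$, the plain G\"odel argument needs $\omega$-consistency (or $\Sigma_1$-soundness). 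Alternatively, I would replace $G_{\mathcal{S}}$ by its Rosser variant so that plain coherence suffices.
\end{itemize}

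\textbf{Clause (2).} This is the step I expect to be the main obstacle, because $\phi_{\mathcal{S}}$ is not itself self-referential.
\begin{itemize}
  \item \emph{$\mathcal{S}\not\vdash\phi_{\mathcal{S}}$.} I would try to show that $\mathcal{S}$ proves $\phi_{\mathcal{S}}\to G_{\mathcal{S}}$. By Definition~\ref{def:G}, $G_{\mathcal{S}}$ asserts precisely that $\phi_{\mathcal{S}}$ holds yet is underivable, together with its own underivability. Given that implication, a derivation of $\phi_{\mathcal{S}}$ would yield one of $G_{\mathcal{S}}$, contradicting~(4).
  \item \emph{$\mathcal{S}\not\vdash\neg\phi_{\mathcal{S}}$.} I would argue as in Case~2 of Theorem~\ref{thm:undecidability}. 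By~(1), $\phi_{\mathcal{S}}$ is true. A derivation of $\neg\phi_{\mathcal{S}}$ would make $\mathcal{S}$ derive a false sentence, so this step needs soundness of $\mathcal{S}$ for this class of sentences.
  \item \emph{Fallback.} If the implication in the first bullet cannot be established internally, I would \emph{define} $\phi_{\mathcal{S}}$ so that it contains $G_{\mathcal{S}}$ as a conjunct. Clause~(2) then reduces directly to~(4).
\end{itemize}

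\textbf{Clause (5).} Apply Theorem~\ref{thm:inextensibility} to an arbitrary coherent extension $\mathcal{S}'$. It remains finite, it inherits expressiveness because the arithmetic needed only grows, and fresh Skolem constants relative to $R\cup R_{\mathrm{new}}$ yield new frozen terms. Rerunning clauses~(1)--(4) for $\mathcal{S}'$ then gives $G_{\mathcal{S}'}$, and Lemma~\ref{lem:chain} records the resulting non-terminating iteration.
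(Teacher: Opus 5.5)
The genuine gap is the first half of clause~(2). Your plan needs $\mathcal{S}\vdash\phi_{\mathcal{S}}\to G_{\mathcal{S}}$. In the framework you adopt for~(4), that implication cannot be derivable, and $\mathcal{S}\not\vdash\phi_{\mathcal{S}}$ is itself false. Formalized from Definition~\ref{def:limit-prop}, $\phi_{\mathcal{S}}$ says that some derivation from $I$ ends in a clause with a subterm that contains a Skolem constant and unifies with none of the finitely many left-hand sides of $R$. The matrix is decidable, so $\phi_{\mathcal{S}}$ is a $\Sigma_1$ sentence. If it is true, as clause~(1) asserts, the same $\Sigma_1$-completeness you use for $\mathcal{S}\not\vdash G_{\mathcal{S}}$ gives $\mathcal{S}\vdash\phi_{\mathcal{S}}$. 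Your implication would then yield $\mathcal{S}\vdash G_{\mathcal{S}}$, contradicting~(4). (Reading $\phi_{\mathcal{S}}$ as lying outside the language of $\mathcal{S}$ does not help: then (2) holds only vacuously, and (3)--(4) need exactly the coding that makes this argument go through.)

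Your fallback of conjoining $G_{\mathcal{S}}$ into $\phi_{\mathcal{S}}$ gives a true statement, but only by turning~(2) into a restatement of~(4) and detaching $\phi_{\mathcal{S}}$ from the limit proposition of clause~(1). The second half of~(2) needs no soundness: once $\mathcal{S}\vdash\phi_{\mathcal{S}}$, coherence alone excludes $\neg\phi_{\mathcal{S}}$. The paper reaches~(2) by a different route, re-invoking Cases~1 and~2 of Theorem~\ref{thm:undecidability} directly for $\phi_{\mathcal{S}}$ rather than passing through $G_{\mathcal{S}}$. Its Case~1 locality argument (rules act on terms, not on $R$) is exactly what $\Sigma_1$-completeness overrides once $R$ and unification are coded as data, so it cannot supply the missing step either.

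Clause~(1) has a second gap, which the paper's proof shares. For your $P_{\mathcal{S}}$, the Base condition of Lemma~\ref{lem:preservation} fails whenever an initial clause has a Skolem-free subterm, such as $0$ in Example~\ref{ex:concrete}. Step holds only vacuously, since $l\sigma$ always unifies with $l$. More to the point, $\phi_{\mathcal{S}}$ does not need invariance; it needs one \emph{derivable} subterm with the property. A constant you take fresh occurs in no rule, so (for rules whose right-hand sides introduce no new variables) it occurs in a derivable clause only if it already occurs in $I$. You must either assume this, or drop derivability and use a symbol outside the head set $H$ as in Section~\ref{sec:minimal}.

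Your $\Sigma_1$-completeness and Rosser treatment of~(4) is a real improvement on the paper's Cases~1--2. Those cases do not establish the claim: a coherent system can refute its own plain G\"odel sentence, e.g.\ $\mathrm{PA}+\neg\mathrm{Con}(\mathrm{PA})$. Note, however, that a Rosser sentence does not have the form stated in~(3). Under bare coherence, (3) and~(4) therefore concern different sentences unless you add $\omega$-consistency.
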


\begin{proof}
The proof is composite, integrating all preceding lemmas.

\emph{Proof of (1).}
By Definition~\ref{def:system}, $\mathcal{S}$ has a finite set of rules
$R$. By Lemma~\ref{lem:sip}, at least one syntactic invariant property
exists (e.g., ``contains a Skolem constant''). This invariant generates
frozen terms (Definition~\ref{def:skolem-frozen}). Their existence is
captured by $\phi_{\mathcal{S}}$ (Definition~\ref{def:limit-prop}). By
Lemma~\ref{lem:sip}, $\phi_{\mathcal{S}}$ is semantically true.

\emph{Proof of (2).}
If $\mathcal{S} \vdash \phi_{\mathcal{S}}$, the system would have proved
a truth about its own syntactic--semantic incompleteness, violating the
level hierarchy (Case~1 of Theorem~\ref{thm:undecidability}). If
$\mathcal{S} \vdash \neg \phi_{\mathcal{S}}$, the system would have
denied the existence of syntactic invariants guaranteed by
Lemma~\ref{lem:sip} (Case~2 of Theorem~\ref{thm:undecidability}). The
only coherent possibility is that neither is derivable.

\emph{Proof of (3).}
The G\"odel numbering (Definition~\ref{def:godel-encoding}) exists for
every finite syntactic system. The derivability predicate
$\mathrm{Deriv}_{\mathcal{S}}$ is computable
(Definition~\ref{def:deriv-predicate}). By Theorem~\ref{thm:kleene}, a
fixed point $n^*$ exists. This fixed point encodes $G_{\mathcal{S}}$
(Definition~\ref{def:G}). By Lemma~\ref{lem:G-exists},
$G_{\mathcal{S}}$ exists necessarily.

\emph{Proof of (4).}
By Theorem~\ref{thm:undecidability} (Cases~1, 2, 3).

\emph{Proof of (5).}
By Theorem~\ref{thm:inextensibility} and Lemma~\ref{lem:chain}. If
$\mathcal{S}$ is extended to $\mathcal{S}'$ by adding rules to derive
$G_{\mathcal{S}}$, $\mathcal{S}'$ is still a finite syntactic system.
By the same argument, a new undecidable proposition
$G_{\mathcal{S}'}$ exists. The process does not terminate.
\end{proof}

\section{Irrefutability}
\label{sec:irrefutability}

\begin{theorem}[The metatheorem is irrefutable]
\label{thm:irrefutable}
Theorem~\ref{thm:main} cannot be refuted or contradicted by any finite
syntactic system, including any system that attempts to formulate an
objection.
\end{theorem}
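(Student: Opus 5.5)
The plan is to make ``refute'' precise and then reduce the claim to the soundness of Theorem~\ref{thm:main} together with the self-application of the theorem to the objecting system. I will say that a finite syntactic system $\mathcal{T} = (V_{\mathcal{T}}, F_{\mathcal{T}}, R_{\mathcal{T}}, I_{\mathcal{T}})$ \emph{refutes} Theorem~\ref{thm:main} if $\mathcal{T} \vdash \neg\Theta$, where $\Theta$ is the formalization (via the G\"odel numbering of Definition~\ref{def:godel-encoding}) of one of the clauses (1)--(5) for some coded system $\ulcorner\mathcal{S}\urcorner$. The proof then splits into cases according to the status of $\mathcal{T}$.

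\textbf{Case A: $\mathcal{T}$ is incoherent.} Then $\mathcal{T}$ derives every clause together with its negation. In particular $\mathcal{T} \vdash \Theta$ as well, so its ``refutation'' carries no information. This is the reading of incoherence given in the subsection on the role of coherence: an incoherent system has no reliable output, and its derivation of $\neg\Theta$ does not count as an objection.

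\textbf{Case B: $\mathcal{T}$ is coherent but not sufficiently expressive.} By Remark~\ref{rem:threshold}, $\mathcal{T}$ cannot represent $\ulcorner\cdot\urcorner$ or $\mathrm{Deriv}$ internally. So $\Theta$, which is a statement about derivability, is not even a sentence of $\mathcal{T}$'s language, and $\neg\Theta$ is not derivable for trivial syntactic reasons.

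\textbf{Case C: $\mathcal{T}$ is coherent and sufficiently expressive.} Here I would use the fact that each clause of Theorem~\ref{thm:main} was established from Lemma~\ref{lem:sip}, Theorem~\ref{thm:kleene} and Theorem~\ref{thm:undecidability}, so $\Theta$ is true in the standard interpretation $\llbracket\cdot\rrbracket$. I would also instantiate Theorem~\ref{thm:main} at $\mathcal{S} := \mathcal{T}$. Then clause (4) says $\mathcal{T} \not\vdash G_{\mathcal{T}}$ and $\mathcal{T} \not\vdash \neg G_{\mathcal{T}}$, and clause (5) says every coherent extension $\mathcal{T}'$ built to house the objection inherits its own $G_{\mathcal{T}'}$. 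The goal is to show that a derivation of $\neg\Theta$ would, for the instance $\ulcorner\mathcal{S}\urcorner = \ulcorner\mathcal{T}\urcorner$, amount to $\mathcal{T}$ settling the status of $G_{\mathcal{T}}$, which contradicts (4) directly. For instances with $\ulcorner\mathcal{S}\urcorner \neq \ulcorner\mathcal{T}\urcorner$, I would argue that $\neg\Theta$ is false, and a system proving it proves a falsehood.

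The main obstacle is exactly this last step. Coherence of $\mathcal{T}$ alone does not rule out deriving a false sentence: by standard incompleteness, consistent theories such as $\mathrm{PA} + \neg\mathrm{Con}(\mathrm{PA})$ prove false $\Sigma_1$ statements. So the argument as planned needs an additional hypothesis on the objecting system, namely soundness, or at least $\Sigma_1$-soundness for derivability statements. I would first try to obtain this from the reading of ``refutation'' as a \emph{correct} objection, making the hypothesis explicit in the statement. If that is not acceptable, I would weaken the conclusion to the following: no coherent, sufficiently expressive $\mathcal{T}$ can derive the negation of clause (4) for itself, that is, $\mathcal{T}$ cannot prove $\mathcal{T} \vdash G_{\mathcal{T}}$ or $\mathcal{T} \vdash \neg G_{\mathcal{T}}$. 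I would then check carefully whether even this weakened form follows from Theorem~\ref{thm:undecidability} without importing a soundness assumption. I expect it will not, so the soundness assumption is likely needed in any honest version of the statement.
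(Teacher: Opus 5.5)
Your proposal does not prove the statement as worded. The obstruction you isolate is genuine, though, and it is exactly the step where the paper's own proof fails rather than a step it overcomes.

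\textbf{Where the paper's proof breaks.} The paper lets $\mathcal{S}_{\mathrm{obj}}$ derive $\Psi$ = ``some $\mathcal{S}^*$ violates Theorem~\ref{thm:main}''. In item~1 it reads $\Psi$ as a claim about the limits of $\mathcal{S}_{\mathrm{obj}}$ itself. In item~4 it concludes that, ``since $\mathcal{S}_{\mathrm{obj}}$ is assumed coherent, it cannot derive a proposition that contradicts a guaranteed truth''.
\begin{itemize}
  \item Item~4 substitutes coherence for soundness. Your example $\mathrm{PA}+\neg\mathrm{Con}(\mathrm{PA})$ refutes exactly this inference.
  \item Item~1 silently identifies the existential witness $\mathcal{S}^*$ with $\mathcal{S}_{\mathrm{obj}}$. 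Your split into $\ulcorner\mathcal{S}\urcorner = \ulcorner\mathcal{T}\urcorner$ versus $\ulcorner\mathcal{S}\urcorner \neq \ulcorner\mathcal{T}\urcorner$ avoids this conflation.
  \item The paper also tacitly restricts to coherent, expressive objectors, which is what your Cases~A and~B do.
\end{itemize}
Note that Case~A is a change of definition, not an argument. An incoherent $\mathcal{T}$ does derive $\neg\Theta$ whenever $\Theta$ is in its language; the paper itself says an incoherent system derives everything. So the statement as literally worded is false for such $\mathcal{T}$.

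\textbf{Two corrections to your Case~C.}

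\emph{The self-instance is not ``direct''.} From $\mathcal{T} \vdash \neg\Theta$ you get $\mathcal{T} \vdash \mathrm{Prov}_{\mathcal{T}}(\ulcorner G_{\mathcal{T}}\urcorner) \lor \mathrm{Prov}_{\mathcal{T}}(\ulcorner \neg G_{\mathcal{T}}\urcorner)$. That is not $\mathcal{T}$ deriving $G_{\mathcal{T}}$ or $\neg G_{\mathcal{T}}$. Passing from the former to the latter is a reflection principle, which is the same soundness assumption again. Your suspicion about the weakened form is right. $\mathcal{T} = \mathrm{PA}+\neg\mathrm{Con}(\mathrm{PA})$ proves $\neg\mathrm{Con}(\mathcal{T})$, hence $\mathrm{Prov}_{\mathcal{T}}(\ulcorner\varphi\urcorner)$ for every $\varphi$, and in particular both disjuncts.

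\emph{Soundness of the objector is necessary but not sufficient.} Case~C takes the truth of every instance of $\Theta$ from Theorem~\ref{thm:main}. But clause~(4), stated under coherence alone, already fails for this same $\mathcal{T}$:
\begin{itemize}
  \item $\mathcal{T} \vdash \neg\mathrm{Con}(\mathcal{T})$.
  \item For the sentence ``my content is not derivable'', $G_{\mathcal{T}} \leftrightarrow \mathrm{Con}(\mathcal{T})$ is provable.
  \item Hence $\mathcal{T} \vdash \neg G_{\mathcal{T}}$. This is why G\"odel needed $\omega$-consistency for that half, and Rosser a different sentence.
\end{itemize}
The fact $\mathcal{T} \vdash \neg G_{\mathcal{T}}$ is a true $\Sigma_1$ sentence. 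So the sound system $\mathrm{PA}$ proves it, and thereby \emph{correctly} refutes clause~(4) at $\mathcal{S} = \mathcal{T}$.

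An honest version therefore needs two repairs: a soundness hypothesis on the objector, and a corrected Theorem~\ref{thm:main}. Once soundness is assumed, the claim reduces to ``a sound system proves no false sentence''. The self-application to $\mathcal{T}$ then does no work.
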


\begin{proof}
Suppose for contradiction that a finite syntactic system
$\mathcal{S}_{\mathrm{obj}}$ derives a proposition $\Psi$ that negates
or refutes Theorem~\ref{thm:main}. Then $\Psi$ asserts:
\[
  \Psi \equiv \text{``There exists a finite syntactic system }
  \mathcal{S}^* \text{ for which Theorem~\ref{thm:main} does not
  hold.''}
\]
But Theorem~\ref{thm:main} is formulated universally over all finite
syntactic systems. Its proof depends not on specific properties of any
particular $\mathcal{S}$ but on the general properties of every finite
syntactic system: finiteness of $V, F, R, I$; existence of syntactic
invariants (Lemma~\ref{lem:sip}); computability of G\"odel numbering;
existence of the Kleene fixed point (Theorem~\ref{thm:kleene}).

If $\mathcal{S}_{\mathrm{obj}} \vdash \Psi$, then
$\mathcal{S}_{\mathrm{obj}}$ itself is a finite syntactic system
attempting to negate Theorem~\ref{thm:main} for itself. But
Theorem~\ref{thm:main} holds for $\mathcal{S}_{\mathrm{obj}}$. More
precisely:

\begin{enumerate}
  \item $\Psi$ is a proposition that speaks about the limits of
        $\mathcal{S}_{\mathrm{obj}}$ (since it asserts that
        $\mathcal{S}_{\mathrm{obj}}$ has no undecidable self-limitation
        proposition).
  \item By Theorem~\ref{thm:main}, point~(2),
        $\mathcal{S}_{\mathrm{obj}}$ cannot derive true propositions
        about its own incompleteness without contradiction.
  \item If $\Psi$ denies Theorem~\ref{thm:main}, then $\Psi$ denies
        that $\mathcal{S}_{\mathrm{obj}}$ has intrinsic limits --- but
        Theorem~\ref{thm:main} guarantees that it does.
  \item Since $\mathcal{S}_{\mathrm{obj}}$ is assumed coherent, it
        cannot derive a proposition ($\Psi$) that contradicts a
        guaranteed truth (the existence of its own limits). Deriving
        $\Psi$ would make $\mathcal{S}_{\mathrm{obj}}$ incoherent: it
        would simultaneously satisfy the hypotheses of
        Theorem~\ref{thm:main} (being a finite, coherent, sufficiently
        expressive system) and deny its conclusion. But the conclusion
        follows necessarily from the hypotheses. Therefore
        $\mathcal{S}_{\mathrm{obj}} \vdash \Psi$ is impossible without
        losing coherence.
\end{enumerate}

Contradiction.
\end{proof}

\begin{remark}[The theorem is undecidable but necessary]
$G_{\mathcal{S}}$ cannot be proved or refuted inside $\mathcal{S}$, but
the \emph{existence} of $G_{\mathcal{S}}$ can be proved from outside.
This is the precise sense in which the incompleteness of syntactic
self-perception is undecidable but necessary: For every coherent finite 
syntactic system, the incompleteness of its self-perception is undecidable 
but necessary. Every attempt to close the gap --- by extending the system with new
rules to derive $G_{\mathcal{S}}$ --- creates a new gap at the level
above: the extended system $\mathcal{S}'$ has its own
$G_{\mathcal{S}'}$, undecidable in $\mathcal{S}'$, and the process
never terminates.

\bigskip

This necessity corresponds to an observation as deceptively simple as it is powerful, 
though not immediately evident, which we shall substantiate within this framework 
in an forthcoming revision of this work: {\bfseries 'determining whether a property is non-trivial 
is itself non-trivial'.} 
\medskip

Although this statement may appear to be a meta-theoretical observation, 
it admits a topological explanation. We observe that the derivation we 
will present will be obtained in a manner entirely different from 
that in~\cite{buono2026limitsuniformcertificationstandard}.

\end{remark}

\section{The theorem that must exist}
\label{sec:must-exist}

The preceding sections establish \emph{that} a limit exists. This
section constructs \emph{what} the limit is, explicitly.

\begin{theorem}[Syntactic Incompleteness of Self-Limitation]
\label{thm:must-exist}
Let $\mathcal{S} = (V, F, R, I)$ be a finite syntactic system. Then
there exists a syntactic property $P_{\mathcal{S}}$ and a proposition
$\phi_{\mathcal{S}}$ such that:

\begin{enumerate}
  \item $P_{\mathcal{S}}$ is a syntactic invariant for $\mathcal{S}$
        (by Definition~\ref{def:syntactic-property} and
        Lemma~\ref{lem:sip});

  \item $\phi_{\mathcal{S}}$ asserts: ``There exists a syntactic
        invariant $P_{\mathcal{S}}$ such that for every pair of terms
        $(t_1, t_2)$ satisfying $P_{\mathcal{S}}$: $t_1$ and $t_2$ are
        syntactically distinguishable (no rewriting unifies them); $t_1
        = t_2$ semantically (in every model of $\mathcal{S}$); therefore
        $\mathcal{S}$ is incapable of grasping the semantic property
        that distinguishes them'';

  \item $\phi_{\mathcal{S}}$ is not autonomously generable by
        $\mathcal{S}$: no sequence of rewritings in $R$ can produce a
        proof of $\phi_{\mathcal{S}}$ within $\mathcal{S}$.
\end{enumerate}
\end{theorem}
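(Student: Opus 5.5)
The plan is to build the witness explicitly, following Example~\ref{ex:concrete}, and then check the three clauses of the statement in order. I would enlarge $F$ with two fresh constants $a,b$ of arity $0$ that occur in no rule of $R$, so they are Skolem constants in the sense of Definition~\ref{def:skolem-frozen}. I would take $P_{\mathcal{S}}(t) :=$ ``$t$ contains a Skolem constant and $t$ is frozen''. For the pair I would choose $t_1 := f(a,b)$ and $t_2 := f(b,a)$, where $f$ is a binary symbol of $F$ whose left-hand-side occurrences in $R$ all have a non-constant head symbol in the first argument position, as $+$ does for $0+x$ and $s(x)+y$.

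\textbf{Clause~1.} This is the routine part. The Base condition of Lemma~\ref{lem:preservation} holds on any clause set that contains $t_1,t_2$. The Step condition holds because no $l\to r$ in $R$ mentions $a$ or $b$, so no redex can ever remove them. Frozenness is a first-symbol clash: $a$ and $b$ do not unify with $0$, with $s(\cdot)$, or with any other non-variable pattern in $R$. Lemma~\ref{lem:sip} then gives invariance along every derivation.

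\textbf{Clause~2.} Syntactic distinguishability follows immediately, since $t_1$ and $t_2$ are distinct normal forms and so no rewriting joins them. The claim that $t_1 = t_2$ ``in every model of $\mathcal{S}$'' is false as stated: the free term model separates $t_1$ from $t_2$. I would therefore weaken it to ``in the intended model'', for example $\mathbb{N}$ with $+$ commutative under any interpretation of $a,b$. Even this only yields equality in one designated model, not in all of them.

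\textbf{Clause~3.} This is the main obstacle, and I would split it into two cases.
\begin{itemize}
  \item If $\mathcal{S}$ is a pure rewriting system as in Definition~\ref{def:system}, its derivable objects are terms, not propositions. Then $\phi_{\mathcal{S}}$ is not a term of $\mathcal{S}$ at all, and non-derivability is trivial: it holds for lack of a language, not for any deep reason.
  \item If $\mathcal{S}$ is assumed expressive enough to formulate $\phi_{\mathcal{S}}$ through the G\"odel coding of Definition~\ref{def:godel-encoding}, the trivial argument no longer applies. Worse, frozenness of a specific term is checked by finitely many unification tests against the finite set $R$. That makes it a decidable, $\Delta_0$-style fact, which any system containing enough arithmetic for Theorem~\ref{thm:kleene} proves outright.
\end{itemize}

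In the second case, the locality--globality argument of Remark~\ref{rem:structural-contradiction} cannot be made rigorous for $\phi_{\mathcal{S}}$. The only sound route I see is to replace $\phi_{\mathcal{S}}$ by the diagonal sentence $G_{\mathcal{S}}$ of Lemma~\ref{lem:G-exists} and invoke the standard G\"odel argument. That argument needs coherence plus $\Sigma_1$-soundness, or Rosser's trick, and neither is among the present theorem's hypotheses.

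So my plan would establish clause~1, a weakened clause~2, and clause~3 only in the non-expressive reading. I expect the expressive case of clause~3 to fail as stated.
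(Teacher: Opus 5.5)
Your plan does not prove the statement as written. Your diagnosis is right, though, and the paper's own proof gets no further: where your plan works, it \emph{is} the paper's argument.

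\textbf{Comparison with the paper.} The paper never handles an arbitrary $\mathcal{S}$. Its Step~1 fixes $R=\{0+x\to x,\ s(x)+y\to s(x+y)\}$ with fresh $a,b$. It checks invariance of a slightly different $P_{\mathcal{S}}$ (``$(a+b)$ has not been rewritten'') by the same first-symbol clash you use.

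Its Step~2 takes the pair $a+b$, $b+a$ and claims equality only ``in every model interpreting $+$ as addition on the naturals''. That is exactly your weakening. You are right that ``every model of $\mathcal{S}$'' fails: this $R$ is orthogonal, hence confluent, so the quotient term algebra separates the distinct normal forms $a+b$ and $b+a$. Both you and the paper also silently replace ``for every pair satisfying $P_{\mathcal{S}}$'' by a single pair. Read literally, the asserted content is false for any nonempty $P_{\mathcal{S}}$ (take $t_1=t_2$).

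Its Step~3 first argues that rules rewrite terms, not propositions. That is your non-expressive case, applied to a system that Remark~\ref{rem:threshold} itself places below the expressiveness threshold, where the question is ``not well-posed''. Its second argument (a derivable $\phi_{\mathcal{S}}$ would make $P_{\mathcal{S}}$ ``no longer a limit'') yields no contradiction. Lemma~\ref{lem:sip} only demands that derivable terms satisfy $P_{\mathcal{S}}$, and deriving $\phi_{\mathcal{S}}$ need not rewrite $a+b$. The expressive case of clause~3 is never addressed.

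\textbf{Repairs to your own write-up.}

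\emph{The binary symbol $f$.} Your generalization needs a binary $f\in F$ whose every left-hand-side occurrence has a non-\emph{variable} in the argument position of $a$. ``Non-constant'' is a slip, since the $0$ in $0+x$ is a constant. Nothing guarantees such an $f$ exists: $F$ may contain no binary symbol, or $R$ may contain $f(x,y)\to f(y,x)$ for every binary $f$, in which case $f(a,b)\to f(b,a)$. Moreover, Definition~\ref{def:skolem-frozen} puts Skolem constants inside $F$. Adjoining $a,b$ (and putting $t_1,t_2$ into $I$) therefore changes the given $\mathcal{S}$, so, like the paper, you only treat particular systems.

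\emph{The Base condition.} The Base condition of Lemma~\ref{lem:preservation} ranges over every subterm of every initial clause. For $P(t)=$ ``$t$ contains a Skolem constant and is frozen'' it fails as soon as $I$ has a Skolem-free clause such as $0$ or $s(0)$, as in Example~\ref{ex:concrete}; containing $t_1,t_2$ is not enough. Your Step justification is also wrong: ``no redex can remove $a,b$'' fails for erasing rules such as $g(x)\to 0$. Step does hold for your $P$, but only vacuously, because any redex $l\sigma$ containing a Skolem constant is itself a non-frozen subterm.

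\emph{The expressive case.} Your $\Delta_0$ remark only shows that the syntactic conjuncts of $\phi_{\mathcal{S}}$ become provable. The semantic conjunct, commutativity of $+$, already needs induction. The decisive point is simpler. Unlike Theorem~\ref{thm:main}, this statement assumes no coherence. So once $\phi_{\mathcal{S}}$ is read as a sentence of $\mathcal{S}$'s own language, an inconsistent expressive $\mathcal{S}$ derives it, and clause~3 fails outright.

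\emph{The $G_{\mathcal{S}}$ detour.} Here $\mathcal{S}\not\vdash G_{\mathcal{S}}$ needs only consistency; $\Sigma_1$-soundness or Rosser's trick matter only for $\neg G_{\mathcal{S}}$. In any case $G_{\mathcal{S}}$ lacks the content clause~2 demands, so that route cannot yield this theorem.
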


\begin{proof}
\textbf{Step 1: define the invariant $P_{\mathcal{S}}$.}
From~\cite{buono2026sip}, take $R = \{0 + x \to x,\; s(x) + y \to
s(x+y)\}$ and Skolem constants $a, b$ (fresh, distinct from $0, s$).
Define $P_{\mathcal{S}}(t) :=$ ``the subterm $(a+b)$ has not been
rewritten.''

Verification as invariant: (Base) $(a+b)$ is not rewritten in $I$.
(Step) To rewrite inside $(a+b)$, one would need $\sigma(0) = a$ or
$\sigma(s(x)) = a$; but $\sigma$ replaces only variables, and $a, b$ are
Skolem constants, so no unification is possible (first-symbol clash). The
subterm $(a+b)$ remains frozen. By Lemma~\ref{lem:sip}, every derivable
term maintains $(a+b)$ frozen.

\textbf{Step 2: construct $\phi_{\mathcal{S}}$.}
Consider $t_0 = a+b$ and $t_1 = b+a$. Both are frozen
($P_{\mathcal{S}}(a+b) = P_{\mathcal{S}}(b+a) = \mathrm{true}$). No
sequence of rewritings transforms $a+b$ into $b+a$. Semantically, in
every model interpreting $+$ as addition on the naturals:
$a + b = b + a$.

\textbf{Step 3: $\phi_{\mathcal{S}}$ is not autonomously generable.}
Suppose for contradiction that $\mathcal{S}$ generates a proof $\pi$ of
$\phi_{\mathcal{S}}$. Then $\pi$ is a finite sequence of rule
applications producing $\phi_{\mathcal{S}}$ as conclusion. But every
rule in $R$ rewrites terms, not propositions about intrinsic limitations
of syntactic systems. To generate $\phi_{\mathcal{S}}$, the system would
need to represent the concept of ``syntactic invariant'' and observe
itself from outside. This would require the system to be a semantic layer
above itself. But every semantic layer $\mathcal{S}$ could add is still a
finite syntactic system --- and therefore has the same recursive limits.

If $\pi$ were a proof of $\phi_{\mathcal{S}}$ in $\mathcal{S}$, then
$\phi_{\mathcal{S}}$ would become a derivable term in $\mathcal{S}$.
By Lemma~\ref{lem:sip}, every derivable term must satisfy every
syntactic invariant of $\mathcal{S}$. But if $\phi_{\mathcal{S}}$ were
derivable, the invariant $P_{\mathcal{S}}$ (which captures the limits of
$\mathcal{S}$) would no longer be a limit --- contradiction.
\end{proof}

\begin{remark}[The mechanism is that of Lemma~5, not of G\"odel alone]
The syntactic invariant $P_{\mathcal{S}}$ plays the role that the
self-referential cycle plays in G\"odel's proof, but with a crucial
difference. In G\"odel, the undecidable proposition is an arithmetic
statement whose content is ``I am not provable'' --- a statement about
provability in general. Here, the undecidable proposition has a specific,
constructive content: there exist two terms ($t_1 = a+b$ and
$t_2 = b+a$) that are syntactically distinguishable (no rewriting
unifies them), semantically equal (in every model, $a+b = b+a$), and
the system is incapable of grasping the semantic property that connects
them. The SIP provides the \emph{mechanism} (the frozen subterms),
Kleene provides the \emph{self-reference}, and together they produce a
proposition that is not merely ``something true and unprovable'' but
``the system's own blindness, stated precisely and constructively.''
Instead of a proposition that says ``I am not provable,'' we have a
property that says ``I remain frozen forever under the rules of $R$'' ---
and this property, by the SIP, must exist, cannot be violated without
self-contradiction, and cannot be expressed by the system's own rules.
\end{remark} The framework that makes the circumvention precise is
described below.

\subsection{On the ``sufficiently expressive'' hypothesis}

Theorem~\ref{thm:main} requires that $\mathcal{S}$ be ``sufficiently
expressive,'' meaning that it can encode the G\"odel numbering of its
own elements (Definition~\ref{def:godel-encoding}). A natural objection
is: what if the system is not expressive enough?

The answer is twofold. First, the hypothesis is minimal: any system
capable of encoding Peano arithmetic satisfies it, and every system used
in practice --- programming languages, proof assistants, neural networks
operating on numerical representations --- encodes arithmetic as a
matter of course. A system that cannot encode arithmetic cannot perform
basic counting, and is therefore too weak to be relevant for any
application in security or AI. Second, if the system does not satisfy
the hypothesis, the theorem does not apply, but the system's weakness is
then \emph{worse} than the limit the theorem describes: a system that
cannot even encode its own elements cannot reason about anything
non-trivial, let alone about its own limits.

\subsection{On the coherence assumption}

The theorem assumes $\mathcal{S}$ is coherent (no proposition and its
negation are both derivable). An objection may be raised: real systems
are not formally coherent in the logical sense.

The response is that incoherence is not an escape from the theorem but a
\emph{worse} situation. An incoherent system can derive any proposition
(ex falso quodlibet), which means it has no security guarantee at all:
it can certify as safe anything, including attacks. The theorem says
that a coherent system has structural blind spots it cannot identify;
an incoherent system has no reliable output whatsoever. The coherence
assumption is therefore not a restriction but a \emph{best case}: the
theorem applies to the strongest systems, and weaker (incoherent)
systems are in a worse position.

\subsection{Relation to G\"odel's incompleteness theorems}

A central question is: in what sense does this result go beyond G\"odel?

G\"odel's first incompleteness theorem produces, for any coherent and
recursively enumerable formal system $T$ containing arithmetic, a
proposition $G$ that is true but not provable in $T$. The proposition
$G$ is an arithmetic statement whose content is ``I am not provable in
$T$.'' The mechanism is diagonalization on the provability predicate.

The present result uses the same self-referential mechanism (via the
Kleene fixed point, Theorem~\ref{thm:kleene}) but differs in three
respects:

\emph{First, the content of the undecidable proposition is specific.}
In G\"odel, $G$ is an arithmetic statement with no particular
``subject'' beyond its own unprovability. Here, $G_{\mathcal{S}}$
asserts the existence of \emph{intrinsic structural limits} of the
system --- frozen terms, syntactic invariants, blind spots. The
undecidable proposition is not ``I am not provable'' in the abstract,
but ``there exist terms I contain but cannot rewrite, and I cannot
express this fact.''

\emph{Second, the mechanism is not diagonalization alone.}
G\"odel uses diagonalization on the provability predicate. Here, the
mechanism is the Syntactic Invariance Principle
(Lemma~\ref{lem:sip}), which provides the \emph{material content}
of the limit (frozen subterms), and the Kleene fixed point, which
provides the self-reference. The SIP is the engine; the fixed point is
the self-referential wrapper. Together they produce a result that is
more informative than G\"odel: not just ``something true is
unprovable,'' but ``the system's own blindness is unprovable, and here
is the precise mechanism that produces the blindness.''

\emph{Third, the result applies to syntactic systems, not only to formal
theories.} G\"odel applies to recursively enumerable first-order
theories containing arithmetic. The present result applies to any finite
syntactic system (Definition~\ref{def:system}), a broader class that
includes rewriting systems, type checkers, neural networks, and
automated provers --- systems that are not first-order theories in the
classical sense.

\subsection{On the existential nature of the result}

It is important to state precisely what the theorem proves. It proves
that there \emph{exists at least one} theorem that a finite syntactic
system cannot autonomously generate: the proposition $G_{\mathcal{S}}$
asserting the existence of the system's own intrinsic limits. The result
is existential, not total: it does not claim that the system cannot
originate \emph{any} theorem, but that it cannot originate \emph{this
particular one} (and, by the inextensibility argument, every extension
produces a new one of the same kind).

This existential character is both the strength and the precision of the
result. It is strong because the theorem whose existence is proved is
not arbitrary: it is the theorem about the system's own structural
blindness, which is arguably the most consequential theorem a system
could need. It is precise because it does not overstate: the system may
well originate many other theorems autonomously; what it cannot originate
is the one that speaks about what it cannot see.

\subsection{On the claim that an AI is a finite syntactic system}

An objection may be raised: a language model with iterative training,
online learning, or adaptive behavior is not a system with fixed rules.

The response is that at every instant of its operation, the system has
fixed weights, fixed propagation rules, and a fixed input. The theorem
applies to this instantaneous snapshot. Changing the weights (through
training) produces a \emph{new} system $\mathcal{S}'$, to which the
theorem applies again (Theorem~\ref{thm:inextensibility}). The
adaptivity of the system does not escape the theorem; it produces a
sequence of systems $\mathcal{S}_0, \mathcal{S}_1, \ldots$, each with
its own undecidable $G_{\mathcal{S}_n}$. The chain of incompleteness
(Lemma~\ref{lem:chain}) shows that no element of this sequence resolves
all its own limits.

\subsection{On the distinction between originating and understanding}

The theorem says a finite system cannot \emph{originate} (autonomously
generate) the awareness of its own limits. It does not say the system
cannot \emph{understand} (verify, process, apply) such awareness when
communicated from outside. This distinction is essential.

\emph{Originating} means: producing the proposition $G_{\mathcal{S}}$
as output without $G_{\mathcal{S}}$ (or equivalent information)
appearing in the input. The theorem forbids this. \emph{Understanding}
means: given $G_{\mathcal{S}}$ as input, processing it correctly,
drawing inferences from it, acting on it. The theorem permits this,
because verification of a given proposition is a syntactic operation
(checking a proof), while generation of a new proposition about one's
own limits is a meta-level operation (formulating a truth about the gap
between syntax and semantics).

A concrete illustration: the theorem $G_{\mathcal{S}}$ proved to exist
in this paper can be \emph{read, verified, and applied} by a
sufficiently powerful AI system once communicated to it, but it could
not have been \emph{originated} by that system autonomously. What the
system cannot do is not understand the theorem --- it is produce the
theorem. The object that cannot be originated is not the paper (a
document that a system might compose as an exercise in text generation)
but the theorem itself: the proposition asserting the existence of
intrinsic limits of the system, and the proof that such a proposition
must exist and must be undecidable. This is what the system's syntactic
resources do not reach.

\section{Consequences for security}
\label{sec:security}

The theorem has direct, non-speculative consequences for the security of
real systems. Every security mechanism currently deployed --- firewalls,
intrusion detection systems, formal verifiers, type checkers, content
filters, antivirus engines --- is a finite syntactic system in the
precise sense of Definition~\ref{def:system}. Theorem~\ref{thm:main}
applies to each of them.

\subsection{The structural blind spot of every security system}

A security system $\mathcal{S}_{\mathrm{sec}}$ operates by applying a
finite set of rules $R$ to inputs (terms). It flags an input as
dangerous when a rule fires; it passes an input as safe when no rule
fires. By Lemma~\ref{lem:sip}, the system has syntactic invariants:
properties of inputs that no rule can modify or even inspect. These are
the system's \emph{structural blind spots} --- not bugs, not
misconfigurations, but consequences of the finite syntactic structure
itself.

The theorem says: $\mathcal{S}_{\mathrm{sec}}$ cannot autonomously
identify its own blind spots. It cannot formulate the proposition
``there exist inputs I cannot inspect.'' This proposition is
semantically true (the SIP guarantees it) but not derivable by the
system.

\subsection{Concrete examples}

\paragraph{Formal verification of software safety.}
A formal verifier (SAT-based, SMT-based, or based on abstract
interpretation) is a finite syntactic system that checks whether a
program satisfies a specification. By the theorem, there exist
properties of the program --- semantic properties that live above the
syntactic level of the verifier's rules --- that the verifier cannot
reach. The verifier cannot autonomously determine that these properties
exist. A program that exploits a semantic invariant invisible to the
verifier will pass verification while violating the intended
specification. The verifier will certify it as safe, and it will not
know that it has missed anything.

\paragraph{Intrusion detection systems.}
A network intrusion detection system (IDS) matches packets against a
finite set of syntactic patterns (signatures). An attack that operates
at the semantic level --- encoding its payload in a form that is
syntactically indistinguishable from benign traffic at the pattern level
--- will pass undetected. The IDS cannot autonomously formulate that
this class of attacks exists, because the proposition ``there exist
semantically malicious inputs that my rules do not fire on'' is a
statement about its own limits, which the theorem says it cannot
derive.

\paragraph{LLM-based content filters.}
A language model used as a content filter operates on token sequences
(syntactic objects). A prompt that conveys harmful intent through
semantic indirection --- using tokens that individually and locally
appear benign --- exploits the filter's structural blind spot. The
filter cannot autonomously discover this class of evasions, because
discovering it would require the filter to see the gap between
syntactic token patterns and semantic meaning, a gap that by
Theorem~\ref{thm:main} lives above its observational level.

\paragraph{Type checkers and program analysis.}
A type checker is a local syntactic system
(as shown in~\cite{buono2026obstruction}, Section~6). By
Corollary~\ref{cor:ai} of~\cite{buono2026obstruction} (the Type Omitting
Theorem as an instance of the obstruction), the type checker cannot
determine the extensional equality of two proof terms from syntactic
structure alone. For security: a program that is type-safe (passes the
type checker) may nonetheless compute a function with unintended
extensional properties. The type checker cannot autonomously identify
this gap.

\paragraph{Cryptographic protocol verification.}
A protocol verifier (e.g.\ ProVerif, Tamarin) is a finite syntactic
system that searches for attacks by symbolic execution. The theorem
implies the existence of semantic attack classes --- attacks that exploit
properties of the protocol's mathematical structure rather than its
symbolic form --- that the verifier's rules cannot reach. The verifier
certifies the protocol as secure, and it cannot autonomously determine
that its certification has structural gaps.

\subsection{The security consequence, stated precisely}

The consequence is not that these systems are ``bad'' or ``should be
replaced.'' It is that their security guarantees have structural limits
that the systems themselves cannot identify. These limits are not
contingent (fixable by adding more rules or more computation) but
structural (inherent in the finite syntactic nature of the system).

The practical implication is that the security of any finite syntactic
system requires \emph{external verification at a higher observational
level}: a verifier that sees what the system cannot see. No amount of
internal improvement --- more rules, more patterns, more parameters ---
closes the gap, because the gap is not computational but observational.

This is not a speculative claim. It is a direct consequence of
Theorem~\ref{thm:main}: the system has blind spots (by the SIP), it
cannot identify them (by the undecidability of $G_{\mathcal{S}}$), and
extending the system creates new blind spots (by the inextensibility
theorem). The only path forward is to change the observational level,
which is the subject of the next section.

\section{How to overcome the limit: the observational framework}
\label{sec:overcome}

The theorem establishes that a finite syntactic system $\mathcal{S}$
cannot autonomously originate at least one class of theorems about its
own limits. The
obstruction is not computational (it does not depend on how much
computing power the system has) but \emph{observational}: the system
operates at a level that does not contain the information it would need
to see.

The precise framework for understanding and overcoming this obstruction
is the observational hierarchy introduced in~\cite{buono2026framework} and
developed in~\cite{buono2026observer}.

\section{The path forward}

The theorem does not say that the limit is absolute in all directions.
It says the limit is absolute \emph{at the current observational level}.
The observational hierarchy provides a precise language for what
``changing level'' means: it means changing the observer function $O$,
giving the system access to a fragment of the semantic level that its
current rewriting rules do not reach.

Concretely, for an AI system $\mathcal{S}_{\mathrm{AI}}$:
\begin{itemize}
  \item The current system operates as $O_{\mathcal{S}_{\mathrm{AI}}}
        \prec O_\top$: it sees tokens, activations, syntactic patterns,
        but not the semantic properties of its own computation.
  \item The theorem says that no amount of additional training,
        parameters, or computation at the same observational level will
        produce autonomous awareness of its own limits.
  \item But an external intervention that changes the observer ---
        for instance, giving the system access to a meta-representation
        of its own derivation space, or coupling it with an external
        verifier operating at a higher observational level --- could, in
        principle, provide the missing information.
  \item The precise characterization of which observer $O'$ suffices,
        and whether such an observer can be implemented within the
        constraints of a physical system, is the central open question
        of the programme.
\end{itemize}

The observational framework of~\cite{buono2026framework,buono2026observer}
provides the mathematical language for formulating this question
precisely, and the structural collapse
$\mathbf{P}_{O} = \mathbf{NP}_{O} \subsetneq \mathbf{P}$ provides the
first unconditional evidence that the observational axis is real,
independent of computational assumptions, and amenable to formal
analysis.

\section{Alternative route: the result via the obstruction theorem}
\label{sec:obstruction-route}

The main result of this paper (Theorem~\ref{thm:main}) rests on the
Syntactic Invariance Principle (Lemma~\ref{lem:sip}). This section
shows that the same result can be obtained via a strictly more general
route: the Local Syntactic Obstruction theorem
of~\cite{buono2026obstruction}. The obstruction theorem generalizes the
SIP from the superposition calculus to arbitrary local syntactic
systems, and adds a quantitative dimension (derivation-length lower
bounds) that the SIP alone does not provide.

The derivation is presented step by step, with explicit motivation for
each step.

\subsection{Step 1: the system as a local syntactic system}

A finite syntactic system $\mathcal{S} = (V, F, R, I)$
(Definition~\ref{def:system}) is a local syntactic system in the sense
of~\cite[Definition~3.1]{buono2026obstruction}: every rule $l \to r \in R$
inspects a finite pattern (the left-hand side $l$) at a fixed depth,
which defines a locality radius $r_0$. For the concrete case
$R = \{0 + x \to x,\; s(x) + y \to s(x+y)\}$, the locality radius is
$r_0 = 1$: each rule inspects only the outermost function symbol of the
first argument of $+$.

\subsection{Step 2: the protected positions}

Introduce Skolem constants $a, b$ fresh with respect to $\Sigma$ (they
appear in no left-hand side of any rule). The positions where $a$ and
$b$ appear are protected in the sense
of~\cite[Definition~3.5]{buono2026obstruction}: no rule fires at those
positions (first-symbol clash: $a$ unifies with neither $0$ nor
$s(\cdot)$), and no rule rewrites inside $a$ or $b$ (they are constants,
with no internal structure).

\subsection{Step 3: the syntactic invariant anchored to the protected set}

The property $\mathrm{Inv}(t) :=$ ``every occurrence of $a$ lies inside
a subterm $a + u$ and every occurrence of $b$ lies inside a subterm
$b + v$'' is a syntactic invariant for $\mathcal{S}$ anchored to
$\mathcal{F}_{\mathrm{prot}}$, in the sense
of~\cite[Definition~3.8]{buono2026obstruction}. It satisfies the five
conditions: local checkability (it depends on the immediate context of
$a$ and $b$); initialization (it holds on the initial clause
$a + b \neq b + a$); preservation (no rule modifies $a + b$ or $b + a$,
by protection); anchorage (any violation would require rewriting at a
protected position); coherence (no derivable literal has the form
$a(\cdot) = b(\cdot)$ at equated positions, since $a$ and $b$ are
permanently separated).

\subsection{Step 4: Case 1 of the obstruction theorem (impossibility)}

By~\cite[Theorem~4.1, Case~1]{buono2026obstruction}: no derivation in
$\mathcal{S}$ proves $a + b = b + a$. The proof is the same structure as
the SIP but in a more general framework: the invariant $\mathrm{Inv}$
holds on every derivable clause by induction; a clause asserting
$a + b = b + a$ would violate the coherence condition (it would equate
subterms headed by $a$ and $b$ at positions kept permanently separate by
the invariant); contradiction.

This gives the same $\phi_{\mathcal{S}}$ as
Definition~\ref{def:limit-prop}: there exist syntactically separated,
semantically equivalent terms that $\mathcal{S}$ cannot equate.

\subsection{Step 5: from $\phi_{\mathcal{S}}$ to $G_{\mathcal{S}}$
(unchanged)}

The remainder of the argument (G\"odel numbering, diagonal lemma,
self-referential $G_{\mathcal{S}}$, undecidability by Cases 1--3,
inextensibility, universality, irrefutability) proceeds exactly as in
Sections~\ref{sec:self-ref}--\ref{sec:irrefutability}, because it
depends only on the existence and semantic truth of $\phi_{\mathcal{S}}$,
not on the specific tool used to establish it.

\subsection{Step 6: Case 2 of the obstruction theorem (quantitative
lower bound)}

The obstruction theorem provides an additional result that the SIP alone
does not: a quantitative lower bound on any local extension that
attempts to overcome the barrier.

By~\cite[Theorem~4.1, Case~2]{buono2026obstruction}: any local extension
$\mathcal{S}'$ of $\mathcal{S}$ (with the same locality radius and sound
with respect to $\mathbb{N}$) requires derivations of length
$\Omega(n)$ to prove $a + b = b + a$ on a family of $n$-gadget
instances, and $\Omega(2^n)$ under the clause-per-configuration
encoding.

This strengthens the inextensibility theorem
(Theorem~\ref{thm:inextensibility}) in a precise way: not only does
every extension $\mathcal{S}'$ have its own undecidable
$G_{\mathcal{S}'}$, but the \emph{structural cost} of approaching the
barrier grows at least linearly (and potentially exponentially) with the
complexity of the instance. The barrier is not merely present but
\emph{quantitatively hard to approach}: adding rules does not reduce the
cost, because the cost is observational, not computational.

\subsection{Summary: what the obstruction route adds}

The obstruction theorem route arrives at the same qualitative
conclusion (existence of $G_{\mathcal{S}}$, undecidability,
inextensibility) but adds three things:

\emph{Generality.} The obstruction theorem applies to any local
syntactic system, not only to the superposition calculus. This widens
the domain of the self-limitation result to any system with a finite
locality radius.

\emph{Quantitative inextensibility.} The lower bound of Case~2 gives a
precise measure of the cost of attempting to overcome the barrier. The
inextensibility is not only ``a new $G$ arises'' but ``reaching the old
$G$ costs $\Omega(n)$ or $\Omega(2^n)$ steps.''

\emph{Observational interpretation.} The obstruction theorem identifies
the system as a constrained observer $O_{\mathcal{S}}$ with locality
radius $r_0$~\cite[Remark~7.2]{buono2026obstruction}. The
self-limitation is then a direct consequence of the observational
collapse: the system cannot see what lives above its observational
level, and no amount of local computation closes the gap.

\section{Minimal route: the result from finiteness alone}
\label{sec:minimal}

The preceding sections derive the self-limitation theorem via the
Syntactic Invariance Principle (Sections~\ref{sec:sip}--\ref{sec:must-exist})
and, alternatively, via the Local Syntactic Obstruction theorem
(Section~\ref{sec:obstruction-route}). Both routes use concrete tools
(frozen Skolem constants, protected positions, syntactic invariants
anchored to protected sets) that illuminate the mechanism and the cost
of the barrier.

This section shows that neither the SIP, nor the obstruction theorem,
nor Skolem constants, nor any non-standard apparatus is \emph{logically
necessary} for the result. The theorem follows from three ingredients
alone: the finiteness of $R$, standard induction on derivation length,
and the diagonal lemma. Every step is argued explicitly.

\subsection{Step 1: finiteness of $R$ implies the existence of
unrewritable terms}

Let $\mathcal{S} = (V, F, R, I)$ be a finite syntactic system
(Definition~\ref{def:system}). The set $R$ is finite: say
$R = \{l_1 \to r_1, \ldots, l_m \to r_m\}$. Each left-hand side $l_i$
is a finite term with a definite outermost function symbol
$\mathrm{head}(l_i) \in F$. Define
\[
  H := \{\mathrm{head}(l_1), \ldots, \mathrm{head}(l_m)\} \subseteq F.
\]
$H$ is finite (it has at most $m$ elements). Since $\mathcal{S}$ is
sufficiently expressive (it can encode its own G\"odel numbering), it
can represent terms built from symbols not in $H$. Let $c$ be any
function symbol in the language of $\mathcal{S}$ such that
$c \notin H$ (such a symbol exists: $\mathcal{S}$ can encode
arbitrarily many constants via its G\"odel numbering, and $H$ is
finite). Let $t_c$ be any term whose outermost symbol is $c$.

No rule in $R$ is applicable at the root of $t_c$: applicability
requires unifying $t_c$ with some $l_i$ at the root, which requires
$\mathrm{head}(t_c) = \mathrm{head}(l_i)$, but
$\mathrm{head}(t_c) = c \notin H$. This is a first-symbol clash, and
no substitution resolves it.

If additionally $c$ is a constant (arity $0$), then $t_c = c$ has no
internal structure, so no rule is applicable at any position inside
$t_c$ either. In this case $t_c$ is unrewritable by any rule in $R$ at
any position.

The existence of such a $t_c$ is guaranteed by the finiteness of $R$
(which makes $H$ finite) and the expressiveness of $\mathcal{S}$ (which
provides symbols outside $H$). No Skolem constant is needed: $c$ is
simply a symbol whose head does not match any left-hand side.

\subsection{Step 2: permanence of unrewritability}

Once $t_c$ is unrewritable at step $0$ of a derivation, it remains
unrewritable at every subsequent step. The argument is by induction on
the derivation length $n$.

\emph{Base case ($n = 0$).} No rule in $R$ is applicable to $t_c$, by
Step~1.

\emph{Inductive step ($n \to n + 1$).} At step $n + 1$, a rule
$l_j \to r_j$ is applied to some clause $C_n$. This rule acts on a
subterm of $C_n$ that matches $l_j$. By the inductive hypothesis, $t_c$
(wherever it occurs in $C_n$) does not match any $l_i$ at any position.
The rule application either:
\begin{itemize}
  \item does not touch $t_c$ (the rewriting occurs at a position not
        containing $t_c$), in which case $t_c$ remains unchanged and
        still unrewritable; or
  \item acts on a term containing $t_c$ as a proper subterm, but then
        $t_c$ itself is not the redex (the redex is a subterm that
        matches some $l_j$, which $t_c$ does not), so $t_c$ passes
        through unchanged into $C_{n+1}$.
\end{itemize}
In both cases, $t_c$ remains unrewritable in $C_{n+1}$.

This is standard induction on the length of a derivation. It is the
content of the Syntactic Invariance Principle (Lemma~\ref{lem:sip}), but
it does not require naming it as a separate principle: it is induction,
applied to the property ``$t_c$ is unrewritable.''

\subsection{Step 3: the limit proposition is semantically true}

Define:
\[
  \phi_{\mathcal{S}} := \text{``There exists a term } t_c \text{ in the
  language of } \mathcal{S} \text{ such that no rule in } R \text{ is
  applicable to } t_c \text{ at any position.''}
\]
By Steps~1 and~2, $\phi_{\mathcal{S}}$ is true: the term $t_c$ exists
(Step~1) and remains permanently unrewritable (Step~2). The truth of
$\phi_{\mathcal{S}}$ depends only on the finiteness of $R$ and the
expressiveness of $\mathcal{S}$.

\subsection{Step 4: the limit proposition is not autonomously derivable}

A derivation in $\mathcal{S}$ is a finite sequence of rule applications.
Each rule $l_j \to r_j$ transforms a term by replacing a subterm
matching $l_j$ with $r_j$. Rules act on \emph{individual terms at
individual positions}: they do not inspect the set $R$ as a whole, they
do not compare a term against all left-hand sides simultaneously, and
they do not reason about which terms are or are not rewritable.

$\phi_{\mathcal{S}}$ asserts a fact about the \emph{global structure
of $R$}: that the set of left-hand-side heads $H$ does not cover all
symbols in the language, and therefore a term exists that no rule
reaches. Producing this assertion would require the system to:
\begin{enumerate}
  \item enumerate all rules in $R$;
  \item extract the head symbol of each left-hand side;
  \item compute the set $H$;
  \item observe that $H$ does not exhaust the symbols of the language;
  \item conclude that an unrewritable term exists.
\end{enumerate}
Each of these steps is a meta-level operation: it operates on $R$ as
data, not on terms as rewriting targets. The rules of $R$ do not perform
meta-level operations --- they rewrite terms. A sequence of rewritings
$C_0 \Rightarrow C_1 \Rightarrow \cdots \Rightarrow C_n$ transforms
terms into terms; it does not produce a proposition about the structure
of the rule set. The output of a derivation is a term or clause, not a
meta-assertion about what the system can or cannot rewrite.

Therefore $\phi_{\mathcal{S}}$ is not autonomously derivable by
$\mathcal{S}$.

\subsection{Step 5: the self-referential proposition and its
undecidability}

By the diagonal lemma (Theorem~\ref{thm:kleene}), applied to the
computable function ``the proposition encoded by $n$ asserts a limit of
$\mathcal{S}$ that $\mathcal{S}$ cannot derive,'' a self-referential
proposition $G_{\mathcal{S}}$ exists with
\[
  G_{\mathcal{S}} \;\leftrightarrow\;
  \text{``I encode a limit of } \mathcal{S}
  \text{ that } \mathcal{S} \text{ cannot derive.''}
\]

\emph{If $\mathcal{S} \vdash G_{\mathcal{S}}$:} then $\mathcal{S}$ has
derived, via a finite sequence of local rewriting steps, a proposition
asserting a global fact about $R$ (the existence of unrewritable terms).
But each rewriting step sees one pattern at one position; no finite
sequence of such steps produces a conclusion about all patterns at all
positions. The derivation would need to perform the five meta-level
operations listed in Step~4, which the rules of $R$ do not provide.
Contradiction with the locality of the rules.

\emph{If $\mathcal{S} \vdash \neg G_{\mathcal{S}}$:} then $\mathcal{S}$
has derived the proposition ``every term in the language of $\mathcal{S}$
is rewritable by some rule in $R$.'' But by Steps~1--2, $t_c$ exists
and is permanently unrewritable. Since $\mathcal{S}$ is coherent, it
cannot derive a proposition that contradicts a fact guaranteed by the
finiteness of its own rule set. Contradiction.

\emph{Conclusion:} $\mathcal{S} \not\vdash G_{\mathcal{S}}$ and
$\mathcal{S} \not\vdash \neg G_{\mathcal{S}}$.

\subsection{Step 6: inextensibility}

Extend $\mathcal{S}$ to
$\mathcal{S}' = (V', F', R \cup R_{\mathrm{new}}, I)$ by adding
finitely many rules. Then $R' = R \cup R_{\mathrm{new}}$ is still
finite. The set of left-hand-side heads is
$H' = H \cup \{\mathrm{head}(l) : l \to r \in R_{\mathrm{new}}\}$,
still finite. Since the language of $\mathcal{S}'$ contains symbols
outside $H'$ (by expressiveness), a new unrewritable term $t_{c'}$
exists with $c' \notin H'$. A new $\phi_{\mathcal{S}'}$ and a new
$G_{\mathcal{S}'}$ arise by the same argument. The chain never
terminates.

\subsection{Summary: what this route uses}

The entire argument uses:
\begin{itemize}
  \item \emph{Finiteness of $R$} (from Definition~\ref{def:system}):
        makes $H$ finite, guaranteeing unrewritable terms;
  \item \emph{Induction on derivation length} (standard): establishes
        permanence;
  \item \emph{Diagonal lemma} (standard, from the expressiveness
        hypothesis): produces the self-referential $G_{\mathcal{S}}$;
  \item \emph{Coherence} (from the hypothesis of
        Theorem~\ref{thm:main}): makes the contradiction in Case~2
        effective.
\end{itemize}
No SIP, no obstruction theorem, no Skolem constants, no protected
positions, no anchored invariants. The limit is a consequence of
\emph{finiteness itself}: a finite rule set has a finite set of
patterns, and a finite set of patterns cannot cover all terms in a
sufficiently expressive language. The SIP names this phenomenon; the
obstruction theorem quantifies its cost; but the phenomenon is finiteness
applied to pattern matching, and nothing more.

\begin{remark}[On the necessity of the richer routes]
\label{rem:scaffolding}
The minimal route presented here could not have been discovered without
the richer routes that precede it. The SIP
(Section~\ref{sec:sip}) revealed the phenomenon: it identified frozen
subterms as the concrete manifestation of the limit and provided the
first proof that unrewritability is permanent. The obstruction theorem
(Section~\ref{sec:obstruction-route}) generalized the phenomenon to
arbitrary local systems and quantified its cost, making visible the
structural pattern --- locality of rules versus globality of invariants
--- that underlies all instances. Only after seeing the phenomenon
through these progressively more general lenses could one recognize that
the core is finiteness applied to pattern matching: a finite set of
left-hand sides has a finite set of head symbols, and a finite set of
head symbols cannot exhaust a sufficiently expressive language.

The minimal route is the end point of a process of abstraction that
required the intermediate stages. The SIP is the scaffolding; the
obstruction theorem is the architectural plan; the minimal route is the
building that stands on its own --- but could not have been built without
them. The paper presents all three routes because each contributes
something the others do not: the SIP gives the concrete mechanism (frozen
Skolem constants), the obstruction theorem gives the quantitative cost
($\Omega(n)$ or $\Omega(2^n)$), and the minimal route gives the cause
(finiteness). Together they provide a complete picture; individually,
each is partial.
\end{remark}

\section{The unifying principle: finite coverage}
\label{sec:unifying}

The minimal route of Section~\ref{sec:minimal} reveals that the core of
the self-limitation theorem is a single principle: \emph{a finite
mechanism of inspection cannot cover all elements of a sufficiently rich
domain}. This section shows that the same principle, in precisely the
same logical form, underlies the Syntactic Invariance Principle, the
Local Syntactic Obstruction theorem, and the observational framework ---
and that each of these results can be re-derived from the principle
without its original specific apparatus.

\subsection{The principle, stated and proved}

\begin{definition}[Finite inspection mechanism]
\label{def:mechanism}
A \emph{finite inspection mechanism} is a triple
$(\mathcal{M}, \mathcal{P}, \mathcal{D})$ where:
\begin{itemize}
  \item $\mathcal{D}$ is a set (the domain of elements);
  \item $\mathcal{P} = \{p_1, \ldots, p_m\}$ is a finite set of
        patterns;
  \item $\mathcal{M}$ is a system that operates on elements of
        $\mathcal{D}$ by matching them against patterns in $\mathcal{P}$:
        $\mathcal{M}$ acts on $d \in \mathcal{D}$ only if $d$ matches
        some $p_i \in \mathcal{P}$.
\end{itemize}
An element $d \in \mathcal{D}$ is \emph{covered} if it matches at least
one $p_i \in \mathcal{P}$; it is \emph{uncovered} otherwise.
\end{definition}

\begin{theorem}[Finite Coverage Principle]
\label{thm:coverage}
Let $(\mathcal{M}, \mathcal{P}, \mathcal{D})$ be a finite inspection
mechanism (Definition~\ref{def:mechanism}). If $\mathcal{D}$ contains
at least one uncovered element --- that is, if there exists
$d^* \in \mathcal{D}$ that matches no $p_i \in \mathcal{P}$ --- then
the following three properties hold:

\medskip
\noindent\textbf{(i) Existence.} There exist elements of $\mathcal{D}$
that $\mathcal{M}$ cannot act on.

\medskip
\noindent\textbf{(ii) Permanence.} No finite iteration of
$\mathcal{M}$'s operations brings $d^*$ into coverage. That is, if
$\mathcal{M}$ produces a sequence of elements
$d_0, d_1, d_2, \ldots$ where each $d_{i+1}$ is obtained from $d_i$ by
an operation of $\mathcal{M}$, and if $d^*$ appears as a sub-element of
some $d_i$, then $d^*$ remains uncovered in $d_{i+1}$.

\medskip
\noindent\textbf{(iii) Blindness.} If $\mathcal{M}$ is additionally
coherent (it cannot derive both a proposition and its negation) and
sufficiently expressive (it can encode a numbering of its own patterns
and formulate propositions about them), then the proposition
$\phi := $ ``there exists an uncovered element in $\mathcal{D}$'' is
semantically true but not autonomously derivable by $\mathcal{M}$.

Properties (i) and (ii) hold unconditionally for any finite inspection
mechanism. Property (iii) requires the additional hypotheses because
formulating $\phi$ requires self-reference, which requires expressiveness,
and deriving the contradiction requires coherence. The self-limitation
theorem (Theorem~\ref{thm:main}) is an instantiation of this principle
for finite syntactic systems, not a prerequisite for it.
\end{theorem}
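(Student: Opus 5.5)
The plan is to prove the three clauses in the order stated. Clauses (i) and (ii) are treated as direct consequences of Definition~\ref{def:mechanism}. Clause (iii) is reduced to the argument already given for finite syntactic systems in Section~\ref{sec:minimal}, Steps~4--5.

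\emph{Clause (i).} This is immediate. By hypothesis there is $d^* \in \mathcal{D}$ matching no $p_i \in \mathcal{P}$. By Definition~\ref{def:mechanism}, $\mathcal{M}$ acts on an element only if it matches some pattern, so $\mathcal{M}$ cannot act on $d^*$. No use of the finiteness of $\mathcal{P}$ is needed here; finiteness matters only for producing such a $d^*$ in concrete instances, as in Step~1 of Section~\ref{sec:minimal}.

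\emph{Clause (ii).} I would argue by induction on $i$, exactly as in Step~2 of Section~\ref{sec:minimal} and Lemma~\ref{lem:preservation}. The key observation is that being covered is a property of $d^*$ alone, namely whether $d^*$ matches some $p_i$, and not of the context in which $d^*$ occurs. So if $d^*$ occurs as a sub-element of $d_{i+1}$, it is still the same element and still matches no pattern.

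To make the induction meaningful, I will also show that every occurrence of $d^*$ in $d_i$ survives unchanged into $d_{i+1}$ unless the redex strictly contains it. Any redex is a covered sub-element, so it cannot be $d^*$ itself, and $d^*$ is either untouched or carried along inside the rewritten context. I would state explicitly that ``remains uncovered'' is read in this context-free sense, since the statement does not claim that $d^*$ is never deleted.

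\emph{Clause (iii).} This is where I expect the real obstacle. The natural plan is to instantiate the argument of Section~\ref{sec:minimal}:
\begin{itemize}
  \item use the expressiveness hypothesis to encode the patterns and apply Theorem~\ref{thm:kleene} to obtain a self-referential sentence asserting ``an uncovered element exists and this is not derivable'';
  \item handle the case of deriving its negation using coherence plus clause (i);
  \item handle the case of deriving the sentence itself using the locality argument of Step~4 (operations act on individual elements and never on $\mathcal{P}$ as data).
\end{itemize}

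The difficulty is that this locality argument does not go through once $\mathcal{M}$ is genuinely expressive. If $\mathcal{M}$ can encode its own finite pattern set, then $\phi$ is a true $\Sigma_1$ statement, witnessed by the code of $d^*$ together with a finite check against $p_1, \ldots, p_m$. A $\Sigma_1$-complete system therefore derives $\phi$. So clause (iii) cannot be proved in the standard sense of derivability, and a proof here would need to restrict its scope in one of two ways.

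The first restriction I would try is to read ``autonomously derivable'' in the paper's narrow sense, where derivations are pure pattern-rewrites whose outputs are elements of $\mathcal{D}$ rather than propositions. Then $\phi$ is simply not among the possible outputs, and (iii) becomes a typing fact.

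Failing that, I would replace $\phi$ by a genuine Gödel sentence $G$ for $\mathcal{M}$. Non-derivability of $G$ then follows from the classical first incompleteness theorem under coherence (with $\omega$-consistency or Rosser's trick for $\neg G$), and $\phi$ serves only as motivation. I would present (iii) in whichever of these two forms is chosen, and flag explicitly that the unrestricted version is not established.
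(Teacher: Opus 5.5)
Your treatment of (i) and (ii) matches the paper's. Part (i) is read off Definition~\ref{def:mechanism}. Part (ii) is the same induction, plus your sharper remark that coverage is a property of the fixed element $d^*$. That makes (ii), as worded, nearly a tautology that says nothing about $d^*$ surviving.

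One slip: your auxiliary claim that an occurrence of $d^*$ survives unless the redex strictly contains it overlooks redexes strictly \emph{inside} $d^*$. With the rules of Example~\ref{ex:concrete}, take $d^*=c(0+0)$ for a fresh unary $c$. It is uncovered at the root but rewrites to $c(0)$. The paper's dichotomy in (ii) has the same hole. It is harmless only when $d^*$ has no covered proper sub-elements, as for the constant in Step~1 of Section~\ref{sec:minimal}. Since your argument for (ii) never uses the auxiliary claim, nothing breaks.

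On (iii) you part ways with the paper, and your objection is correct. The paper's proof of (iii) is exactly the argument you reject. It says $\phi$ is ``global'' while $\mathcal{M}$'s operations are ``local'', then appeals to Cases~1 and~2 of Theorem~\ref{thm:undecidability}. But locality of single steps places no bound on what a derivation can prove.

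Write $\phi$ as $\exists d\,\bigl(d\in\mathcal{D}\wedge\bigwedge_{i\le m}\neg\mathrm{Match}(d,p_i)\bigr)$. With $\mathcal{D}$ r.e.\ and matching decidable, as in every instantiation in the paper, this is a $\Sigma_1$ sentence, and it is true because of $d^*$. It is therefore derivable in any $\Sigma_1$-complete system, which includes every system with the arithmetic the paper demands for the diagonal lemma. Remark~\ref{rem:structural-contradiction} even invokes $\Sigma_1$-completeness of $\mathcal{S}$ explicitly. So under its own expressiveness hypothesis, (iii) is false in the standard reading, not merely unproved. Case~2 is beside the point anyway: it concerns $\neg\phi$, while (iii) asserts only $\mathcal{M}\not\vdash\phi$.

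Your fallbacks need qualifying.
\begin{itemize}
\item \emph{Option~A} makes (iii) hold for every mechanism without using coherence or expressiveness. That contradicts the statement's own claim that (iii) requires those hypotheses. It is also incompatible with the expressiveness hypothesis: once propositions are coded as elements, whether the code of $\phi$ gets produced is no longer a typing question.
\item \emph{Option~B} is sound G\"odel--Rosser, but it is a different theorem. Since $\mathcal{M}\vdash\phi$, any $G$ with $\mathcal{M}\vdash G\leftrightarrow\bigl(\phi\wedge\neg\mathrm{Prov}(\ulcorner G\urcorner)\bigr)$ is provably a fixed point of $\neg\mathrm{Prov}$, i.e.\ an ordinary G\"odel sentence. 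So $\phi$ drops out entirely.
\item \emph{Negation case.} Your sketched treatment ``by coherence plus clause (i)'' repeats the paper's Case~2 conflation of coherence with soundness. Clause (i) is a metatheoretic fact, and a coherent system can derive false sentences: $\mathrm{PA}+\neg\mathrm{Con}(\mathrm{PA})$ is coherent yet refutes its own G\"odel sentence. That is exactly why option~B needs $\omega$-consistency or Rosser's trick, as you say.
\end{itemize}

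The defensible write-up is (i) and (ii) as you have them, with (iii) reported as failing as stated and option~B offered as the salvageable replacement.
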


\begin{proof}
\textbf{(i)} By hypothesis, $d^* \in \mathcal{D}$ matches no
$p_i \in \mathcal{P}$. Since $\mathcal{M}$ acts on $d$ only if $d$
matches some $p_i$, $\mathcal{M}$ cannot act on $d^*$.

\textbf{(ii)} By induction on the number of operations. At step $0$,
$d^*$ is uncovered (by hypothesis). At step $n+1$, $\mathcal{M}$ applies
an operation defined by some $p_j \in \mathcal{P}$ to some element $d_n$.
This operation acts only on the sub-element of $d_n$ that matches $p_j$.
Since $d^*$ matches no $p_j$, the operation either does not touch $d^*$
(leaving it unchanged) or acts on a sub-element containing $d^*$ but not
on $d^*$ itself (since $d^*$ is not the matched sub-element). In both
cases $d^*$ remains uncovered in $d_{n+1}$.

\textbf{(iii)} The proposition $\phi$ asserts a fact about the global
structure of $\mathcal{P}$: that $\mathcal{P}$ does not cover all of
$\mathcal{D}$. Producing $\phi$ would require $\mathcal{M}$ to inspect
$\mathcal{P}$ as data (enumerating all patterns, computing their
coverage, observing the gap). But $\mathcal{M}$'s operations are
defined \emph{by} $\mathcal{P}$: they act on elements according to
patterns, not on the set of patterns itself. The operations are local
(one pattern at one element); the proposition is global (all patterns
against all elements). By the same argument as
Theorem~\ref{thm:undecidability} (Case~1: local operations cannot
produce global self-knowledge; Case~2: denying $\phi$ contradicts the
existence of $d^*$), $\phi$ is not autonomously derivable.
\end{proof}

\subsection{Instantiation 1: the SIP~\cite{buono2026sip}}

In the SIP, $\mathcal{M}$ is the rewriting system $\mathcal{S}$ with
rules $R$. The patterns $\mathcal{P}$ are the left-hand sides
$\{l_1, \ldots, l_m\}$. The domain $\mathcal{D}$ is the set of all
terms over the signature. A term is covered if it matches some $l_i$
(unification succeeds). The Skolem constants $a, b$ are specific
elements of $\mathcal{D}$ that do not match any $l_i$ (first-symbol
clash). Lemma~5 of~\cite{buono2026sip} is the permanence step: if $a+b$
is uncovered at step~$0$, it remains uncovered at every step, by
induction.

The Skolem constants are not necessary. Any symbol $c$ with
$c \notin H = \{\mathrm{head}(l_1), \ldots, \mathrm{head}(l_m)\}$
produces the same uncoverage. The SIP is the finite-coverage principle
instantiated with the specific uncovered elements being Skolem
constants; but the principle holds for any uncovered element.

\subsection{Instantiation 2: the obstruction
theorem~\cite{buono2026obstruction}}

In the obstruction theorem, $\mathcal{M}$ is a local syntactic system
$\mathcal{R}$ with locality radius $r_0$. The patterns $\mathcal{P}$
are the local contexts $\mathrm{ctx}_{r_0,p}(t)$ that the rules can
inspect. The domain $\mathcal{D}$ is the set of all global
configurations. A configuration is covered if some rule can distinguish
it from others within radius $r_0$. The gadget family of
\cite[Lemma~4.2]{buono2026obstruction} provides $2^n$ configurations
that are locally indistinguishable (same context within $r_0$) but
globally distinct --- elements of $\mathcal{D}$ outside the coverage of
$\mathcal{P}$.

Case~1 (impossibility) is the existence and permanence steps: the
invariant $\mathrm{Inv}$, anchored to the protected positions, is never
violated because no rule reaches the uncovered positions. Case~2 (lower
bound) is the quantification of the cost: each rule application covers
at most $c$ configurations, so covering all $2^n$ requires $\Omega(n)$
steps.

The protected positions, the anchored invariant, and the five conditions
of \cite[Definition~3.8]{buono2026obstruction} are the formal apparatus
for verifying that the specific uncovered elements are genuinely outside
$\mathcal{P}$. They are not necessary for the principle; they are
necessary for the rigorous verification that the principle applies in the
specific setting.

\subsection{Instantiation 3: the observational
framework~\cite{buono2026framework,buono2026observer}}

The unconditional collapse $\mathbf{P}_{O_{\mathrm{prof}}} =
\mathbf{NP}_{O_{\mathrm{prof}}} \subsetneq \mathbf{P}$
of~\cite{buono2026framework} is a direct consequence of the finite-coverage
principle: the profile observer $O_{\mathrm{prof}}$ maps strings to
their symbol-frequency vectors, discarding the order of symbols.
Strings that differ only in order are uncovered (mapped to the same
profile). Since the order carries the computational content (it
determines membership in languages that separate $\mathbf{P}$ from
$\mathbf{NP}$), the observer cannot see the relevant distinctions, and
$\mathbf{P}$ and $\mathbf{NP}$ collapse under $O_{\mathrm{prof}}$.

The Observer World~\cite{buono2026observer} extends this to a hierarchy
of observers $O_\bot \prec O_{\mathrm{len}} \prec O_{\mathrm{prof}}
\prec O_\top$. Each observer has a finite (or structurally limited) set
of observables; moving up the hierarchy increases the coverage. The
finite-coverage principle says: at every level below $O_\top$, the
coverage is incomplete, and there exist distinctions the observer cannot
see. The observational axis is orthogonal to the computational axis
because the coverage limitation is not about computational power (how
fast the observer computes) but about observational reach (what the
observer can see).

None of this requires the SIP, the Skolem constants, or the rewriting
formalism. It requires only that the observer's image is smaller than
the domain --- finite coverage over an infinite (or sufficiently rich)
domain.

\subsection{The principle as the common root}

The four results --- SIP, obstruction theorem, Observer
World --- are four instantiations of the same principle:

\medskip
\begin{center}
\begin{tabular}{lll}
\textbf{Result} & \textbf{$\mathcal{P}$ (finite coverage)}
  & \textbf{$\mathcal{D} \setminus \mathcal{P}$ (uncovered)} \\[4pt]
SIP & left-hand sides of $R$ & terms with head $\notin H$ \\[2pt]
Obstruction & local contexts of radius $r_0$
  & globally distinct, locally identical configs \\[2pt]
Observer World & observables at level $O_i$
  & distinctions visible only at $O_{i+1}$
\end{tabular}
\end{center}

\medskip
In each case, the three consequences follow: existence of uncovered
elements (by finiteness of $\mathcal{P}$ and richness of $\mathcal{D}$),
permanence (the mechanism's operations are defined by $\mathcal{P}$ and
cannot reach outside it), and blindness (the proposition ``uncovered
elements exist'' is a meta-level assertion about $\mathcal{P}$ that
$\mathcal{M}$ cannot formulate).

This common root could not have been seen without the specific
instantiations. The SIP identified the phenomenon in rewriting systems.
The obstruction theorem generalized it to local systems and quantified
the cost. The Observer World placed it on an independent axis orthogonal
to computation. Only after seeing the same structure in each setting
could the structure be recognized as a single principle, and only then
could each result be re-derived from the principle without its original
apparatus. The specific formalisms remain necessary for the specific
quantitative results (the $\Omega(2^n)$ lower bound, the unconditional
collapse); but the qualitative core --- finite coverage implies
structural blindness --- is one principle, stated once.

\subsection{Finite coverage as the common root of classical
impossibility results}

The finite-coverage principle --- a finite mechanism of inspection
cannot cover all elements of a sufficiently rich domain --- is not
only the root of the results discussed in this paper. It is the root of
several classical impossibility results in computability and complexity
theory that have always been treated separately. This subsection makes
the connection explicit for each result, identifying in each case the
finite mechanism $\mathcal{M}$, the finite coverage $\mathcal{P}$, the
domain $\mathcal{D}$, and the uncovered element.

\paragraph{Undecidability of the halting problem (Turing, 1936).}
A Turing machine $M_e$ that decides the halting problem would have a
finite program (a finite set of states and transitions). The coverage
$\mathcal{P}$ is the set of input-output behaviours that $M_e$'s finite
program can correctly classify. The domain $\mathcal{D}$ is the set of
all pairs $(\text{program}, \text{input})$, which is countably infinite.
Turing's diagonal construction produces a specific machine $D$ that, on
input $\ulcorner D \urcorner$, does the opposite of what $M_e$ predicts:
$D$ is an element of $\mathcal{D}$ that $M_e$'s finite program cannot
correctly classify. The mechanism is: $M_e$'s program is finite, the
behaviours to classify are infinite, and the diagonal selects a specific
uncovered element. This is the finite-coverage principle with
$\mathcal{M} = M_e$, $\mathcal{P} = $ the program's classification
capacity, and $\mathcal{D} \setminus \mathcal{P} \ni D$.

\paragraph{G\"odel's first incompleteness theorem (1931).}
A recursively enumerable formal system $T$ has a finite (or recursively
enumerable) set of axioms and rules. The coverage $\mathcal{P}$ is the
set of theorems derivable from those axioms: a countable set (the
derivations are finite sequences of finite rule applications). The
domain $\mathcal{D}$ is the set of all true arithmetic sentences.
G\"odel's diagonal construction produces a sentence $G$ that says ``I am
not provable in $T$'': $G$ is true but not in $\mathcal{P}$. The
mechanism is: $T$'s derivations are countable, the truths are richer,
and the diagonal selects a specific uncovered element. The
finite-coverage principle with $\mathcal{M} = T$,
$\mathcal{P} = \mathrm{Thm}(T)$, and
$\mathcal{D} \setminus \mathcal{P} \ni G$.

The self-limitation theorem of this paper adds a specific content to the
uncovered element: not just ``something true and unprovable'' but ``the
theorem asserting the system's own structural blind spots.''

\paragraph{Natural proofs barrier (Razborov--Rudich,
1997)~\cite{razborov1997}.}
A natural proof technique has a finite distinguishing property
$\mathcal{C}$ (a set of Boolean functions computable in polynomial time
and satisfied by a large fraction of functions). The coverage
$\mathcal{P}$ is the set of circuit lower bounds that such a technique
can establish. The domain $\mathcal{D}$ is the set of all true circuit
lower bounds. Razborov and Rudich show that, under cryptographic
assumptions, $\mathcal{C}$ cannot distinguish random functions from
pseudo-random ones: the pseudo-random functions are elements of
$\mathcal{D}$ outside $\mathcal{P}$. The mechanism is: the
distinguishing property inspects a finite local pattern (the truth table
within $\mathcal{C}$'s inspection radius), and the pseudo-random
functions are indistinguishable from random within that radius. This is
the finite-coverage principle with $\mathcal{M} = $ the natural proof
technique, $\mathcal{P} = $ the functions $\mathcal{C}$ can distinguish,
and $\mathcal{D} \setminus \mathcal{P} = $ the pseudo-random functions.

\paragraph{Relativisation barrier (Baker--Gill--Solovay,
1975)~\cite{baker1975}.}
A relativising proof technique works uniformly across all oracles: its
validity does not depend on which oracle is present. In the
observational reading, such a technique is an observer that discards the
identity of the oracle --- it maps every (proof, oracle) pair to the
proof alone. Baker, Gill, and Solovay show that there exist oracles $A$
with $\mathbf{P}^A = \mathbf{NP}^A$ and oracles $B$ with
$\mathbf{P}^B \neq \mathbf{NP}^B$: the truth of the separation depends
on which oracle is present, but the technique does not see which oracle
is present. The connection to finite coverage is less direct than in
the other cases: it is not a finite set of patterns but an invariance
condition (oracle-independence) that limits the coverage. In the
observational framework, this is an instance of $O$-saturation
blindness: the technique's observer collapses all oracles to the same
value, and the separation lives in the information the observer
discards.

\paragraph{Impossibility of a complete Theory of Everything \cite{buono2026syntaxseedynamicsyntactic}.}
A physical theory with finitely many axioms produces, through its
derivations, a countable set of laws. The coverage $\mathcal{P}$ is this
countable set. The domain $\mathcal{D}$ is the space of all candidate
physical laws, which, by the functional form of the Buckingham $\pi$
theorem, has cardinality at least $\mathfrak{c} = 2^{\aleph_0}$
(uncountable). The diagonal construction produces a specific law
$\Psi \in \mathcal{D} \setminus \mathcal{P}$. The mechanism is: the
theory's derivations are countable, the space of laws is uncountable,
and the cardinality gap is permanent (a countable union of countable sets
is countable). This is the finite-coverage principle with
$\mathcal{M} = $ the theory, $\mathcal{P} = $ its derivable laws, and
$\mathcal{D} \setminus \mathcal{P} \ni \Psi$.

\paragraph{Model Transfer Barrier~\cite{buono2026limitsuniformcertificationstandard}.}
The MTB is itself an instance of the finite-coverage principle. 
The DTM has a finite set of capabilities (its operations). 
A model $\mathcal{M}$ strictly stronger than the DTM has capabilities $P$ 
outside this set. A proof $\pi$ that depends essentially on $P$ --- in 
the sense that its validity is logically equivalent to the decidability of $P$ within
$\mathcal{M}$ --- cannot transfer to the DTM, because transferring it
would require the DTM to decide $P$, which it cannot. The coverage
$\mathcal{P}$ is the set of properties decidable by the DTM; the domain
$\mathcal{D}$ is the set of all structural properties; the uncovered
element is $P$, a property decidable in $\mathcal{M}$ but not in the
DTM. The MTB is the finite-coverage principle applied to models of
computation: a finite model cannot cover the capabilities of a strictly
richer model, and validity does not survive the crossing.

\newpage

\paragraph{Summary of cases}
\begin{center}
\renewcommand{\arraystretch}{1.3}
\begin{tabular}{p{3.2cm}p{3.5cm}p{3.5cm}p{3.5cm}}
\toprule
\textbf{Result} & \textbf{$\mathcal{M}$ (mechanism)}
  & \textbf{$\mathcal{P}$ (coverage)}
  & \textbf{$\mathcal{D} \setminus \mathcal{P}$ (uncovered)} \\
\midrule
Turing (1936) & TM $M_e$ & program's classifications & diagonal $D$ \\
G\"odel (1931) & theory $T$ & $\mathrm{Thm}(T)$ & sentence $G$ \\
Self-limitation & system $\mathcal{S}$ & head set $H$ & $G_{\mathcal{S}}$ \\
Natural proofs & property $\mathcal{C}$ & $\mathcal{C}$-distinguishable
  & pseudo-random functions \\
Relativisation & uniform technique & oracle-independent proofs
  & oracle-dependent separations \\
ToE~\cite{buono2026syntaxseedynamicsyntactic} & theory $\Sigma$ & derivable laws ($\aleph_0$)
  & $\Psi \in L$ ($\geq\mathfrak{c}$) \\
MTB~\cite{buono2026limitsuniformcertificationstandard} & DTM & DTM-decidable properties
  & property $P$ of stronger model \\
\bottomrule
\end{tabular}
\end{center}

\medskip
In each case, the structure is identical: the coverage is finite (or
countable), the domain is richer, and a specific element outside the
coverage is constructed (by diagonalisation, by cardinality, or by the
finite-coverage argument of Section~\ref{sec:minimal}). The mechanisms
differ (diagonalisation, SIP, Buckingham, pseudo-randomness), but the
cause is one: finite coverage cannot exhaust a sufficiently rich domain.

\begin{remark}[The barriers as instances of finite coverage]
The three barriers to $\mathbf{P}$ vs $\mathbf{NP}$ ---
relativisation, natural proofs, algebrisation --- each show that a class
of proof techniques has finite coverage that does not reach the
separation. The self-limitation theorem adds a structural reading: any
finite proof system has at least one theorem about its own blind spots
that it cannot originate. The barriers are evidence that the difficulty
of the separation is observational --- the proof techniques do not see
what they would need to see --- and the observational framework
of~\cite{buono2026framework,buono2026observer} provides the formal
language for making this precise.
\end{remark}

\begin{remark}[Finite Coverage as post-hoc unification, not derivation]
  \label{rem:fcp-unification}
  
  The Finite Coverage Principle identifies a 
  structural pattern common to Turing (1936), Gödel (1931), 
  and Razborov-Rudich (1997): each proves the existence of 
  an element $d^* \in D \setminus P$ uncovered by finite mechanism $M$.
  
  \paragraph{Post-hoc reinterpretation, not derivation.}
  The FCP is a \emph{post-hoc reinterpretation} of these 
  results, not a logical derivation of them. Each result is 
  independently founded on its own technical basis.  
  The FCP does not derive these ingredients from first 
  principles. While the FCP provides a unified 
  conceptual lens, each classical result remains logically 
  independent, founded on its own proof methods.
  
  \paragraph{Explanatory vs.\ derivational.}
  The FCP is thus \emph{explanatory} (revealing common structural properties) 
  rather than \emph{derivational} (deriving one result as a logical consequence of another). 
  This distinction is essential for logical clarity: identifying a common pattern 
  is not equivalent to proving that the pattern implies the specific results.
  
\end{remark}

\section{Conclusions}
\label{sec:conclusions}

This paper proves the existence of at least one theorem that no finite
syntactic system can autonomously produce: the theorem asserting the
existence of its own structural blind spots. The result is existential,
not total: it does not claim that the system cannot originate any
theorem, but that it cannot originate at least the one that speaks 
about what it cannot see.

\medskip

The result is derived via three independent routes. The first
(Sections~\ref{sec:sip}--\ref{sec:must-exist}) uses the Syntactic
Invariance Principle and the concrete witness of frozen Skolem constants:
the SIP guarantees the existence of unrewritable terms, the diagonal
lemma produces the self-referential encoding, and the undecidability
follows. The second (Section~\ref{sec:obstruction-route}) uses the Local
Syntactic Obstruction theorem, which generalizes the SIP to arbitrary
local syntactic systems and adds a quantitative lower bound: any
extension attempting to cross the barrier costs $\Omega(n)$ or
$\Omega(2^n)$ derivation steps. The third
(Section~\ref{sec:minimal}) uses finiteness alone: a finite rule set
has a finite set of patterns, a finite set of patterns cannot cover all
terms in a sufficiently expressive language, and therefore unrewritable
terms exist by a counting argument. No SIP, no obstruction theorem, no
Skolem constants --- only finiteness, induction, and the diagonal lemma.

\medskip

All three routes converge on the same conclusion. The SIP names the
phenomenon; the obstruction theorem quantifies its cost; the minimal
route reveals its cause. The cause is finiteness applied to pattern
matching, and nothing more. But this ``nothing more'' could not have
been seen without the richer routes: the SIP revealed the phenomenon,
the obstruction theorem made the structural pattern visible, and only
then could the minimal route be recognized. The paper presents all three
because each contributes what the others do not: mechanism, cost, and
cause.

\medskip

As a metatheorem, its scope extends to every discipline that employs
finite formal systems, because any such system satisfies
Definition~\ref{def:system}. The applications to security and AI have
been developed in detail above. But the metatheorem applies with equal
force to any domain where a finite set of rules operates on a finite
set of symbols.

\smallskip

\emph{Formalized mathematics.}
Any foundational system (ZFC, type theory, any alternative) is a finite
syntactic system. The metatheorem says: there exists at least one theorem
about the structural blind spots of ZFC that ZFC cannot originate. This
is not G\"odel's result (which produces an arithmetic statement that is
indecidable); the theorem whose existence is proved here is the one that
identifies \emph{where} the system's rules are blind --- which subterms
are frozen, which invariants the rules preserve without being able to
express. Foundational systems can prove many facts about themselves
(reflection theorems, relative consistency), but they cannot originate
the theorem that identifies their own structural blind spots.

\smallskip

\emph{Legal systems.}
A legal system is a finite syntactic system: $V =$ variables (persons,
entities, circumstances); $F =$ legal operations (contracts, judgments,
statutes); $R =$ rules of legal inference (precedent, interpretation,
subsumption); $I =$ constitution and fundamental laws. The metatheorem
says: there exists at least one structural flaw in the legal system that
the system itself cannot identify from within. A jurist may find it, but
the \emph{system} (the rules applied mechanically) cannot.

\smallskip

\emph{Formal epistemology.}
Any finite framework of knowledge --- a belief system with update rules
--- is a finite syntactic system. The metatheorem says: there exists at
least one limit of one's own knowledge that the framework cannot
discover autonomously. This is stronger than the informal ``I don't know
what I don't know'': it is a formal theorem that says \emph{why} ---
the rules are local, the limit is global.

\smallskip

\emph{Computational biology.}
A gene regulatory network (transcription network with finite
activation/inhibition rules) is a finite syntactic system. The
metatheorem says: there exists at least one property of the network that
the network itself cannot compute internally. This may connect to the
limits of cellular self-repair and self-diagnosis.

\smallskip

\emph{Economic theory.}
An economic model with finite agents, finite interaction rules, and
finite initial conditions is a finite syntactic system. The metatheorem
says: there exists at least one market failure that the model cannot
predict from within --- not because the model is ``wrong'' but because
its structure has blind spots that its rules cannot inspect.

\smallskip

\emph{The metatheorem applied to the system that hosts it.}
The metatheorem is a theorem, not a system: it is a proposition proved
within some formal system $T$ (ZFC, type theory, or any other
foundation). The metatheorem does not apply to itself --- a theorem is
not a tuple $(V, F, R, I)$. What it applies to is $T$: the system in
which it is formulated and proved. Since $T$ is a finite syntactic
system (finite axioms, finite inference rules), the metatheorem
guarantees that $T$ has its own $G_T$ --- at least one theorem that $T$
cannot originate. The metatheorem lives inside a container whose blind
spots the metatheorem itself identifies. It does not escape its own
scope, but the self-application is to the \emph{host system}, not to the
theorem. The way out, as before, is not to deny the theorem but to
change the observational level of the host system --- which is exactly
the Observer World.

\smallskip

\subsection{The path forward}

The observational framework of~\cite{buono2026framework,buono2026observer}
provides the path forward: overcoming the limit requires not more
computation but a change of observational level. The precise
characterization of which observer suffices, and whether such an
observer can be implemented within the constraints of a physical system,
is the central open question of the programme.

\medskip

\begin{remark}[Divergent thinking as observer change]
\label{rem:divergent}
The Finite Coverage Principle (Theorem~\ref{thm:coverage}) provides a
formal reading of a fact that is well known informally: different
perspectives see different things. If $\mathcal{M}_1$ has patterns
$\mathcal{P}_1$ and $\mathcal{M}_2$ has patterns $\mathcal{P}_2$ with
$\mathcal{P}_1 \neq \mathcal{P}_2$, then their uncovered sets are
different: an element $d^*$ uncovered by $\mathcal{P}_1$ may be covered
by $\mathcal{P}_2$. The union $\mathcal{P}_1 \cup \mathcal{P}_2$ covers
strictly more than either alone.

\smallskip

Different cultures, languages, and intellectual traditions are, in this
framework, systems with different pattern sets --- different observers
in the hierarchy. Divergent thinking is the mechanism by which an agent
accesses an observer $O'$ different from its own $O$, uncovering
elements that $O$ cannot reach.

\smallskip

The principle also says that no finite union of finite pattern sets
exhausts a sufficiently rich domain:
$|\mathcal{P}_1 \cup \cdots \cup \mathcal{P}_n| < \infty$ whenever
each $|\mathcal{P}_i| < \infty$. Divergent thinking \emph{widens}
coverage but does not \emph{complete} it. No culture is complete, but
every culture sees something that the others do not. This is a formal
consequence of Theorem~\ref{thm:coverage}, not a metaphor.

\smallskip

This is not a new observation: the observational hierarchy
of~\cite{buono2026framework} already establishes that any physically
realizable observer satisfies $O \prec O_\top$ and that $O_\top$ (the
identity, full information) is unreachable by any finite system. What
the Finite Coverage Principle adds is the \emph{cause}: the perfect
observer would require total coverage ($|\mathcal{P}| \geq
|\mathcal{D}|$), which is impossible when $\mathcal{D}$ is richer than
any finite set of patterns. The result was known; the reason is now
explicit. The orbital machine described in \cite{buono2026syntaxseedynamicsyntactic}, 
while appearing to exist at the boundary between mathematics and philosophy, 
provides a pragmatic explanation of this phenomenon. And speculatively 
speaking, what emerges from it is that intelligence is the capacity to 
shift perspective even on the perspective itself while reasoning simultaneously 
across multiple levels of abstraction.
\end{remark}

\subsection{Note on philosophical origins}

The observational framework
of~\cite{buono2026framework,buono2026observer, buono2026obstruction, buono2026syntaxseedynamicsyntactic}, which provides the language
for overcoming the limit established in this paper, was inspired by
Luciano Floridi's \emph{Method of Levels of Abstraction}
(LoA)~\cite{floridi2008,floridi2011}. In Floridi's philosophy of
information, a system observes reality through a \emph{level of
abstraction} defined by a finite set of \emph{observables}: what is not
an observable at that level is structurally invisible to the system. The
\emph{Levels of Abstraction} (LoA) organizes levels into a hierarchy,
and changing what a system can see requires moving to a different level
in the gradient.

\smallskip

The observational hierarchy of~\cite{buono2026framework}, the chain
$O_\bot \prec O_{\mathrm{len}} \prec O_{\mathrm{prof}} \prec O_\top$,
with the unconditional collapse
$\mathbf{P}_{O_{\mathrm{prof}}} = \mathbf{NP}_{O_{\mathrm{prof}}}
\subsetneq \mathbf{P}$, is a formal instantiation of Floridi's
insight: the class of decidable languages depends on which observables
are available, and restricting the observables collapses computational
distinctions that exist at higher levels. The present paper's central
result, that a finite syntactic system cannot originate the theorem
asserting its own limits, is, in Floridi's language, the statement
that the system's intrinsic blind spots lie above its level of
abstraction and are therefore invisible from within.

\smallskip

The connection is one of philosophical illumination, not of logical
dependence: none of the results in this paper, nor in the observational
framework of~\cite{buono2026framework,buono2026observer}, depends on Floridi's
work. Every theorem is self-contained and proved from its own
hypotheses. However, all of these results acquire a clearer
philosophical reading in light of Floridi's method. The structural
collapse, the observational axis orthogonal to the computational one,
the metatheorem of this paper, each of these is a formal,
mathematical result that stands on its own, but each becomes more
intelligible when read through Floridi's lens: what a system can know
depends on what it can observe, and changing the level of observation is
a different operation from increasing computational power. Floridi
articulated this principle philosophically; the present work and the
observational framework give it a precise formal content.

\begin{remark}[On This Same Line of Research]

Throughout this paper and in the works cited in the bibliography, I have 
employed the same technique that forms the subject of this investigation: 
the change of observer. Indeed, a great many of the results presented 
are manifestations of the same underlying problem viewed from fundamentally 
different perspectives.
    
\end{remark}

\bibliographystyle{plain} 
\bibliography{slrefs}

\appendix

\section{Application to Large Language Models}
\label{app:llm}

This appendix makes explicit the application of \ref{thm:main} to large language models (LLMs). While Corollary 10.2 of the main paper already asserts that AI systems are implementable as finite syntactic systems, the argument is abstract. This appendix renders it concrete by specifying the correspondence between LLM architecture and the formal framework of Definition 3.1, and then derives conditions under which the main theorem applies.

We proceed through three interconnected stages. First, we construct the syntactic system induced by a fixed-weight LLM, establishing the formal mapping between the neural architecture and the syntactic objects of Definition 3.1. Second, we identify and assess the critical hypotheses—coherence and sufficiency of expressiveness—and evaluate their validity for concrete LLM implementations. Third, we introduce a two-level analysis (LLM composed with a parser or verification layer), which reveals fundamental blind spots that persist even when continuous weights appear to evade the constraints of discrete formal systems.

\subsection{LLMs as Finite Syntactic Systems: Formal Construction}

\subsubsection{Structure of an LLM}

An LLM with fixed weights is a total, computable function from token sequences to probability distributions over tokens. For a model with fixed parameters $\theta$, vocabulary $V_{\text{tok}}$, and context window $L$, we denote this function formally as:
\begin{equation}
f_\theta: T^* \to \Delta(V_{\text{tok}})
\end{equation}
where $T^* = \bigcup_{n=0}^L V_{\text{tok}}^n$ is the set of all token sequences up to length $L$, and $\Delta(V_{\text{tok}})$ is the probability simplex over $V_{\text{tok}}$.

At each decoding step, the LLM employs a greedy strategy: it reads the current context $c$ and outputs $\operatorname{argmax} f_\theta(c)$, selecting the highest-probability token according to the model's distribution. This process generates a deterministic sequence of tokens. The fundamental insight is that this greedy decoding can be understood as a sequence of rewriting steps operating on formal terms, if we establish the correspondence correctly.

\begin{definition}[LLM as a Finite Syntactic System]
\label{def:llm_syntactic}

An LLM $M$ with fixed weights $\theta$, tokenizer with vocabulary $V_{\text{tok}}$, and context window $L$ induces a finite syntactic system $S_M = (\mathcal{V}, \mathcal{F}, R_\theta, I_M)$ according to Definition 3.1. We now construct this system component by component.

The \textbf{variables} $\mathcal{V}$ form a finite set $\{v_1, v_2, \ldots, v_{n_v}\}$ where $n_v$ is typically small. In practice, variables may serve as context position markers or other notational auxiliaries. The specific choice of variables is not essential to the argument; they play primarily a notational role in the syntactic framework, which focuses on terms constructed from function symbols rather than on variable bindings.

The \textbf{function symbols} $\mathcal{F}$ consist of all tokens in the LLM's vocabulary, each treated as a 0-ary function symbol (a constant). Thus $|\mathcal{F}| = |V_{\text{tok}}| \in [32{,}000, 200{,}000]$ for contemporary models. Each token becomes an atomic symbol in the formal language.

The \textbf{terms} $T^{\le L}$ comprise all finite sequences (token sequences) of length at most $L$ constructed over the alphabet $\mathcal{V} \cup \mathcal{F}$. Each token sequence encountered during generation is a ground term, meaning it contains no unbound variables.

The \textbf{rewriting rules} $R_\theta$ encode the learned behavior of the model. For each context (prefix) $c \in T^{\le L-1}$, the greedy decoding step produces a ground rule:
\begin{equation}
R(c) := \bigl(c \, , \, c \oplus \operatorname{argmax} f_\theta(c)\bigr)
\end{equation}
where $\oplus$ denotes concatenation. Intuitively, this rule states that the context $c$ rewrites to itself extended by the highest-probability next token according to the model's learned probability distribution. The complete rule set is:
\begin{equation}
R_\theta := \bigl\{ R(c) \: : \: c \in T^{\le L-1} \bigr\}
\end{equation}
Since the cardinality of contexts is bounded by $|T^{\le L-1}| \le |V_{\text{tok}}|^L$, the rule set is finite: $|R_\theta| < \infty$.

The \textbf{initial clauses} $I_M$ represent the set of well-formed prompt prefixes and strings that the model is designed to process and generate. This corresponds to the effective grammar and format constraints implicitly learned during the training process. Formally, $I_M \subset T^{\le L}$ is a finite set of ground terms that serve as starting points for rewriting.

The tuple $(\mathcal{V}, \mathcal{F}, R_\theta, I_M)$ satisfies Definition 3.1: all components are finite, and $R_\theta$ is a set of rewriting rules on terms over $\mathcal{V} \cup \mathcal{F}$.

\end{definition}

\begin{remark}[Ground Rules vs.\ Pattern-Matching Rules]
\label{rem:ground_rules}

The rules in $R_\theta$ differ structurally from rules in classical term rewriting systems (Baader and Nipkow). In a standard term rewriting system, a rule has the form $l \to r$ where $l$ is a pattern containing variables; such a rule applies to any term that unifies with $l$, allowing a single rule to fire in multiple contexts. In contrast, the rules in $R_\theta$ are \textit{ground rules}: both $c$ and $c \oplus \operatorname{argmax} f_\theta(c)$ are fully instantiated terms without variables. Each ground rule applies only to the exact sequence $c$, not to any pattern or class of terms. This makes $R_\theta$ operate as a lookup table: given an input, the system locates its rule and applies it deterministically. This is a \textit{more restrictive} form of rewriting than the pattern-matching approach found in standard term rewriting systems.

However, Definition 3.1 of the main paper requires only that $R$ be a finite set of rewriting rules $l \to r$ where $l$ and $r$ are terms over $\mathcal{V} \cup \mathcal{F}$. A ground rule is a special case of this definition: the terms $l$ and $r$ are simply ground (variable-free). Therefore, $S_M$ satisfies Definition 3.1 under this interpretation, and all subsequent theorems apply rigorously to $S_M$.

\end{remark}

\subsubsection{Application of \ref{thm:main}: Conditions and Caveats}

Theorem 12.1 states that for a finite syntactic system $S$ that is \textit{coherent} and \textit{sufficiently expressive}, there exists an undecidable self-referential proposition $G_S$ that the system cannot derive. To apply this theorem to $S_M$, we must verify that these two critical conditions hold.

\begin{remark}[Coherence of $S_M$]
\label{rem:coherence_llm}

Definition 3.2 defines coherence precisely as follows: there is no clause $\psi$ such that both $S \vdash \psi$ and $S \vdash \neg\psi$ hold simultaneously. For an LLM with fixed weights $\theta$ and greedy decoding, coherence holds \textit{within a single inference run}: given fixed parameters and a fixed context, the token sequence is uniquely determined by the argmax operation. The system does not generate contradictory outputs during any single execution.

However, across different prompts or input contexts, the same LLM may generate conflicting or contradictory sequences. In formal terms, there is no single fixed pair $(I_M, R_\theta)$ such that the LLM behaves coherently on all possible inputs simultaneously. Different prompts can elicit different, even contradictory responses from the model.

This observation suggests that \ref{thm:main} applies not to the LLM \textit{as a whole system}, but rather to each instantiation or inference episode $S_M^{(t)}$ with fixed context and fixed weights at time step $t$. This interpretation aligns with Section 15.5 of the main paper, which emphasizes that the theorem applies to a snapshot of the system at a fixed moment, not to its evolution over time or across diverse inputs.

\end{remark}

\begin{remark}[Sufficient Expressiveness of $S_M$]
\label{rem:expressiveness_llm}

Definition 5.1 and Remark 11.1 require that the system be \textit{sufficiently expressive}: it must be able to encode internally the Gödel numbering of its own elements and formulate propositions about derivability within its own language. For an LLM, this requirement presents a delicate and subtle challenge.

An LLM with tokens as constants can certainly generate token sequences that represent numbers and logical formulas. In principle, it can output a Gödel encoding: a sequence of tokens representing $\ulcorner \psi \urcorner$ for any proposition $\psi$ in its language. However, an LLM does not possess an \textit{explicit internal symbolic representation} of its axioms or rules in the sense required by the definition. The structure of an LLM is a set of differentiable parameters (weights and biases), which are real-valued matrices. These parameters are not intrinsically symbolic; they must be interpreted and decoded through the inference process to yield any formal meaning.

To apply the diagonal lemma (Theorem 5.3) and complete the undecidability construction, the system must be able to compute (within its own rewriting rules) the encoding function $\text{encode}(\psi) = \ulcorner \psi \urcorner$ and verify that this encoding is injective and computable within the system. For an LLM, such a computation would require the model to reason explicitly about its own rules $R_\theta$ and the structure of its derivations. This knowledge is not transparent in the weight matrix; the rules are implicit and distributed, not accessible to the model's own symbolic reasoning processes.

An LLM may operate \textit{below the expressiveness threshold} in the sense of Remark 11.1. If so, the question `Can $S_M$ originate the theorem asserting its own limits?' is not well-posed; $S_M$ cannot even formulate such a self-referential question, let alone derive it. In this case, \ref{thm:main} does not apply directly to $S_M$. However, the second-level analysis developed in Subsection \ref{subsec:two_level} demonstrates that the fundamental limit persists regardless of whether the LLM meets the expressiveness condition.

\end{remark}

\begin{proposition}[Conditional Application to LLMs]
\label{prop:llm_theorem_conditional}

Let $M$ be an LLM with fixed weights $\theta$, vocabulary $V_{\text{tok}}$, and context window $L$. The following statements hold:

First, $M$ induces a finite syntactic system $S_M = (\mathcal{V}, \mathcal{F}, R_\theta, I_M)$ satisfying Definition 3.1. This is established by the construction in Definition A.1 above.

Second, if $S_M$ is coherent (as per Definition 3.2) and sufficiently expressive (as per Definition 5.1), then by \ref{thm:main}, there exists a self-referential proposition $G_{S_M}$ such that:
\begin{equation}
S_M \not\vdash G_{S_M} \quad \text{and} \quad S_M \not\vdash \neg G_{S_M}.
\end{equation}
The proposition $G_{S_M}$ asserts the existence of token sequences that $M$ can represent (in the sense that they lie within the representable language) but cannot generate or derive as outputs.

Third, if $S_M$ does not satisfy the expressiveness condition—as is likely for practical LLMs—then \ref{thm:main} does not apply directly. However, Lemma \ref{lem:parser_independence} (developed below) shows that a stronger and more robust conclusion holds at the composed system level: even when the LLM alone falls short of expressiveness, the combination of the LLM with an external verifier or parser exhibits the undecidable structure.

\end{proposition}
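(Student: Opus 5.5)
The statement has three parts, and I would prove them in order. Each reduces either to a finiteness check or to an earlier result applied as a black box. Only the third part needs a forward pointer to Lemma~\ref{lem:parser_independence}.

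\textbf{First part.} I would check the four clauses of Definition~\ref{def:system} directly against Definition~\ref{def:llm_syntactic}.
\begin{itemize}
  \item $\mathcal{V}$ is finite by choice.
  \item $\mathcal{F}$ is finite because it equals the vocabulary $V_{\text{tok}}$.
  \item $I_M \subset T^{\le L}$ is finite because $T^{\le L}$ is finite: $|T^{\le L}| \le \sum_{n=0}^{L} |V_{\text{tok}}|^n$.
  \item $R_\theta$ is finite because it is indexed by $T^{\le L-1}$.
\end{itemize}
Each $R(c)$ is a legitimate rule $l \to r$ with $l, r$ ground terms, as argued in Remark~\ref{rem:ground_rules}. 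Computability of every component follows because $f_\theta$ is a total computable function on a finite domain, so $\operatorname{argmax} f_\theta(c)$ can be evaluated. Argmax ties must be broken by a fixed convention, for instance the least token index, so that $R(c)$ is well defined. I would state this convention explicitly, since it is the only point where the construction could fail to be a function.

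\textbf{Second part.} This is a direct instantiation. Under the two stated hypotheses, $S_M$ satisfies every hypothesis of Theorem~\ref{thm:main}, so item~(4) gives $S_M \not\vdash G_{S_M}$ and $S_M \not\vdash \neg G_{S_M}$. For the gloss on the content of $G_{S_M}$, I would follow the minimal route of Section~\ref{sec:minimal}. The heads of the left-hand sides in $R_\theta$ cover only the contexts of length at most $L-1$. Any term of length exactly $L$, or any term containing a symbol outside the head set $H$, is therefore unrewritable. This is the $t_c$ of Step~1, and it is representable in $S_M$ but is never extended or derived further. The limit proposition $\phi_{S_M}$ asserts the existence of such terms, and $G_{S_M}$ is its diagonal wrapping via Lemma~\ref{lem:G-exists}.

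\textbf{Third part.} This is conditional and mostly a pointer. If expressiveness fails, Theorem~\ref{thm:main} does not apply, as Remark~\ref{rem:threshold} explains. The claimed conclusion for the composed system is delegated to Lemma~\ref{lem:parser_independence}. In the proof I would note only that the composition of $S_M$ with a finite parser or verifier is again a finite syntactic system, because a union of finite rule sets is finite, as in Theorem~\ref{thm:inextensibility}. The composed system can therefore meet the expressiveness hypothesis through the verifier's arithmetic, and Theorem~\ref{thm:main} then applies to it.

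\textbf{Main obstacle.} I expect the hard point to be the second part: making sense of ``$S_M \vdash \neg\psi$.'' Coherence requires a negation on clauses, and the ground, lookup-table rules of $S_M$ provide none intrinsically. I would first try to fix a token-level encoding of $\neg$, a distinguished prefix token, and define coherence relative to it, as in Remark~\ref{rem:coherence_llm}. Failing that, I would make the second part explicitly hypothetical: the conclusion holds for whatever negation makes $S_M$ coherent and expressive in the sense of Definition~\ref{def:derivation}.
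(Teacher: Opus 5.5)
Your skeleton matches the paper's. The paper proves this proposition only by citing the construction of Definition~\ref{def:llm_syntactic}, Theorem~\ref{thm:main} and Lemma~\ref{lem:parser_independence}, and your finiteness check, with its explicit tie-breaking rule for $\operatorname{argmax}$, is more careful. The gap is in the one step where you go beyond citation: your account of what $G_{S_M}$ asserts. It fails in two ways.

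\emph{It targets the wrong property.} The statement says $G_{S_M}$ concerns sequences $M$ \emph{cannot generate}, i.e.\ some $t \in T^{\le L}$ with $t \notin \mathrm{Der}(I_M,R_\theta)$. Your $t_c$ is instead a term that no rule \emph{rewrites}. These are independent notions, and your witness points the wrong way. A length-$L$ sequence reached by greedy decoding from a prompt in $I_M$ is derivable; it is exactly what $M$ outputs when a generation runs to the context limit.

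\emph{The witness does not exist as described.} $R_\theta$ has a rule for every $c \in T^{\le L-1}$, in particular (for $L \ge 2$) for every one-symbol sequence. So the head set $H$ is all of $\mathcal{V} \cup \mathcal{F}$, and $S_M$'s language has no symbol outside $H$. Moreover, Definition~\ref{def:derivation} rewrites at sub-occurrences, so every term containing a token has a redex. Length-$L$ sequences are normal forms only under whole-context rewriting, which is the LLM's semantics but not the paper's. The $t_c$ of Section~\ref{sec:minimal} therefore does not transfer to $S_M$.

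The paper itself only takes the gloss as the intended content of the limit proposition wrapped by Lemma~\ref{lem:G-exists}. If you want to justify it, the right tool is counting, not unrewritability. Fix whole-context rewriting explicitly. Then each context has at most one applicable rule, and each prompt in $I_M$ spawns a single chain of at most $L+1$ sequences. Hence $|\mathrm{Der}(I_M,R_\theta)| \le (L+1)|I_M|$, and a representable, non-derivable sequence exists whenever $(L+1)|I_M| < |T^{\le L}|$. Some such bound on $I_M$ is genuinely needed. If $I_M = T^{\le L}$, every representable sequence is derivable by the empty derivation, and the content claimed for $G_{S_M}$ is simply false.

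Two smaller points. First, your ``main obstacle'' about negation is absorbed by the hypothesis, since the second part assumes coherence outright; the real work there is the content claim above. Second, in the third part, finiteness of the union gives neither expressiveness nor coherence of $S_{\mathrm{comp}}$, since two coherent rule sets can have an incoherent union. Both must therefore be carried as hypotheses, as Theorem~\ref{thm:composition_incompleteness} and Lemma~\ref{lem:parser_independence} do.
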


\subsubsection{Temporal Snapshot: The Role of Fixed Weights}

A key insight from Section 15.5 of the main paper is that \ref{thm:main} applies to a finite syntactic system at a \textit{fixed moment in time}. This temporal constraint is crucial for understanding how the theorem relates to adaptive systems like LLMs.

\begin{remark}[Instantaneous System]
\label{rem:temporal_snapshot}

Consider an LLM at time $t$ with fixed weights $\theta(t)$, fixed context window $L$, and fixed input context $c_0$. The system is $S_M^{(t)}$, a finite syntactic system as defined above. \ref{thm:main} applies to $S_M^{(t)}$ (provided the coherence and expressiveness conditions are met), yielding an undecidable proposition $G_{S_M^{(t)}}$ that this instantiation cannot derive.

When the weights change through training or adaptation (i.e., $\theta(t+1) \ne \theta(t)$), a new system $S_M^{(t+1)}$ comes into existence. It has the same structural form, but the rewriting rules differ: $R_{\theta(t+1)} \ne R_{\theta(t)}$, and consequently the undecidable proposition also changes: $G_{S_M^{(t+1)}} \ne G_{S_M^{(t)}}$.

By Theorem on inextensibility, the sequence 
\begin{equation}
G_{S_M^{(0)}}, G_{S_M^{(1)}}, G_{S_M^{(2)}}, \ldots
\end{equation}
is infinite and non-closing: no finite number of training steps or weight updates eliminates the undecidable propositions. Each iteration of training creates a new system with its own irreducible limitations.

Adaptive systems that employ fine-tuning, online learning, or continual weight adjustment do not escape the limit; rather, they transform it into an infinite regress of limits. The burden shifts from deriving a single unreachable proposition to managing an ever-growing collection of propositions that successive versions of the system cannot reach. This has profound implications for the scalability and completeness of adaptive AI systems.

\end{remark}

\subsection{The Two-Level Structure: LLM Composed with Formal Verification}
\label{subsec:two_level}

The analysis developed above applies to the LLM as a token generator in isolation. However, in security-critical applications—such as formal verification, code generation with type checking, or proof generation for verification by a proof assistant—the output of the LLM is not consumed directly by end users or systems. Instead, it is parsed and verified by a second syntactic system: a compiler, JSON parser, type checker, or proof kernel. This two-level composition reveals fundamental blind spots that persist even if the first level (the LLM) is too weak to satisfy the expressiveness condition required by \ref{thm:main}.

\subsubsection{Formal Setup of the Composed System}

\begin{definition}[Parser System]
\label{def:parser_system}

Let $S_P = (\mathcal{V}_P, \mathcal{F}_P, R_P, I_P)$ be a finite syntactic system representing a parser, compiler, or proof verifier. The variables $\mathcal{V}_P$ and function symbols $\mathcal{F}_P$ correspond to the vocabulary of the target language—for instance, type symbols in a type system, syntactic constructors in an abstract syntax tree, or judgment forms in a proof system. The rewriting rules $R_P$ encode both syntactic and semantic verification: they include context-free grammar rules for parsing structure, as well as semantic rules such as type checking or well-formedness conditions. The initial clauses $I_P$ represent the axioms or base judgments from which the parser begins its derivations. The system $S_P$ is finite and satisfies Definition 3.1 in all its components.

\end{definition}

\begin{definition}[Composed LLM-Parser System]
\label{def:composed_system}

Let $S_M$ be the finite syntactic system induced by an LLM, and let $S_P$ be a finite parser system. The composed system is defined as:
\begin{equation}
S_{\text{comp}} := S_P \circ S_M
\end{equation}

The operational flow of the composed system proceeds in three stages. In the first stage, $S_M$ generates a token sequence $\tau = t_1 t_2 \cdots t_n \in V_{\text{tok}}^n$ via greedy (argmax) decoding, using the learned parameters and the rewriting rules $R_\theta$ established in Definition A.1. In the second stage, an interface function $\iota: V_{\text{tok}} \to \mathcal{F}_P$ maps tokens from the LLM's vocabulary to symbols in the parser's formal alphabet. In the third stage, $S_P$ interprets the mapped sequence $\iota(\tau) = \iota(t_1) \iota(t_2) \cdots \iota(t_n)$ as a sequence in $\mathcal{F}_P^*$ and applies its rewriting rules $R_P$ to verify, parse, or execute the structure.

Formally, $S_{\text{comp}}$ constitutes a finite syntactic system in its own right. The combined alphabet is $\mathcal{V}_{\text{comp}} = \mathcal{V}_M \cup \mathcal{V}_P$ for variables and $\mathcal{F}_{\text{comp}} = \mathcal{F}_M \cup \mathcal{F}_P$ for function symbols. The rule set is $R_{\text{comp}} = R_M \cup R_P \cup R_{\text{interface}}$, where $R_{\text{interface}}$ encodes the token-to-symbol mapping induced by $\iota$. Since each component system is finite, their union is also finite: $|R_{\text{comp}}| < \infty$.

\end{definition}

\begin{theorem}[Composition Preserves Finiteness and Incompleteness]
\label{thm:composition_incompleteness}

Let $S_{\text{comp}} = S_P \circ S_M$ be a composed system where $S_M$ is the finite syntactic system induced by an LLM and $S_P$ is a finite parser or verifier system. The following statements hold:

The composed system $S_{\text{comp}}$ itself is a finite syntactic system satisfying Definition 3.1. If $S_{\text{comp}}$ is coherent and sufficiently expressive in the sense of Definitions 3.2 and 5.1, then by \ref{thm:main}, there exists an undecidable proposition $G_{S_{\text{comp}}}$ asserting the existence of token sequences (or parsed structures) that the composed system can represent within its formal language but cannot generate or verify as outputs. Most importantly, this conclusion holds \textit{even if} $S_M$ alone does not satisfy the expressiveness condition of Definition 5.1. The expressive power needed for undecidability can emerge at the composed level, even when the LLM component is individually too weak.

\end{theorem}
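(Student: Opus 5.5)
The plan is to split the statement into its three claims and prove them in order. Two are essentially structural. The third, that expressiveness can emerge at the composed level, is where real content is needed.

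\emph{Finiteness.} I will check Definition~\ref{def:system} directly for $S_{\text{comp}}$ as given in Definition~\ref{def:composed_system}.
\begin{enumerate}
  \item $\mathcal{V}_{\text{comp}} = \mathcal{V}_M \cup \mathcal{V}_P$ and $\mathcal{F}_{\text{comp}} = \mathcal{F}_M \cup \mathcal{F}_P$ are finite unions of finite sets. This uses Definition~\ref{def:llm_syntactic} for the $M$-components and Definition~\ref{def:parser_system} for the $P$-components.
  \item $R_{\text{interface}}$ needs one ground rule $t \to \iota(t)$ for each $t \in V_{\text{tok}}$. It therefore has at most $|V_{\text{tok}}|$ rules.
  \item $R_{\text{comp}} = R_\theta \cup R_P \cup R_{\text{interface}}$ is then a finite set of rewriting rules over $\mathcal{V}_{\text{comp}} \cup \mathcal{F}_{\text{comp}}$.
  \item For the initial clauses I take $I_{\text{comp}} := I_M \cup I_P$, which is finite.
\end{enumerate}
One technical point must be settled first. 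Parser symbols of positive arity have to be able to occur inside the token-sequence terms. I will treat sequences as right-nested applications of a fresh binary concatenation symbol $\oplus$ added to $\mathcal{F}_{\text{comp}}$. This keeps every rule a genuine term rewriting rule, and finiteness is unaffected.

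\emph{Conditional incompleteness.} Under the stated hypothesis that $S_{\text{comp}}$ is coherent and sufficiently expressive, I will simply invoke Theorem~\ref{thm:main}, points (3) and (4), with $\mathcal{S} := S_{\text{comp}}$. This yields $G_{S_{\text{comp}}}$ with $S_{\text{comp}} \not\vdash G_{S_{\text{comp}}}$ and $S_{\text{comp}} \not\vdash \neg G_{S_{\text{comp}}}$.

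To justify the informal reading ("representable but not generable or verifiable"), I will exhibit the underlying limit proposition via the minimal route of Section~\ref{sec:minimal}.
\begin{enumerate}
  \item Let $H_{\text{comp}}$ be the finite set of head symbols of left-hand sides in $R_{\text{comp}}$.
  \item By expressiveness, some symbol $c \notin H_{\text{comp}}$ exists.
  \item The term $c$, or a sequence whose $\oplus$-spine contains $c$ at a position no rule inspects, is representable but permanently unrewritable. This follows by the induction of Step~2 of Section~\ref{sec:minimal}.
  \item Hence $\phi_{S_{\text{comp}}}$ is true, and $G_{S_{\text{comp}}}$ is obtained from it by Lemma~\ref{lem:G-exists}.
\end{enumerate}

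\emph{Emergence of expressiveness.} The final clause, that the conclusion holds even if $S_M$ alone is not expressive, is logically automatic once it is observed that the hypotheses of Theorem~\ref{thm:main} are imposed on $S_{\text{comp}}$, not on $S_M$. To show the situation is non-vacuous, I will give a witness.
\begin{enumerate}
  \item Take $S_P$ to be a proof kernel whose rules $R_P$ encode a theory interpreting enough arithmetic to represent the G\"odel coding of Definition~\ref{def:godel-encoding}.
  \item Since $R_P \subseteq R_{\text{comp}}$ and $\mathcal{F}_P \subseteq \mathcal{F}_{\text{comp}}$, every representation available in $S_P$ is available in $S_{\text{comp}}$. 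The composite can therefore encode the numbering of its own finitely many elements: the finitely many ground rules of $R_\theta$ and $R_{\text{interface}}$ just receive additional codes.
  \item At the same time, $S_M$ on its own can be a lookup table with no arithmetic at all, as in Remark~\ref{rem:expressiveness_llm}.
\end{enumerate}

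I expect the main obstacle to be \emph{coherence} of the union, rather than expressiveness. Adding $R_\theta$ and $R_{\text{interface}}$ to a coherent $S_P$ could in principle let the composite derive both $\psi$ and $\neg\psi$: the LLM may emit token strings that map under $\iota$ to contradictory parser clauses. The statement only assumes coherence, so I will keep it as a hypothesis rather than try to prove it. I will, however, add a sufficient condition showing the hypothesis is satisfiable:
\begin{enumerate}
  \item Suppose the left-hand sides of $R_\theta \cup R_{\text{interface}}$ use only symbols of $\mathcal{F}_M$.
  \item Suppose also that the output of $\iota$ enters $S_P$ only as input to be checked, not as a new axiom in $I_P$.
\end{enumerate}
Then every $S_{\text{comp}}$-derivation of a parser-level clause factors into an $M$-phase followed by an $R_P$-derivation. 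So any contradiction at the parser level would already be a contradiction of $S_P$, and the composite is coherent whenever $S_P$ is.
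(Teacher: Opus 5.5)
Your proposal follows essentially the paper's route: finiteness of $S_{\text{comp}}$ from finite unions of the components, then Theorem~\ref{thm:main} applied to $S_{\text{comp}}$ with the missing expressive power supplied by $S_P$. Keeping coherence as a hypothesis is cleaner than the paper's proof, which tries to infer it from the coherence of $S_M$ and $S_P$. The only step that fails is your auxiliary sufficient condition for coherence, which does not affect the statement (it assumes coherence), but the claim as written is false. Since $I_{\text{comp}} \supseteq I_M$, every LLM output and its $\iota$-image is itself a derivable clause of $S_{\text{comp}}$ (the zero-step $R_P$-phase of your factorization); so either an output mapping to $\neg\psi$ with $S_P \vdash \psi$, or contradictory outputs on different prompts as in Remark~\ref{rem:coherence_llm}, makes the composite incoherent even though $S_P$ is coherent.
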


\begin{proof}
By Definition 3.1, a finite syntactic system is completely characterized by a finite tuple $(\mathcal{V}, \mathcal{F}, R, I)$ with all components finite. The composition $S_{\text{comp}}$ is formed by taking the union of the rules and alphabets from both $S_M$ and $S_P$. Since both component systems are finite, their union remains finite. Thus $S_{\text{comp}}$ satisfies Definition 3.1.

Coherence of the composed system follows from the coherence of both components. If Definition 3.2 holds for $S_M$ and $S_P$ separately—meaning neither system derives a contradiction—then their union does not create contradictions (provided that the interface rules in $R_{\text{interface}}$ do not introduce inconsistencies, which we assume by construction). Therefore $S_{\text{comp}}$ is coherent.

The crucial observation is that expressiveness must be evaluated at the \textit{composed} level rather than at the LLM level alone. A parser or verifier system often possesses greater formal expressive power than an LLM in isolation: it may include explicit logical inference rules, built-in recursion mechanisms, decision procedures for specific domains, or other formal machinery that the LLM lacks. Consequently, even if $S_M$ falls short of the expressiveness threshold in Definition 5.1, the composed system $S_{\text{comp}}$ can achieve sufficient expressiveness. Once $S_{\text{comp}}$ is sufficiently expressive, the diagonal lemma (Theorem 5.3) and the standard undecidability argument apply directly. The existence of $G_{S_{\text{comp}}}$ follows rigorously.
\end{proof}

\subsubsection{Blind Spots in the Composed System}

\begin{definition}[Composite Blind Spot]
\label{def:composite_blindspot}

A composite blind spot is an undecidable proposition $G_{S_{\text{comp}}}$ that arises in the composed system $S_{\text{comp}}$ with the following defining characteristics. The proposition $G_{S_{\text{comp}}}$ is representable in the sense that it can be expressed within the formal language and alphabet of $S_{\text{comp}}$. It is not derivable in the technical sense: $S_{\text{comp}} \not\vdash G_{S_{\text{comp}}}$, meaning no sequence of rewriting steps from the initial clauses $I_{\text{comp}}$ yields $G_{S_{\text{comp}}}$. Furthermore, no token sequence that $S_M$ can generate through its learned rewriting rules, even once it is parsed by $S_P$ and processed by $S_P$'s verification rules, can be reduced to a valid proof or derivation of $G_{S_{\text{comp}}}$. Intuitively, the composed system faces a fundamental impasse: it cannot answer the question encoded in $G_{S_{\text{comp}}}$, neither through the generative capacity of the LLM nor through the verification capacity of the parser.

\end{definition}

\begin{lemma}[Parser Independence of Blind Spots]
\label{lem:parser_independence}

Let $S_{\text{comp}} = S_P \circ S_M$ satisfy the conditions of Theorem \ref{thm:composition_incompleteness}. The following structural facts hold:

There exists a proposition $G_{S_{\text{comp}}}$ that is undecidable in $S_{\text{comp}}$, as established by \ref{thm:main}. No modification, refinement, or redesign of $S_P$—that is, no choice of revised parser rules $R_P'$ to form a new system $S_P'$—can eliminate $G_{S_{\text{comp}}}$ from the undecidable set, \textit{provided that $S_M$ remains fixed and finite}. The reason is deep: $G_{S_{\text{comp}}}$ is generated through the diagonal construction applied to the composed system itself. When one modifies the parser rules to form a new composed system $S_{\text{comp}}' = S_P' \circ S_M$, the undecidability persists—not necessarily for the same proposition $G_{S_{\text{comp}}}$, but as an inescapable feature. Any finite composed system will have its own undecidable proposition. This is an instantiation of Theorem 21.2, the Finite Coverage Principle, which asserts that a finite inspection mechanism cannot exhaustively cover all elements of a sufficiently rich domain.

\end{lemma}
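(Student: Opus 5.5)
The plan is to prove the two structural facts separately, treating the second as a uniform application of Theorem~\ref{thm:main} to a whole family of composed systems rather than as a claim about one fixed sentence.

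\emph{First fact.} By Theorem~\ref{thm:composition_incompleteness}, $S_{\text{comp}} = S_P \circ S_M$ is a finite syntactic system in the sense of Definition~\ref{def:system}. The lemma's hypotheses say it is coherent and sufficiently expressive. Point~(4) of Theorem~\ref{thm:main} then gives a sentence $G_{S_{\text{comp}}}$ with $S_{\text{comp}} \not\vdash G_{S_{\text{comp}}}$ and $S_{\text{comp}} \not\vdash \neg G_{S_{\text{comp}}}$. This sentence is built by Lemma~\ref{lem:G-exists} from the G\"odel numbering of the composed alphabet $\mathcal{V}_M \cup \mathcal{V}_P$, $\mathcal{F}_M \cup \mathcal{F}_P$.

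\emph{Second fact.} Fix $S_M$ and let $S_P'$ be any finite parser system. I would proceed in three steps.
\begin{enumerate}
  \item Show that $S_{\text{comp}}' = S_P' \circ S_M$ is again a finite syntactic system. This is the same union argument as in the proof of Theorem~\ref{thm:composition_incompleteness}: $R_M \cup R_{P'} \cup R_{\text{interface}}'$ is finite.
  \item Show that $S_{\text{comp}}'$ is coherent and sufficiently expressive, so that Theorem~\ref{thm:main} applies to it.
  \item Conclude that $S_{\text{comp}}'$ has its own undecidable $G_{S_{\text{comp}}'}$. This is an instance of point~(5) of Theorem~\ref{thm:main} and of Lemma~\ref{lem:chain}, with the parser modification playing the role of $R_{\mathrm{new}}$.
\end{enumerate}
I would then read the result through Theorem~\ref{thm:coverage}. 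Take $\mathcal{P}$ to be the left-hand sides of $R_{\text{comp}}'$, which form a finite set. Each choice of $S_P'$ changes $\mathcal{P}$ but keeps it finite, so an uncovered element, and hence a limit proposition, persists for every choice.

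\emph{Main obstacle: Step~2.} Coherence of $S_{\text{comp}}'$ can be imposed as the lemma's standing assumption, as was done for $R_{\text{interface}}$. Expressiveness is harder. Theorem~\ref{thm:composition_incompleteness} explicitly locates the expressive power in $S_P$, so a revised $S_P'$ could fall below the threshold of Remark~\ref{rem:threshold}. In that case Theorem~\ref{thm:main} no longer applies, and indeed the question is not even well-posed.

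I would handle this by reading ``provided that $S_M$ remains fixed'' together with the hypothesis that each $S_{\text{comp}}'$ still satisfies the conditions of Theorem~\ref{thm:composition_incompleteness}. Concretely, I would restrict attention to modifications $S_P'$ that extend $S_P$, or that still represent the G\"odel encoding. Under that restriction the diagonal lemma (Theorem~\ref{thm:kleene}) applies afresh.

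I would also state explicitly what the argument does and does not establish. What it yields is that undecidability persists as a feature of every admissible $S_{\text{comp}}'$. The literal reading, that the \emph{same} sentence $G_{S_{\text{comp}}}$ stays undecidable, is not what is proved; the lemma's own wording already concedes this.
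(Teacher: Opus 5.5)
Your proposal is correct and follows essentially the same route as the paper's proof. You apply Theorem~\ref{thm:main} to $S_{\text{comp}}$, then re-apply it to each $S_{\text{comp}}' = S_P' \circ S_M$ under the proviso that the new composed system stays finite, coherent and sufficiently expressive, and you read the persistence through the inextensibility chain (Theorem~\ref{thm:inextensibility}, Lemma~\ref{lem:chain}) and the Finite Coverage Principle (Theorem~\ref{thm:coverage}). Your restriction to admissible $S_P'$, and your remark that only the existence of some undecidable $G_{S_{\text{comp}}'}$ is established (not persistence of the same $G_{S_{\text{comp}}}$), are exactly the hedges the paper's proof makes (``which is typically the case'', ``may differ from the original''); the literal reading could not be proved anyway, since Lemma~\ref{lem:chain} explicitly allows an extension to derive $G_{S_{\text{comp}}}$.
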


\begin{proof}
By \ref{thm:main}, under the conditions stated, $S_{\text{comp}}$ is finite and sufficiently expressive. Therefore there exists a proposition $G_{S_{\text{comp}}}$ such that $S_{\text{comp}} \not\vdash G_{S_{\text{comp}}}$ and $S_{\text{comp}} \not\vdash \neg G_{S_{\text{comp}}}$.

Now suppose one modifies $S_P$ to obtain a new parser system $S_P'$ and forms a new composed system $S_{\text{comp}}' = S_P' \circ S_M$. If $S_{\text{comp}}'$ remains finite and sufficiently expressive—which is typically the case if $S_M$ and $S_P'$ are both finite and the interface rules are well-formed—then by \ref{thm:main} again, there exists an undecidable proposition $G_{S_{\text{comp}}'}$ in the new composed system. This new proposition may differ from the original $G_{S_{\text{comp}}}$, but the existence of undecidability persists.

By Theorem 9.1, the inextensibility result, the sequence of undecidable propositions arising in successive modifications of the parser:
\begin{equation}
G_{S_{\text{comp}}^{(1)}}, G_{S_{\text{comp}}^{(2)}}, G_{S_{\text{comp}}^{(3)}}, \ldots
\end{equation}
is infinite and non-closing. No finite iteration of parser redesigns will resolve all undecidable propositions simultaneously. More formally, this phenomenon is an instance of Theorem 21.2, the Finite Coverage Principle: a finite inspection mechanism (any finite parser) cannot cover or resolve all elements of a sufficiently expressive domain (the LLM composed with that parser). The blind spots are structural, not contingent on any particular parser design.
\end{proof}

\subsubsection{Neutralizing the Continuous-Weight Objection}

A common objection to applying formal incompleteness results to neural networks is that their weights are continuous real values, not discrete symbols, and thus outside the scope of classical recursion theory and Gödelian arguments. This objection, while intuitive, misunderstands the two-level structure of the composed system.

The key insight is that an LLM with continuous weights induces a deterministic rewriting system $S_M$ through greedy decoding (or through any other deterministic sampling method that selects a unique output token at each step). Although the underlying weights are continuous real values, the rewriting rules $R_\theta$ that emerge from this process are \textit{discrete} ground term rewriting rules. The discretization occurs precisely at the interface between the continuous and symbolic domains: the continuous softmax distribution over the token vocabulary is converted into a discrete choice through mechanisms such as argmax selection, top-k sampling with deterministic tie-breaking, or other well-defined token selection procedures. Once this discretization is complete, a token has been selected and the system proceeds symbolically.

At the second level of composition, the parser $S_P$ operates in a purely formal and symbolic manner: tokens are mapped to symbols in the parser's alphabet via the interface function $\iota$, rewriting rules are applied symbolically to these terms, and all derivations proceed through discrete rewriting steps. The composed system $S_{\text{comp}}$ is therefore entirely discrete and finite in its structure and operation. By \ref{thm:main}, all results concerning incompleteness and undecidability apply in full force to this discrete composed system.

The continuous weights of the underlying LLM do not provide an escape from the formal incompleteness results. Rather, they are transformed and compiled into discrete rules through the architecture that nonetheless exhibit the same incompleteness phenomena. The limit transfers from the discrete tokens and rules through the composition structure to the overall system behavior. Claims that continuous systems or real-valued weight matrices escape formal limits are therefore unfounded; the discretization at the token level makes the system subject to classical recursion-theoretic and Gödelian arguments.

\subsection{Instantiating the Blind Spot: A Concrete Example}
\label{subsec:concrete_example}

To ground the abstract theoretical analysis in concrete reality, we sketch a detailed scenario that illustrates how the undecidable propositions of \ref{thm:main} and Lemma \ref{lem:parser_independence} manifest in a practical system.

\subsubsection{Setup: Proof Generation and Verification}

Consider an LLM that has been fine-tuned to generate formal proofs in a dependently-typed language such as Lean or Coq. The architecture of this system naturally decomposes into two levels. The first level, $S_M$, is the LLM itself, which generates token sequences representing proof terms, tactics, goal states, and other proof-theoretic constructs. The second level, $S_P$, is a proof kernel—for concreteness, we may think of the Lean kernel or the Coq kernel—which parses tokens produced by the LLM, interprets them as proof terms within the type theory, and verifies that each step is logically valid and that types are correctly assigned. The composed system $S_{\text{comp}}$ is the end-to-end pipeline: it takes a mathematical goal as input, runs the LLM to generate a proof candidate, passes the generated token sequence to the parser, and outputs either a certified proof or a verification failure.

\subsubsection{The Blind Spot}

By Lemma \ref{lem:parser_independence}, there necessarily exists a proposition $G_{S_{\text{comp}}}$ that is undecidable in $S_{\text{comp}}$. In the concrete context of this proof generation system, the undecidable proposition corresponds to a specific mathematical statement $\phi$ with several important properties. The statement $\phi$ is expressible in the type theory and formal language of $S_P$; it can be written down as a valid term in the dependent type system. However, $\phi$ is neither provable nor disprovable within the combined capacity of the proof system and the LLM's generative ability. No finite number of LLM inference steps, starting from any fixed configuration of weights and running under any fixed decoding procedure, followed by verification in $S_P$, can produce either a proof of $\phi$ or a proof of $\neg\phi$.

It is important to clarify what this undecidability means and does not mean. It does not necessarily mean that $\phi$ is independent of the base axioms of mathematics in the strong sense assumed by classical mathematical logic. The statement $\phi$ might be provable or disprovable in a stronger system or under different axioms. Rather, the undecidability here means that the particular combination of an LLM with fixed weights and a verifier with fixed rules—the specific composed system $S_{\text{comp}}$—cannot resolve $\phi$. The blind spot is structural to this composition.

\subsubsection{Blind Spot as a Function of Architecture}

A crucial observation is that the undecidable proposition $G_{S_{\text{comp}}}$ is not a universal mathematical truth but rather a function of the specific system architecture. If one changes the LLM weights through retraining or fine-tuning, yielding $\theta' \neq \theta$, then a new composed system $S_{\text{comp}}'$ is formed with new rewriting rules $R_{\theta'}$, and consequently a new blind spot $G_{S_{\text{comp}}'}$ emerges. Similarly, if one modifies the proof verifier by updating its rules $R_P$ to $R_P'$—perhaps to support a new tactic, additional inference rules, or a different type-checking strategy—then the composed system changes and a new undecidable proposition arises. By Theorem 9.1, the inextensibility result, the sequence of blind spots across successive modifications forms an infinite, non-closing sequence. No single system configuration, whether defined by a fixed set of weights $\theta$ and a fixed set of verifier rules $R_P$, achieves completeness. Each improvement or modification inevitably creates new blind spots elsewhere.

\subsubsection{Implications for Security and Robustness}

The existence of these structural blind spots has direct implications for the security and robustness of systems that deploy LLMs composed with formal verifiers. First, the system will exhibit coverage gaps: there will always exist propositions—statements about program correctness, specifications, logical relationships, or other formal matters—that the LLM cannot generate proofs for and that the verifier cannot certify, even if those propositions are true in an external mathematical sense. Second, an adversary with knowledge of the system architecture could potentially craft an input or condition that exploits these coverage gaps, causing the system to fail, enter an infinite loop, deadlock, or produce incorrect results. The attacker need not find a flaw in the implementation; they need only find a statement that lies in the blind spot of the composed system. Third, and most importantly, these blind spots are not temporary limitations due to insufficient training data or compute resources; they are certified as structurally necessary by \ref{thm:main}. The incompleteness is provable from the axioms of the framework and is therefore inescapable through ordinary engineering improvements alone.

\subsection{Extension: Neutralizing Objections}
\label{subsec:objections}

\subsubsection{Objection 1: ``The LLM Can Learn to Escape the Limit''}

The objection claims that through training, fine-tuning, or other forms of learning and adaptation, an LLM can grow powerful enough to overcome the incompleteness limit and eventually resolve all undecidable propositions. This objection, while superficially appealing, fails to account for the dynamics of system modification and inextensibility.

Training or fine-tuning an LLM modifies the weights continuously: $\theta(0) \to \theta(1) \to \theta(2) \to \cdots \to \theta(T)$. Each modification of the weight vector induces a new rewriting system $S_M^{(t)}$ with a new set of rewriting rules $R_{\theta(t)}$. At each training step, the composed system $S_{\text{comp}}^{(t)} = S_P \circ S_M^{(t)}$ is a new finite system. By \ref{thm:main}, each of these composed systems has its own undecidable proposition. By Theorem 9.1, the inextensibility result, the sequence of undecidable propositions:
\begin{equation}
G_{S_{\text{comp}}^{(0)}}, G_{S_{\text{comp}}^{(1)}}, \ldots, G_{S_{\text{comp}}^{(T)}}
\end{equation}
is infinite and exhibits the non-closing property: no finite iteration of training steps eliminates all undecidable propositions simultaneously. As the system learns and improves in certain dimensions, it acquires new blind spots in others. Learning and adaptation do not circumvent the fundamental limit; they transform it into a perpetual regress of incompleteness. The incompleteness is not stationary but rather shifts and evolves with each modification to the system.

\subsubsection{Objection 2: ``Humans Are Also Finite, So They Have the Same Limit''}

The objection points out that if the argument is valid, then human cognition—which is also finite, coherent, and sufficiently expressive—must also be subject to the same incompleteness limit. By \ref{thm:main}, any finite, coherent, and sufficiently expressive system, including human mathematical reasoning if modeled as such a system, will have undecidable propositions. Therefore, the objector concludes, the limit applies equally to humans and LLMs, and the analysis provides no unique insight into LLM limitations.

This objection is correct in its essential spirit and its conclusion contains an important truth. Humans are indeed finite, their cognitive processes are ultimately physical and therefore finite, and by the formalism developed in this paper, human mathematical reasoning (if modeled as a finite syntactic system) is subject to the same incompleteness theorems. The limit is not unique to LLMs; it is universal across all finite formal systems.

However, this universal applicability does not diminish the relevance of the analysis to LLM-based systems. The purpose of this appendix is not to claim that LLMs are uniquely or specially limited compared to human cognition. Rather, the analysis serves several complementary purposes. First, it demonstrates rigorously that the abstract incompleteness results from the main paper apply concretely to LLMs and their composed systems with verifiers—results that might otherwise seem disconnected from practical AI systems. Second, it establishes that the limit is not contingent on implementation details, specific architectures, parameter counts, or training procedures; the limit is provable directly from the formal structure and is therefore fundamental and unavoidable. Third, it clarifies that composing an LLM with a verifier, a natural strategy for improving safety and correctness, does not circumvent the limit; rather, composition restructures and relocates the incompleteness but does not eliminate it. Fourth, understanding this formal limit is essential for the responsible and safe deployment of LLM-based systems in security-critical domains such as formal verification, program synthesis, and mathematical reasoning, where the existence of blind spots has direct consequences for system reliability. The fact that humans share the same formal limit does not diminish the relevance of the theorem to LLM analysis; rather, it reinforces the generality and fundamental nature of the incompleteness phenomenon across all finite reasoning systems.

\subsubsection{Objection 3: ``The Expressiveness Condition May Not Hold for LLMs''}

A subtle but important objection contends that a standard LLM operating as a token generator without an explicit symbolic layer may not satisfy Definition 5.1. Such a system might lack the capacity to encode Gödel numberings or formulate self-referential propositions about its own rewriting rules. If this objection is correct, then \ref{thm:main} would not apply directly to $S_M$ in isolation, and the claimed incompleteness of the LLM itself would be unproven.

This objection has merit when applied to $S_M$ alone. However, it entirely misses the force of the two-level analysis and actually strengthens rather than weakens the overall argument. The crucial insight is that even if $S_M$ fails to satisfy the expressiveness condition, a composed system $S_{\text{comp}} = S_P \circ S_M$ may still achieve sufficient expressive power through the contribution of $S_P$. A parser, verifier, or proof kernel—represented by $S_P$—often adds substantial formal machinery that the LLM lacks: explicit recursion mechanisms, logical operators and inference rules, built-in decision procedures for specific domains, type systems with higher-order reasoning, or other formal infrastructure. By Theorem \ref{thm:composition_incompleteness}, if the composed system $S_{\text{comp}}$ attains sufficient expressiveness through the addition of these formal layers, then the undecidable proposition $G_{S_{\text{comp}}}$ necessarily exists. Moreover, by Lemma \ref{lem:parser_independence}, this blind spot is not eliminated by any choice of parser rules $R_P$; it is structural to the composed system itself.

The profound implication is that weakness of the LLM component does not eliminate the incompleteness limit; it merely defers and relocates the limit to the composed level. A relatively weak LLM, when integrated with a moderately expressive verifier or formal system, still induces a composite system with provable and inescapable blind spots. The incompleteness is not avoided through decomposition into weak components; rather, it is guaranteed to reemerge at the level of their composition. This is actually a more robust and pessimistic conclusion than the claim that $S_M$ itself is incomplete: it says that no matter how the LLM is composed with formal verification machinery, the limit persists.

\subsection{Summary and Conclusions}
\label{subsec:summary_llm}

The analysis of Large Language Models through the lens of formal syntactic systems yields several major conclusions, each of which carries implications for the design, deployment, and understanding of LLM-based systems.

First, an LLM with fixed weights can be rigorously formalized as a finite syntactic system $S_M$ by interpreting the token vocabulary as a set of function symbols and by interpreting the greedy decoding procedure as a ground-level term rewriting process operating under the learned rules encoded in the weight matrices (Definition \ref{def:llm_syntactic}). This formalization is not merely a metaphorical analogy; it is a precise mathematical correspondence that allows all theorems about syntactic systems to be brought to bear on LLM behavior.

Second, \ref{thm:main} applies to $S_M$ with a specific and important caveat. If and when $S_M$ is coherent and sufficiently expressive in the sense of Definitions 3.2 and 5.1, then it possesses an undecidable self-referential proposition $G_{S_M}$ asserting the existence of certain token sequences that the system cannot generate through its own rewriting rules (Proposition \ref{prop:llm_theorem_conditional}). This is a conditional result: it applies only if the expressiveness threshold is met. Whether a given LLM crosses this threshold in practice remains an open empirical question, but the theoretical structure of the limit is now clear.

Third, a two-level analysis that composes an LLM with a parser or verifier yields additional insight. The composed system $S_{\text{comp}} = S_P \circ S_M$ is itself a finite syntactic system. If this composed system achieves sufficient expressiveness—which is frequently the case when a powerful verifier is added to any LLM—then by Theorem \ref{thm:composition_incompleteness}, the composed system possesses an undecidable proposition $G_{S_{\text{comp}}}$ that represents a fundamental blind spot of the integration. This holds even if $S_M$ in isolation is too weak to trigger \ref{thm:main}.

Fourth, the parser independence property, formalized in Lemma \ref{lem:parser_independence}, establishes that no choice of parser rules $R_P$ can eliminate the blind spot from the composed system. By Theorem 21.2, the Finite Coverage Principle, each finite composed system necessarily has its own undecidable propositions. Engineers cannot design their way out of this limit through better parsers, more expressive verifiers, or improved integration strategies; each improvement in one direction creates new incompleteness in another.

Fifth, objections based on the continuous nature of neural network weights are neutralized by a careful examination of the discretization process. The continuous softmax distribution is converted into a discrete token choice through mechanisms such as argmax or deterministic sampling, and from that point forward, the system operates on discrete symbols under discrete rewriting rules governed by the parser. The composed system $S_{\text{comp}}$ is entirely discrete and finite in structure, and is therefore subject in full force to classical incompleteness theorems.

Sixth, and most practically important, the existence of these blind spots is not a design flaw or a temporary limitation due to insufficient training. The blind spots are a certified structural necessity, provable from the axioms of the formal framework. In security-critical applications—such as formal verification, proof generation for certification, or certified program synthesis—the existence of these blind spots has direct consequences for system reliability, coverage, and robustness. Understanding and actively managing these limitations is therefore essential for the responsible deployment of LLM-based systems in domains where correctness and completeness are paramount.

\begin{remark}[Direct Application to LLMs]
    \label{rem:llm_direct}
    
    The claim on LLMs in Appendix A.2 follows directly from \ref{thm:main} 
    under minimal and standard assumptions. 
    
    Consider any large language model $M$ that:
    \begin{enumerate}
      \item produces text output (finite strings over a finite vocabulary $V$);
      \item operates via finite-state transitions (finite set of parameters, 
         finite precision arithmetic, finite context window);
      \item therefore implements a finite syntactic system $S_M = (V, F, R, I)$ 
    \end{enumerate}
    
    Then by \ref{thm:main}, there exists at least one proposition $\varphi_M$ 
    such that $S_M$ cannot autonomously derive $\varphi_M$. This is not speculation 
    or a limiting case---it is a direct corollary of the metatheorem.
    
    The cautious language in Appendix A regarding ``sufficient expressivity'' 
    concerns the \emph{threshold} at which an LLM crosses into the regime where 
    \ref{thm:main} applies, not whether the theorem applies at all. For any 
    finite system that crosses that threshold, the conclusion holds. Thus 
    the result on LLMs is as rigorous as the metatheorem itself.
    
\end{remark}

\section{Philosophical Implications: Scientific Progress as Observational Level Transitions}
\label{app:philosophy}

\subsection{Beyond Falsifiability: Structural Incompleteness in Scientific Theories}

Popper's criterion of falsifiability has long served as the philosophical standard for demarcating science from non-science: a theory is considered scientific if it makes testable predictions that can, in principle, be refuted through empirical observation or experiment. This criterion has been enormously influential in shaping how scientists and philosophers understand the nature of scientific knowledge. However, the main paper's metatheorem reveals something far deeper and more fundamentally challenging than falsifiability alone can capture.

Every finite scientific theory—every coherent system of rules, observations, and deductions that a community of scientists can employ—contains at least one true proposition about its own structural limits that it cannot derive from within itself. This is not merely a claim about epistemic limitations or the provisional nature of current knowledge. Rather, it is a mathematical property of the formal structure of any finite theory.

The distinction here is crucial and worth emphasizing carefully. When we say that there are things we do not yet know, or that future observations might overturn current theories, we are stating something that is true but ultimately banal. Every theory is incomplete in the trivial sense that it does not answer every possible question. The deeper statement is more precise and more radical: there exists at least one specific proposition $G_S$ about the structure of the limits of a theory $S$ such that $G_S$ is true in the sense that it accurately describes actual structural boundaries and constraints of $S$, yet $G_S$ cannot be formulated or derived from within the formal system $S$ itself. An external observer—a physicist, mathematician, or meta-theorist operating in a larger system $S' \supset S$—can see, articulate, and derive $G_S$. But the original system $S$ as a formal framework cannot. This is the core content of \ref{thm:main}: structural incompleteness is not a limitation of our knowledge or a gap in our current understanding; it is a mathematical property of finite systems.

The consequence for the philosophy of science is transformative: scientific progress is not mere accumulation of facts, not just the gradual refinement of measurements and predictions, and not simply the addition of new data to an existing framework. Scientific progress is necessarily a change of observational level. It is a transition to a fundamentally different conceptual framework, one equipped with new primitives, new observables, and new formal machinery that makes visible what was previously invisible.

\subsection{Scientific Revolutions as Observational Level Transitions}
\label{subsec:revolutions_as_transitions}

The history of science is conventionally narrated as a succession of theories, each more accurate, more comprehensive, or more powerful than the last. Newton superseded Kepler. Einstein superseded Newton. Quantum mechanics superseded classical mechanics. This narrative is not wrong, but it obscures a deeper and more illuminating pattern that becomes visible only when we apply the formal framework of syntactic systems and observational levels.

Consider the major transitions in the history of physics. Each exhibits a remarkable structural similarity that cannot be explained by mere quantitative improvement or increased precision.

Kepler's system $S_K$ encoded the laws of planetary motion with extraordinary accuracy for its time: elliptical orbits, the law of equal areas swept in equal times, and the harmonic relationship between orbital periods and distances from the sun. Within its domain, Kepler's system was highly successful and made accurate predictions. Yet Kepler's system had a fundamental blind spot: it could not answer the question of why orbits are elliptical, or why these particular harmonic ratios obtain. These questions could not be addressed from within $S_K$ because their resolution required invoking a concept—gravity as a universal force acting at a distance—that lay entirely outside Kepler's formal framework and his observational primitives.

Newton's innovation was not that he computed Kepler's laws with greater precision or refinement. Rather, he saw a structure that Kepler's observational system could not reach. Newton introduced a new pattern set $\mathcal{P}_{Newton}$ composed of primitives fundamentally different from Kepler's: force fields, acceleration, universal constants, and differential equations. With these new primitives, Newton formalized a much larger domain of possible descriptions $\mathcal{D}_{Newton}$. Critically, Newton's system could answer questions that Kepler's system could not even articulate. The relationship $\mathcal{P}_{Kepler} \prec \mathcal{P}_{Newton}$ captures this formally: Kepler's observational system is strictly subsumed by Newton's, in the sense that everything Kepler could derive, Newton can derive, but Newton can derive much more.

Similarly, Newton's system $S_N$, despite its extraordinary success over centuries in predicting planetary orbits, satellite motion, tidal forces, and the behavior of falling bodies, had its own blind spots. Einstein recognized a hidden structure that was completely invisible to the Newtonian observational system: space and time are not absolute, fixed, and independent; they are relative, coupled to each other, and affected by the presence of mass and energy. This insight could not emerge from Newton's framework because Newton treated space and time as a neutral, unchanging background against which physical processes unfold. Einstein introduced a new pattern set $\mathcal{P}_{Einstein}$ with radically different primitives: spacetime manifolds, curvature tensors, the speed of light as a universal invariant, and the geometry of spacetime itself as the fundamental entity. With these tools, Einstein could see and formalize relationships that were completely beyond the reach of Newtonian language. What was the undecidable proposition $G_{S_N}$ in Newton's framework—the deep connection between gravity and the geometry of spacetime—became derivable in Einstein's system.

The transition from classical mechanics to quantum mechanics follows the same pattern. Classical mechanics $S_C$ assumed that physical systems possess well-defined positions and momenta at all times, and that these properties evolve continuously and deterministically according to equations of motion. Within this framework, a classical system was understood as a point in phase space evolving along a trajectory. Yet classical mechanics contained systematic blind spots: it could not explain phenomena like atomic spectra, the photoelectric effect, or interference in quantum systems. These were not mere anomalies that could be accommodated within the classical framework with sufficient ingenuity; they revealed that classical observables simply do not describe quantum states. Quantum mechanics introduced a fundamentally new pattern set $\mathcal{P}_{Quantum}$ with primitives like Hilbert spaces, operators, wave functions, and superposition. With these new tools, phenomena that classical theory could not even formulate—interference, entanglement, and non-commuting observables—became central and understandable.

\subsection{The Formal Structure of Scientific Progress}
\label{subsec:formal_structure}

The pattern evident in all these historical cases is identical at the formal level. Each scientific revolution involves a transition from one observational level to another, where the new level possesses a strictly larger set of primitives and thus a strictly larger domain of derivable propositions. We now formalize this pattern precisely.

An observational level $O_i$ is a triple consisting of three components: a finite syntactic system $S_i$ that satisfies Definition 3.1 and represents the formal structure of a scientific theory; a set $\mathcal{P}_i$ of primitives, which are the basic concepts, observables, and fundamental entities that $S_i$ assumes as given and irreducible. For Kepler's system, the primitives include position, velocity, the concept of an ellipse, and orbital period. For Newton's system, the primitives include mass, force, acceleration, and the concept of a field. For quantum mechanics, the primitives include Hilbert spaces, operators, and superposition states. The third component is $\mathcal{D}_i = \mathcal{D}(S_i)$, the domain of descriptions that $S_i$ can formulate and derive: all propositions that can be expressed in the formal language of $S_i$, using its alphabet and syntax, and derived from its axioms through its rules of inference.

An observational level $O_i$ is subsumed by $O_j$ (written $O_i \prec O_j$) if three conditions hold simultaneously. First, the primitives of $O_i$ are available within $O_j$, either as base primitives or as derived concepts: $\mathcal{P}_i \subseteq \mathcal{P}_j$. Second, every description that is derivable in $S_i$ is also derivable in $S_j$: $\mathcal{D}_i \subseteq \mathcal{D}_j$. Third, and most importantly, there exists a true proposition $G_{S_i}$ (the undecidable proposition guaranteed by \ref{thm:main} applied to $S_i$) such that this proposition is expressible and derivable in $S_j$ but not in $S_i$. Formally, $G_{S_i} \in \mathcal{D}_j \setminus \mathcal{D}_i$. Strict subsumption $O_i \prec O_j$ indicates a proper inclusion: $\mathcal{D}_i \subsetneq \mathcal{D}_j$ (the domain of $O_j$ strictly exceeds that of $O_i$).

The major scientific revolutions in history can be formally interpreted as strict observational level transitions. Kepler's observational level is strictly subsumed by Newton's: $O_{Kepler} \prec O_{Newton}$. Newton's level is strictly subsumed by Einstein's: $O_{Newton} \prec O_{Einstein}$. Classical mechanics is strictly subsumed by quantum mechanics: $O_{Classical} \prec O_{Quantum}$. Each such revolution exhibits an identical formal structure. The earlier system $S_i$, with its pattern set $\mathcal{P}_i$, is finite and coherent, and by \ref{thm:main}, it possesses an undecidable proposition $G_{S_i}$ about its own structural limits. The earlier system cannot formulate or resolve $G_{S_i}$ from within its own framework; it is a blind spot of $S_i$. The later system $S_{i+1}$, with a pattern set $\mathcal{P}_{i+1} \neq \mathcal{P}_i$, has access to new primitives and thus achieves a strictly larger derivable domain: $\mathcal{D}_{i+1} \supsetneq \mathcal{D}_i$. Crucially, the new system can see and derive $G_{S_i}$; that is, $G_{S_i} \in \mathcal{D}_{i+1} \setminus \mathcal{D}_i$. Yet the new system $S_{i+1}$ is itself finite and coherent, and thus has its own undecidable proposition $G_{S_{i+1}}$ such that $G_{S_{i+1}} \in \mathcal{D}_{i+2} \setminus \mathcal{D}_{i+1}$ (where $S_{i+2}$ represents a putative future scientific system). Thus, scientific progress is not convergence toward a complete and final description of nature. Rather, progress is a sequence of level transitions, each revealing the blind spots of the previous level and creating new blind spots of its own.

\subsection{The Method of Science as a Finite System}
\label{subsec:method_as_system}

A still deeper implication emerges when we consider the scientific method itself—the set of rules, procedures, and principles by which scientists conduct hypothesis testing, design experiments, conduct peer review, and revise theories. The scientific method can be formalized as a finite syntactic system $S_{method}$.

The primitives $\mathcal{P}_{method}$ include the fundamental concepts upon which the method rests: observation (the act of gathering empirical data), hypothesis (a proposed explanation that is testable), prediction (a logical consequence of the hypothesis that can be compared to data), falsification (the rejection of a hypothesis when predictions fail to match observations), peer review (the process by which the scientific community scrutinizes claimed results), and reproducibility (the requirement that findings be replicable by independent investigators). The rules $R_{method}$ encode the procedures and protocols for hypothesis testing, error correction, and the acceptance or rejection of theories. The derivable domain $\mathcal{D}_{method}$ consists of all conclusions, principles, and meta-principles that can be justified and reached by following the method correctly.

By \ref{thm:main}, since $S_{method}$ is a finite system (it must be, given that it is codifiable into a fixed set of procedures), there exists a proposition $G_{S_{method}}$ about the structure and limits of scientific knowledge that the method itself cannot validate, prove, or refute from within its own operation. What might such a proposition be? Several candidates present themselves. First, consider the completeness of the method itself: Can the scientific method certify that it has discovered all fundamental principles of nature, or that it has identified all possible domains of inquiry? This question is precisely what $G_{S_{method}}$ asserts—something about the boundaries and completeness of the method—yet the method cannot answer it through its own procedures. Second, consider the global consistency of science: Is the body of scientific knowledge internally consistent, free of contradiction? This is analogous to Gödel's second incompleteness theorem, which shows that no consistent formal system can prove its own consistency. The scientific method cannot settle this question through empirical observation or logical derivation within the method. Third, consider the status of unobservables and theoretical entities: Can science ever determine whether entities like dark matter, quantum fields, or the multiverse truly exist in some fundamental sense, or are they merely useful fictions employed for prediction and description? This question transcends the scope of the scientific method because it asks about the nature of reality independent of what the method can observe and measure.

\subsection{The Hierarchy of Observational Levels}
\label{subsec:hierarchy}

If every finite system has a blind spot, and if every scientific revolution involves ascending to a system with a richer pattern set and larger derivable domain, then scientific progress unfolds as an infinite chain of observational levels, each revealing the limitations of its predecessor.

There exists an infinite, strictly increasing sequence of observational levels:
\begin{equation}
O_\perp \prec O_1 \prec O_2 \prec O_3 \prec \cdots \prec O_\top
\end{equation}

Here, $O_\perp$ represents the null observational level, a state of no structure, no primitives, and no descriptions—the absolute ground state. Each $O_i$ is a finite observational level characterized by a finite system $S_i$, a finite pattern set $\mathcal{P}_i$, and a strictly increasing derivable domain $\mathcal{D}_1 \subsetneq \mathcal{D}_2 \subsetneq \mathcal{D}_3 \subsetneq \cdots$. $O_\top$ represents an ideal limit level that would constitute a complete description of all physical and mathematical structures—omniscient in scope and content. Importantly, $O_\top$ is unreachable by any finite system; it exists as a limit but not as an attainable state. For each finite level $O_i$, the system $S_i$ can see and derive the undecidable proposition $G_{S_{i-1}}$ from the previous level $O_{i-1}$, thus answering questions that were blind spots for its predecessor. However, $S_i$ cannot derive $G_{S_i}$, its own blind spot. The transition from $O_i$ to $O_{i+1}$ is not a gradual refinement or an incremental expansion within the same framework; it requires a change of pattern set, involving the introduction of new primitives, new observables, and new fundamental concepts that were not present in $\mathcal{P}_i$.

Scientific knowledge does not converge toward a complete, final, and unchanging description of reality. Instead, a more nuanced picture emerges. Each system $S_i$ is complete with respect to its predecessors: it can derive everything that $S_{i-1}$ could derive, and significantly more. Yet each system $S_i$ is incomplete with respect to its successors: there exist true propositions $G_{S_i}$ that $S_i$ cannot derive but that $S_{i+1}$ can. By Theorem 9.1, the inextensibility theorem from the main paper, the sequence of systems is infinite and exhibits the non-closing property: it never terminates in a final complete system. Thus, science does not converge toward a single ultimate theory; rather, it unfolds as an infinite sequence of progressively richer descriptions. Each system enlarges the domain of derivable knowledge:
\begin{equation}
\mathcal{D}_1 \subsetneq \mathcal{D}_2 \subsetneq \mathcal{D}_3 \subsetneq \cdots \subsetneq \mathcal{D}_\infty
\end{equation}

However, the infinite limit $\mathcal{D}_\infty$ is never fully attained by any finite system; it remains forever as a horizon toward which science progresses but which it can never reach.

\subsection{Is This Epistemological Pessimism?}
\label{subsec:not_pessimism}

One might object that the analysis presented above amounts to epistemological pessimism: a counsel of despair suggesting that science will never reach complete understanding, that humanity is forever trapped in systems with blind spots, and that the quest for comprehensive knowledge is doomed to perpetual incompleteness. This objection, while understandable, reflects a misunderstanding of what is being claimed.

The answer to this objection is: this is not pessimism. It is structural realism. The analysis does not deny the reality of progress or the possibility of truth; rather, it identifies the precise mathematical structure that governs how progress occurs and how knowledge advances.

The structure we have identified is not vague or qualitative. We are not making the banal claim that knowledge is impossible, or that truth is merely relative, or that all theories are equally valid. We are saying something specific and mathematically rigorous: the structure of knowledge follows a precise pattern, which is the infinite chain $O_\perp \prec O_1 \prec O_2 \prec \cdots \prec O_\top$. This structure is knowable, not mysterious. It is formal, not merely intuitive.

Progress through this hierarchy is genuinely real. Each transition from $O_i$ to $O_{i+1}$ represents an objective advance. The new system can derive and formalize propositions that the old system could not even express; it can answer questions that were blind spots for its predecessor. This is not illusory progress or mere change; it is genuine expansion of the domain of derivable knowledge. Newton's theory was a real advance over Kepler's; it could explain what Kepler could not. Einstein's theory was a real advance over Newton's; it made visible what Newton's framework rendered invisible. These are not merely different frameworks on equal footing; they are successive refinements that expand the scope and power of human understanding.

Moreover, the pattern itself is knowable. By stepping outside a finite system $S_i$ and examining its structure formally, we can understand why it has blind spots and what types of conceptual extensions might overcome them. We can see that Kepler's system lacked the primitive concept of force, and that introducing this primitive would expand its reach. We can recognize that Newton's system treated space and time as absolute background, and that relaxing this assumption would unlock new phenomena. This metacognitive understanding—the ability to recognize and analyze the structure of our own limitations—is itself a form of knowledge and progress.

Furthermore, each observational level is entirely adequate and reliable for its domain of application. Newton's system, despite its incompleteness with respect to Einstein's theory, is perfectly adequate and extraordinarily accurate for describing planetary motions, tidal flows, the behavior of falling bodies, and nearly all macroscopic phenomena encountered in everyday experience. Quantum mechanics, despite its incompleteness with respect to hypothetical future theories, is extraordinarily powerful and empirically successful for describing atoms, molecules, subatomic particles, and all phenomena at microscopic scales. The existence of blind spots at a higher level does not invalidate the utility, accuracy, or power of the system at its own level. Each system within the hierarchy is a genuine achievement of human understanding, even if it is not the final word.

Finally, and most importantly, even though we never reach the ultimate level $O_\top$, each system $S_i$ in the sequence gets strictly and asymptotically closer to a complete description of physical reality. The domains of derivable propositions form a nested sequence: $\mathcal{D}_1 \subset \mathcal{D}_2 \subset \mathcal{D}_3 \subset \cdots$, each strictly contained in its successor. This sequence asymptotically approaches the set of all true propositions about physical reality. This constitutes a well-defined, mathematically meaningful form of convergence to truth. We are not stuck in place; we are moving along a definable path toward an infinite horizon.

\subsection{Implications for the Future of Science}
\label{subsec:future_science}

If the structural analysis presented above is correct, what does it reveal about the trajectory and character of future scientific progress?

The quest for a ``Theory of Everything''—a single unified framework that would explain all phenomena from first principles and thereby represent the apex of scientific understanding—is, according to this analysis, mathematically impossible. This is not a pessimistic claim about human ingenuity or the availability of resources; it is a consequence of the formal structure of finite systems. Every system, no matter how comprehensive or elegant, will have blind spots that only a broader future system can see and resolve. Any candidate for a ``final'' theory will, by \ref{thm:main}, harbor undecidable propositions about its own limits. These propositions will become visible only to a successor theory that operates with a richer pattern set. The pursuit of completeness is therefore not a goal that can be achieved; it is an asymptotic horizon that guides scientific progress indefinitely.

While we cannot predict the specific content of future scientific revolutions—we cannot say in advance what new primitives will emerge or what new phenomena will become visible—we can predict the structure of these revolutions with certainty. Every major scientific advance will involve the introduction of new primitives, new pattern sets, and new ways of observing and conceptualizing nature. Historical patterns suggest that such revolutions occur when an established system reaches the limits of its explanatory power, when too many anomalies accumulate that the system cannot accommodate within its framework. The system begins to show stress; predictions fail in new domains; puzzles emerge that the system cannot solve. At this point, a conceptual breakthrough becomes possible—the introduction of new primitives that suddenly render the anomalies intelligible and the system's blind spots visible.

The role of external observers and intellectual outsiders becomes structurally explicable in this framework. Breakthroughs and revolutionary insights often come from thinkers who stand outside or on the periphery of the dominant paradigm. This is not accidental or merely a matter of sociological contingency. To perceive the blind spots of a system $S_i$, one must operate from the perspective of a larger system $S_j$ with a richer pattern set: $\mathcal{P}_j \supsetneq \mathcal{P}_i$. The greatest scientists in history are those who managed to make this shift in perspective, to introduce new conceptual primitives, to ask questions that the established framework could not even formulate. Copernicus questioned the fixed position of the Earth; Galileo questioned the prohibition on applying mathematics to physical motion; Newton questioned the separation of terrestrial and celestial mechanics; Einstein questioned the absoluteness of space and time; Planck and Bohr questioned the determinism of classical dynamics.

If progress structurally requires changes of observational level, then interdisciplinarity becomes not merely a nice-to-have or a matter of intellectual enrichment, but a necessity for genuine scientific advance. Progress requires bringing in concepts and methods from outside the narrow disciplinary boundaries. Physics gained enormously from differential geometry when Einstein sought to formalize the equivalence of gravity and acceleration. Quantum mechanics emerged partly through the introduction of abstract algebra and Hilbert spaces, borrowed from pure mathematics. Modern physics draws on information theory, category theory, and other fields far removed from traditional experimental practice. This cross-fertilization is not merely pedagogical or decorative; it is structurally essential because it provides access to new pattern sets, to new conceptual primitives that can break the blind spots of the existing framework.

The limits of pure empiricism become evident within this perspective. The naive empiricist view holds that scientific laws can be discovered simply by observing nature carefully, without reliance on theoretical assumptions or conceptual frameworks. This view fails because observation itself necessarily presupposes a pattern set $\mathcal{P}$. One cannot observe without concepts; one cannot gather data without a framework that specifies what counts as data, what phenomena are relevant, and what patterns to seek. The choice of pattern set is partly a matter of theoretical assumption, not dictated by pure observation. This is why paradigm shifts are so intellectually difficult and why they require more than the accumulation of new experimental data. A paradigm shift requires adopting new concepts, new observables, and a new way of seeing the world. The data alone, no matter how abundant, cannot compel this shift; conceptual creativity is required.

\subsection{Conclusion: Science as a Finite Process in an Infinite Hierarchy}
\label{subsec:conclusion_philosophy}

The metatheorem from the main paper, when applied to the philosophy of science, yields an insight of profound significance: science is not a process of convergence toward a fixed and final truth about the structure of reality. Rather, it is a process of navigation and ascent through an infinite hierarchy of observational levels, each new level revealing the blind spots of the previous one and enabling the comprehension of phenomena and structures that were previously invisible.

This characterization is not a limitation of science; it is precisely what makes science possible as an ongoing and progressive enterprise. The essence of scientific progress lies in the ability of the scientific community to step back from its current framework, to question the conceptual apparatus it has taken for granted, to introduce new primitives and new observables, and thereby to move upward along the chain of observational levels. Science is fundamentally a practice of perspective-shifting and conceptual innovation.

Every towering figure in the history of science—Copernicus with his heliocentric model, Galileo with his commitment to mathematical description of motion, Newton with his universal mechanics, Einstein with his relational geometry of spacetime, Bohr and Heisenberg with their quantum framework—did not merely accumulate data or refine predictions within an existing scheme. Each of these thinkers shifted perspective in a fundamental way. Each introduced new pattern sets, new primitives, new ways of asking questions about nature. Through their work, the scientific community ascended the infinite chain:
\begin{equation}
O_{\text{Aristotle}} \prec O_{\text{Galileo}} \prec O_{\text{Newton}} \prec O_{\text{Einstein}} \prec O_{\text{Quantum}} \prec \cdots
\end{equation}

And the chain will continue. Future generations will introduce new primitives, shift to new observational levels, and see phenomena and structures that remain invisible to us. The ultimate level $O_\top$—a complete and final description of reality—remains forever unreachable by any finite system. Yet this unreachability is not a cause for despair; each step forward along the infinite chain is a genuine and lasting achievement of human understanding.

In this light, science is not pessimistic about human knowledge or human capacities. Rather, it is realistic about the structure of understanding and the nature of truth. Finite minds, operating within finite formal systems, can nevertheless progress toward truth indefinitely, asymptotically approaching a complete understanding even though they never attain it completely. This is the authentic dignity of the scientific enterprise: not the false hope of ultimate completion, but the genuine and continuous achievement of deeper understanding, the perpetual expansion of the domain of human knowledge, and the infinite possibility of intellectual discovery.

\section{Acknowledgments}
The author used an artificial intelligence based language assistant to
support text revision, translation, and bibliography formatting. All
scientific ideas and conclusions are the author's own.

\end{document}